\newtheorem{theorem}{Theorem}[section]
\newtheorem{lemma}[theorem]{Lemma}
\newtheorem{corollary}[theorem]{Corollary}
\newtheorem{definition}[theorem]{Definition} 
\newtheorem{remark}{Remark}[section]
\newtheorem{proposition}[theorem]{Proposition}
\newtheorem*{gLLL}{General LLL}
\newtheorem*{cd}{Causality Digraph}
\newtheorem*{causality}{Causality}
\newcommand\ex{{\mathbb{E}}}
\newcommand{\param}{\psi} 
\newcommand{\beq}{\begin{equation}}
\newcommand{\eeq}{\end{equation}}
\newcounter{fooTH}
\newcounter{fooEQ}
\begin{document}

\title
{Commutative Algorithms Approximate the LLL-distribution}

\author{Fotis Iliopoulos
\thanks{Research supported by NSF grant CCF-1514434 and the Onassis Foundation.} \\ 
University of California Berkeley \\
{\small \texttt{fotis.iliopoulos@berkeley.edu}}
}

\maketitle

\begin{abstract}
Following the groundbreaking Moser-Tardos algorithm for the Lov\'{a}sz Local Lemma (LLL), a series of works have exploited a key ingredient of the original analysis, the witness tree lemma, in order to: derive deterministic, parallel and distributed algorithms for the LLL, to estimate the entropy of the output distribution, to partially avoid bad events, to deal with super-polynomially many bad events, and even to devise new algorithmic frameworks. Meanwhile, a parallel line of work has established tools for analyzing stochastic local search algorithms motivated by the LLL that do not fall within the Moser-Tardos framework. Unfortunately, the aforementioned results do not transfer to these more general settings. Mainly, this is because the witness tree lemma, provably, does not longer hold. Here we prove that for commutative algorithms, a class recently introduced by Kolmogorov and which captures the vast majority of LLL applications, the witness tree lemma does hold. Armed with this fact, we extend the main result of Haeupler, Saha, and Srinivasan to commutative algorithms, establishing that the output of such algorithms well-approximates the \emph{LLL-distribution}, i.e., the distribution obtained by conditioning on all bad events being avoided, and give several new applications. For example, we show that the recent algorithm of Molloy for list-coloring number of sparse, triangle-free graphs can output exponential many list-colorings of the input graph.
\end{abstract}

\thispagestyle{empty}

\newpage

\setcounter{page}{1}

\section{Introduction}

Many problems in combinatorics and computer science can be phrased as finding an object that lacks certain bad properties, or ``flaws". In this paper we study algorithms that take as input a flawed object and try to remove all flaws by transforming the object through repeated probabilistic action.
 
Concretely, let $\Omega$ be a  set of objects and let $F = \{f_1, f_2, \ldots, f_m \}$ be a collection of subsets of $\Omega$. We will refer to each $f_i \in F$ as a \emph{flaw} to express that its elements share some negative feature. For example, if a CNF formula $F$ on $n$ variables has clauses $c_1,c_2,\ldots,c_m$, we can define for each clause $c_i$ the flaw (subcube) $f_i \subseteq \{0,1\}^n$ whose elements violate $c_i$. Following linguistic rather than mathematical convention we say that $f$ is present in $\sigma$ if $f \ni \sigma$ and that $\sigma \in \Omega$ is \emph{flawless} (perfect) if no flaw is present in $\sigma$. 

To prove the  \emph{existence} of flawless objects we can often use the Probabilistic Method. As a matter of fact, in many interesting cases, this is the only way we know how to do so. To employ the Probabilistic Method, we introduce a probability measure $\mu$ over $\Omega$ and consider the collection of ``bad" events corresponding to flaws. If we are able to show that the probability to avoid all bad events is strictly positive, then this implies the existence of a flawless object.  A trivial example is the case where all the bad events are independent of one another and none of them has probability one. One of the most powerful tools of the Probabilistic Method is the Lov\'{a}sz Local Lemma~\cite{LLL} which weakens the latter restrictive condition of independence to a condition of limited dependence.

Making the LLL constructive was the study of intensive research for over two decades~\cite{beck_lll,alon_lll,mike_stoc,Czumaj_lll,aravind_08}. The breakthrough was made by Moser~\cite{Moser} who gave a very simple algorithm that finds a satisfying assignment of a $k$-CNF formula, under conditions that nearly match the LLL condition for satisfiability. Very shortly afterwards, Moser and Tardos~\cite{MT} made the general LLL constructive for any \emph{product} probability measure over explicitly presented variables. Specifically, they proved that whenever the general LLL conditon holds, the {\sl Resample} algorithm, which repeatedly selects \emph{any} occurring bad event and resamples all its variables according to the measure, i.e., independently, quickly converges to a flawless object.

The first result that made the LLL constructive in a non-product probability space was due to Harris and Srinivasan in~\cite{PermHarris}, who considered the space of permutations endowed  with the uniform measure. Subsequent works by Achlioptas and Iliopoulos~\cite{AIJACM,Harmonic} introducing the flaws/actions framework, and of Harvey and Vondr\'{a}k~\cite{HV} introducing the resampling oracles framework, made the LLL constructive in more general settings. These frameworks~\cite{AIJACM,Harmonic,HV} provide tools for analyzing \emph{focused} stochastic search algorithms~\cite{papafocus}, i.e., algorithms which, like {\sl Resample}, search by repeatedly selecting a flaw of the current state and moving to a random nearby state that avoids it, in the hope that, more often than not, more flaws are removed than introduced, so that a flawless object is eventually reached. At this point, all LLL applications we are aware of have efficient algorithms analyzable in these frameworks.

Besides conditions for existence and fast convergence to  perfect objects, one could ask further questions regarding properties of focused search algorithms. For instance, ``are they parallelizable ?", ``how many solutions can they output?", ``what is the expected  ``weight" of a solution?", etc. These questions and more have been answered for the Moser-Tardos algorithm in a long series of work~\cite{MT,Haeupler_jacm,EnuHarris,szege_meet,determ,distributed, ParallelHarrisHaeupler,ParallelHarris}. 
As a prominent example,  the   result of Haeupler,  Saha and Srinivasan~\cite{Haeupler_jacm}, as well as follow-up   works of Harris and Srinivasan~\cite{EnuHarris,NewBoundsHarris}, allow one to argue about the dynamics of the MT process, resulting in  several new  applications  such as estimating the entropy of the output distribution, partially avoiding  bad events, dealing with super-polynomially many bad events, and even new frameworks~\cite{PartResmp1,PartResmp2}.

Unfortunately, most of these follow-up results that further enhance, or exploit, our understanding  of the MT process are not transferable to the general settings of~\cite{AIJACM,HV,Harmonic}.  
Mainly, this is because a key and elegant  technical result of the original analysis of Moser and Tardos, \emph{the witness tree lemma}, does not longer hold under the most general assumptions~\cite{HV}.
Roughly, it states that any tree of bad events growing backwards in time from a certain root bad event $A_i$, with the children of each node $A_j$ being bad events that are  adjacent to $A_j$ in the dependency graph, has  probability of being  consistent with the trajectory of the algorithm that is bounded by the product of the probabilities of all events in this tree.
 The witness tree lemma and its variations~\cite{szege_meet,ParallelHarrisHaeupler} has been used for several other purposes besides those already mentioned, such as designing deterministic, parallel and distributed algorithms for the LLL~\cite{MT,determ,distributed,ParallelHarrisHaeupler,ParallelHarris}. 
 
On the other hand,    Harris and Srinivasan~\cite{PermHarris} do manage to prove the witness tree lemma for their algorithm for the LLL on the space of permutations, via an analysis that is tailored specifically to this setting. Although their proof  does not seem to be easily generalizable to general spaces, their success makes it   natural to ask if we can impose mild assumptions in the general settings of~\cite{AIJACM,HV,Harmonic}  under which the  witness tree lemma (and most of its byproducts) can be established.

The main contribution of this paper is to answer this question positively by showing that it is possible to prove the witness tree lemma in  the \emph{commutative setting}.  The latter was recently introduced by Kolmogorov~\cite{Kolmofocs},  who showed that under its assumptions  one can obtain  parallel algorithms, as well as  the flexibility of  having arbitrary flaw choice strategy  in the frameworks of~\cite{AIJACM,HV,Harmonic}. We note that the commutative setting captures the vast majority of LLL applications, including but not limited to  both the variable  and the permutation settings.

Subsequently to the present work, Achlioptas, Iliopoulos and Sinclair~\cite{AIS} gave a simpler proof of the witness tree lemma  under a more general notion of commutativity (essentially matrix commutativity) at the mild cost of slightly restricting the family of flaw choice strategies (as we will see, in this paper the flaw choice strategy can be arbitrary).

Armed with the witness tree lemma, we are able to study  properties of algorithms in the commutative setting  and give several  applications.

\paragraph{Distributional Properties.}

As already mentioned, one of the most important applications of the witness tree lemma is given in the  paper of Haeupler, Saha and Srinivasan~\cite{Haeupler_jacm}, who study  properties of the \emph{MT-distribution}, the output distribution of the MT algorithm. Their main result is that the MT-distribution well-approximates the \emph{LLL-distribution}, i.e., the distribution obtained by conditioning on all bad events being avoided. As an example, an  immediate consequence of this fact  is that one can argue about the expected \emph{weight} of the output of the MT algorithm, given a weighting function over the space $\Omega$.  Furthermore, as shown in the same paper~\cite{Haeupler_jacm} and follow-up papers by Harris and Srinivasan~\cite{EnuHarris,NewBoundsHarris}, one can lower bound the entropy of the MT distribution, go beyond the LLL conditions (if one is willing to only partially avoid bad events), and deal with applications with super-polynomially many  bad events.

Here we extend the result of~\cite{Haeupler_jacm} to the commutative setting: Given a  commutative algorithm that is perfectly compatible with the underlying probability measure, its output well-approximates the LLL-distribution in the same sense  the MT-distribution does in the variable setting. For arbitrary commutative algorithms, the quality of the approximation additionally depends on the compatibility of the algorithm with the measure on the  event(s) of interest. A simplified, and imprecise, version of our main theorem, which assumes that the initial state of the algorithm is sampled according to the underlying probability distribution $\mu$, is as follows.  The formal statement of our main theorem can be found in Section~\ref{Results}. 
 \begin{theorem}[Informal and Imprecise Statement]\label{main_informal}
If algorithm $\mathcal{A}$ is commutative  and the algorithmic LLL conditions hold then,  for each $E \subseteq \Omega$,
\begin{align*}
 \Pr \left[  E    \right]  \le     \gamma(E)     \left( 1 + \frac{ 1}{ d} \right)^{D_E}  \enspace, 
\end{align*}
where $E$ is independent of all but at most $D_E$ flaws,   $d$ is the maximum degree of the dependency graph, $\gamma(E) \ge \mu(E) $ is a measure of the ``compatibility" between $\mathcal{A}$ and the underlying probability distribution $\mu$ at $E$, and $\Pr[E]$ is the probability that $\mathcal{A}$ ever reaches $E$ during its execution.
 \end{theorem}
Moreover, we quantitatively improve the bounds of~\cite{Haeupler_jacm} under the weaker assumptions of \emph{Shearer's condition}~\cite{Shearer}, i.e.,   the most general LLL condition under the assumption that the dependency graph is undirected. This allows us to study distributional properties of commutative algorithms using  criteria that lie between the General LLL and Shearer's condition such as the Clique LLL~\cite{CliqueLLL}.  In Section~\ref{Byproducts} we discuss the byproducts of our main theorem that we will use in our applications. Finally, in  Appendix~\ref{super-poly}  we discuss how one can deal with settings with super-polynomially many flaws.

\paragraph{Algorithmic LLL Without a Slack and Arbitrary Flaw Choice Strategy.}

The works of Achlioptas, Iliopoulos and Kolmogorov~\cite{AIJACM,Harmonic,Kolmofocs} require a multiplicative slack in the generalized  LLL conditions in order to establish  fast convergence to a perfect object. On the other hand, Harvey and Vondr\'{a}k~\cite{HV} dispense with this requirement in the important case of algorithms that are  perfectly compatible with the underlying measure under the mild assumption that the dependency graph is undirected.   

Using the witness lemma, we are able to dispense with the multiplicative slack requirement for arbitrary algorithms in the commutative setting and also have the flexibility of arbitrary flaw choice strategy, as in the result of Kolmogorov~\cite{Kolmofocs}.

\paragraph{Improved Running Time Bounds.}

We are able to improve the running time bounds of Harvey and Vondr\'{a}k~\cite{HV} for commutative algorithms, matching those  of Kolipaka and Szegedy~\cite{szege_meet} for the MT algorithm. Whether this could be done was left as an open question in~\cite{HV}.
We note that while the results of Achlioptas, Iliopoulos and Kolmogorov~\cite{AIJACM,Harmonic,Kolmofocs} also manage to give improved running time bounds  they  require  a multiplicative-slack in the LLL conditions.

\paragraph{Concrete Applications.}

In Section~\ref{Applicatia} we give concrete applications  of commutative algorithms showing new results for the problems of \emph{rainbow matchings, list-coloring} and \emph{acyclic edge coloring}.  Each application is chosen so that it demonstrates specific features of our results. 

The first application is  in the space of matchings of a complete graph. We use this problem as an example that allows us  to show how several byproducts of approximating the LLL-distribution can be applied in a black-box manner to a setting that is not captured either by the variable or the permutation setting, and for which we know~\cite{AIJACM,HV,Kolmofocs}  how to design commutative algorithms that  are perfectly compatible with the uniform measure over the state space.

The second, and perhaps most interesting, application is to show that the algorithm of Molloy~\cite{molloy2017list} for finding  proper colorings in  triangle-free graphs with maximum degree $\Delta$ using $(1 + \epsilon) \frac{\Delta}{\ln \Delta}$ colors,  can actually output exponentially many such colorings with positive probability. First, we show that Molloy's algorithm can be analyzed in the general frameworks of the algorithmic LLL and that it is commutative, a fact that  gives us access to properties of its output distribution. Then, we apply results regarding the entropy of the output of commutative algorithms. We show the following theorem.
\begin{theorem}\label{many_solutions}
For every $\epsilon >0$ there exists $\Delta_{\epsilon} $ such that every triangle-free graph $G$ with maximum degree $\Delta \ge \Delta_{\epsilon} $ has list-chromatic number $\chi_{\ell}(G) \le (1+ \epsilon) \frac{ \Delta }{ \ln \Delta}$. 
Furthermore, if $G$ is a graph on $n$ vertices then, for every $\eta > 0$, there exists an algorithm $\mathcal{A}$  that constructs such a coloring in polynomial time with probability at least $1 - \frac{1}{n^{\eta}} $. In addition, $\mathcal{A}$ is able to output  $\mathrm{e}^{ cn  }  $ distinct list-colorings with positive probability, where $ c >0$ is a constant that depends on $\epsilon$ and $\Delta$. 
\end{theorem}

We emphasize that the algorithm of Molloy is a sophisticated  stochastic local search algorithm whose analysis is far from any standard LLL setting. The fact that our results allow us to state non-trivial facts about its distributional properties almost in a black-box fashion is testament to their flexibility.

In the third application we show how one can use bounds on the output distribution of commutative algorithms that are induced by the Shearer's condition in order to analyze  applications of the Clique version of the Local Lemma in the problem of acyclic edge coloring of a graph.

%
%
%

\section{Background and Preliminaries}\label{Background}

In this section we present the necessary background and definitions to describe our setting. In Subsection~\ref{LLL}  we describe the Lov\'{a}sz Local Lemma. In Subsections~\ref{setting} and~\ref{commutative_setting}  we formally outline the algorithmic assumptions  of~\cite{AIJACM,HV,Harmonic,Kolmofocs}.  In Subsection~\ref{improved} we describe improved Lov\'{a}sz Local Lemma criteria formulated in our setting.

\subsection{The Lov\'{a}sz Local Lemma}\label{LLL}

To prove the  \emph{existence} of flawless objects we can often use the Probabilistic Method. To do so, we introduce a probability measure $\mu$ over $\Omega$ and consider the collection of ``bad" events corresponding to flaws. If we are able to show that the probability to avoid all bad events is strictly positive, then this implies the existence of a flawless object. One of the most powerful tools to establish the latter is the Lov\'{a}sz Local Lemma~\cite{LLL}.
\begin{gLLL}
Let $(\Omega,\mu)$ be a probability space and $\mathcal{A} = \{A_1, A_2,\ldots,A_m\}$ be a set of $m$ (bad) events. For each $i \in [m]$, let $D(i) \subseteq	 [m] \setminus \{i\}$ be such that $\mu(A_i \mid \cap_{j \in S} \overline{A_j}) = \mu(A_i)$ for every $S \subseteq [m] \setminus (D(i) \cup \{i\})$. If there exist positive real numbers $\{\psi_i\}_{i=1}^m$ such that for all $i \in [m]$,
\begin{equation}\label{eq:LLL}
\frac{\mu(A_i)}{\psi_i}  \sum_{ S \subseteq   D(i) \cup \{i \} }  \prod_{j \in S} \psi_j \le 1 \enspace , 
\end{equation}
then the probability that none of the events in $\mathcal{A}$ occurs is at least $\prod_{i=1}^m 1/(1+\psi_i) > 0$. 
\end{gLLL}
\begin{remark}\label{general}
Condition~\eqref{eq:LLL} above is equivalent to the more well-known form $\mu(A_i) \le x_i \prod_{j \in D(i)} (1-x_j)$, 
where $x_i = \psi_i/(1+\psi_i)$. As we will see, formulation~\eqref{eq:LLL} facilitates refinements.
\end{remark}

Let  $G$ be the digraph over the vertex set $[m]$  with an edge from each $i \in [m]$ to each element of $D(i)\cup \{i \}$. We call such a graph a \emph{dependency} graph. Therefore, at a high level, the LLL states that if there exists a sparse dependency graph and each bad event is not too likely, then perfect objects exist.

\subsection{Algorithmic Framework}\label{setting}

Here we describe the class of algorithms we will consider as well as the algorithmic LLL criteria  for  fast convergence to a perfect object.  Since we will be interested in algorithms that search for perfect objects, we sometimes refer to $\Omega$ as  a state space and to its elements as states.

For a state $\sigma$, we denote by $U(\sigma) = \{ j \in [m] \text{ s.t. } f_j \ni \sigma \}$ the set of indices of flaws that are present at $\sigma$. We consider algorithms which at each flawed state $\sigma$ choose an element of $U(\sigma)$ and randomly move to a nearby state in an effort to \emph{address} the corresponding flaw. Concretely, we will assume  that for every flaw $f_i$ and every state $\sigma \in f_i$ there is a probability distribution $\rho_i(\sigma, \cdot) $ with  a non-empty support $A(i,\sigma) \subseteq \Omega$ such that addressing flaw $f_i$ at state  $\sigma$ amounts to selecting the next state $\sigma'$ from $A(i,\sigma)$ with probability $\rho_i(\sigma, \sigma')$. We call $A(i,\sigma)$ the set of \emph{actions} for addressing flaw $f_i$ at $\sigma$ and note that potentially $A(i,\sigma) \cap f_i \ne \emptyset$, i.e., addressing a flaw does not necessarily imply removing it. The actions for flaw $f_i$ form a digraph $D_i$ on $\Omega$ having an arc $\sigma \xrightarrow{i} \sigma'$ for each pair $(\sigma, \sigma') \in f_i \times  A(i,\sigma) $. Let $D$ be the multi-digraph on $\Omega$ that is the union of all $D_i$.


We consider algorithms that start from a state $\sigma \in \Omega$ picked from an initial distribution $\theta$, and then repeatedly pick a flaw that is present in the current state and address it. The algorithm always terminates when it encounters a flawless state. 

To state the algorithmic LLL criteria for fast convergence of such algorithms we need to introduce two key ingredients. The first  one is a notion of \emph{causality} among flaws that will be used to induce a graph over $[m]$,  which will play a role similar to  the one of the dependency graph in the existential Local Lemma formulation.  We note that there is a formal connection between causality graphs and  dependency graphs (for more details see~\cite{HV}).

\begin{causality}
For an arc $\sigma \xrightarrow{i}  \sigma'$ in $D_i$ and a flaw $f_j$ present in $\sigma'$ we say that $f_i$ causes $f_j$ if $f_i = f_j$ or $f_j \not\ni \sigma$. If $D_i$ contains \emph{any} arc in which $f_i$ causes $f_j$ we say that $f_i$ \emph{potentially causes} $f_j$.
\end{causality}
\begin{cd}
Any digraph $C=C(\Omega,F,D)$ on $[m]$ where $i \rightarrow j$ exists whenever  $f_i$ potentially causes $f_j$ is called a  \emph{causality digraph}. The \emph{neighborhood} of a flaw $f_i$  in $C$  is $\Gamma(i) =\{j : i \to  j \text{  exists in $C$}\}$.
\end{cd}
The second ingredient  is a measure of \emph{compatibility} between the  actions of the algorithm for addressing each flaw $f_i$  (that is, digraph $D_i$) and the probability measure $\mu$ over $\Omega$ which we will use for the analysis. As was shown in~\cite{HV,Harmonic,Kolmofocs} one can capture compatibility by letting
\begin{align}\label{charge}
d_i   =  \max_{\sigma \in \Omega } \frac{\nu_i(\sigma)}{ \mu(\sigma) } \ge 1  \enspace ,
\end{align} 
where $\nu_i(\sigma)$ is the probability of ending up at state $\sigma$ at the end of the following experiment: sample $\omega \in f_i$ according to $\mu$ and address flaw $f_i$ at $\omega$. An algorithm achieving perfect compatibility for flaw $f_i$, i.e., $d_i = 1$, is a \emph{resampling oracle} for flaw $f_i$  (observe that the   Moser-Tardos algorithm is trivially a resampling oracle for every flaw). More generally, ascribing to each flaw $f_i$ the \emph{charge}
\begin{eqnarray*}
\gamma(f_i) =  d_i \cdot \mu(f_i)  =  \max_{\sigma' \in \Omega } \frac{ 1 }{ \mu(\sigma')  }  \sum_{ \sigma \in f_i } \mu(\sigma) \rho_i(\sigma,\sigma') \enspace,
\end{eqnarray*}
yields the following algorithmization condition. If for every flaw $f_i \in F$,
\begin{align} \label{AlgoLLL}
\frac{\gamma(f_i) }{ \psi_i} \sum_{ S \subseteq \Gamma(i) } \prod_{j \in S } \psi_j  < 1   
\end{align}
then  there exists a flaw choice strategy under which the algorithm will reach a perfect object fast. (In most applications, that is in $O \left( \log |\Omega| +  m \max_{i \in [m ] } \log_2 \left( 1 + \psi_{i} \right) \right) $ steps with high probability.)

Throughout the paper we assume that we are given an undirected causality  graph $C$ (and thus the relation $\Gamma( \cdot) $ is symmetric) and we will sometimes write $i \sim j$ if $ j \in \Gamma(i) \leftrightarrow j \in \Gamma(j)$.  Furthermore,  for a set $S \subseteq [m] $ we define $\Gamma(S) = \bigcup_{i \in S} \Gamma(i)  $. Finally, we denote by $\mathrm{Ind}(S) = \mathrm{Ind}_C(S)$ the set of independent subsets of $S$ with respect to $C$.

\subsection{Commutativity}\label{commutative_setting}

We will say that $\sigma \xrightarrow{i} \sigma'$ is a \emph{valid trajectory} if it is possible to get from state $\sigma$ to state $\sigma'$ by addressing flaw $f_i$ as described in the algorithm, i.e., if two conditions hold: $i \in U(\sigma) $ and $\sigma' \in A(i,\sigma) $. Kolmogorov~\cite{Kolmofocs}  described the following \emph{commutativity} condition. We call the setting in which Definition~\ref{commutativity} holds  \emph{the commutative setting}.
\begin{definition}[Commutativity \cite{Kolmofocs}]\label{commutativity}
 A tuple $(F, \sim, \rho) $ is called \emph{commutative} if  there exists a mapping $\mathrm{Swap}$ that sends  any trajectory $\Sigma = \sigma_1 \xrightarrow{i} \sigma_2 \xrightarrow{j} \sigma_3$ with $ i \nsim j$ to another valid trajectory $ \mathrm{Swap}(\Sigma) = \sigma_1 \xrightarrow{j} \sigma_2' \xrightarrow{i} \sigma_3 $, and:
\begin{enumerate}
\item $\mathrm{Swap} $ is injective, 
\item $\rho_i(\sigma_1,\sigma_2) \rho_j( \sigma_2, \sigma_3) =   \rho_j(\sigma_1,\sigma_2') \rho_i( \sigma_2', \sigma_3)   $ \enspace.
\end{enumerate}
\end{definition}
It is straightforward to check that the Moser Tardos algorithm satisfies the commutativity condition. Furthermore, Kolmogorov showed that the same is true for resampling oracles in the permutation~\cite{PermHarris} and perfect matchings~\cite{HV} settings, and Harris~\cite{ParallelHarris} designed commutative resampling oracles for hamiltonian cycles.

Finally, as already mentioned, Kolmogorov showed that  in the  commutativity setting one may choose an arbitrary flaw choice strategy which is a function of the entire past execution history. The same will be true for our results, so we make the convention that given a tuple $(F, \sim, \rho) $ we always fix some arbitrary flaw choice strategy to get a well-defined, commutative algorithm $\mathcal{A} = (F, \sim,\rho)$.

\subsection{Improved LLL Criteria}\label{improved}

Besides the general form of the LLL~\eqref{eq:LLL} there exist improved criteria that apply in the full generality of the LLL setting. The most well-known are the \emph{cluster expansion condition}~\cite{Bissacot} and the \emph{Shearer's condition}~\cite{Shearer}.
Both of these criteria apply when the dependency graph is undirected and have been made constructive~\cite{szege_meet,PegdenIndepen,AIJACM,HV,Harmonic,Kolmofocs} in the most general algorithmic LLL settings. 
\paragraph*{Cluster Expansion Condition.}
The cluster expansion condition strictly improves upon the General LLL condition~\eqref{eq:LLL} by taking advantage of the local density of the dependency graph. 
\begin{definition}
Given a sequence of positive real numbers $\{ \psi_i \}_{i=1}^{m}$, we say that the cluster expansion condition is satisfied if for each $i \in [m]$:
\begin{align}\label{ClusterLLL}
\frac{\gamma(f_i) }{\psi_i }  \sum_{ S \in \mathrm{Ind}( \Gamma(i) ) } \prod_{j \in S} \psi_j \le 1 \enspace.
\end{align}
\end{definition}
\paragraph*{Shearer's Condition.} 
%
%
Let $\gamma \in \mathbb{R}^{m}  $ be the real vector such that $\gamma_i = \gamma(f_i)$.  Furthermore, for $S \subseteq [m]$ define  $\gamma_S = \prod_{j \in S } \gamma_j$ and the polynomial $q_S$:
\begin{align*}
q_S  = q_S(\gamma) = \sum_{ \substack{ I  \in \mathrm{Ind}([m])   \\ S \subseteq I} } (-1)^{ |I|- |S| }\gamma_{I}   \enspace.
\end{align*}
\begin{definition}\label{ShearerLLL}
We say that the Shearer's condition is satisfied if $q_S(\gamma) \ge 0 $ for all $S \subseteq [m]$, and $q_{\emptyset}(\gamma) > 0 $. 
\end{definition}

\section{Statement of Results}\label{Results}

Assuming that the LLL conditions~\eqref{eq:LLL} hold, the \emph{LLL-distribution}, which we denote by $\mu_{  \mathrm{LLL} } $,  is defined as the distribution induced by the measure $\mu$ conditional on no bad event  occurring. The following proposition relates the LLL-distribution to measure $\mu$ making it a powerful tool  that can be used to argue about properties of flawless objects. The idea is that if an (not necessarily bad) event $E$ is independent from most bad events, then its probability under the LLL-distribution is not much larger than its probability under the probability measure $\mu$.

\begin{proposition}[\cite{Haeupler_jacm}]\label{LLL-distribution}
If the LLL conditions~\eqref{eq:LLL}   hold, then for any event $E$: 
\begin{align}\label{LLL_dist_aprox}
\mu_{  \mathrm{LLL} }(E)  \le \mu(E) \sum_{S \subseteq D (E) } \prod_{j \in S}  \psi_j \enspace,
\end{align}
where $D(E) \subseteq [m]$ is such that $\mu( E \mid \bigcap_{j \in S} \overline{A}_j ) = \mu(E)$ for all $S \subseteq [m] \setminus D(E)$.
\end{proposition}

The main result of Haeupler, Saha and Srinivasan~\cite{Haeupler_jacm} is that the Moser-Tardos algorithm approximates well the LLL-distribution, in the sense that the left-hanside of~\eqref{LLL_dist_aprox}  bounds the probability that it ever reaches a subspace $E \subseteq \Omega$   during its execution. Building on this fact,~\cite{Haeupler_jacm} and followup works~\cite{EnuHarris,NewBoundsHarris} manage to show several new applications.

Here we extend the latter result to arbitrary commutative algorithms. Given an arbitrary set $E \subseteq \Omega$ and a commutative algorithm $\mathcal{A}$, consider an extension, $\mathcal{A}_E$, of  $\mathcal{A}$  by defining an extra flaw $f_{m+1}\equiv E$ with its own set of probability distributions $\rho_{m+1}(\sigma, \cdot), \sigma \in E$.  If $\mathcal{A}$ is commutative with respect to $\sim$, we will say that $\mathcal{A}_E $ is a \emph{commutative extension} of $\mathcal{A}$ if $\mathcal{A}_E = (F \cup \{ m+1 \}, \sim, \rho)$ is also commutative.
 
 Commutative extensions should be interpreted  as a tool to bound the probability  that $\mathcal{A}$ ever reaches a subset $E$ of the state space.  That is, they are defined only for the purposes of the analysis and,  typically in  applications, they are a natural extension of the algorithm. For example, in the case of the Moser-Tardos algorithm applied to $k$-SAT, if one would like to bound the probability that the algorithm ever reaches a state such that variables $x_1, x_2$ of the formula are both set to true,  then one could define $f_{m+1} = \{ \sigma \in \Omega \text{ s.t. } \sigma(x_1) = \sigma(x_2) = 1\}$ along with the corresponding commutative extension of the Moser-Tardos algorithm that addresses $f_{m+1}$ by resampling variables $x_1, x_2$ according to the product measure over the variables of the formula that the Moser-Tardos algorithm uses whenever it needs to resample a violated clause. Indeed, commutative extensions of this form are implicitly defined in the analysis of~\cite{Haeupler_jacm} for the Moser-Tardos algorithm. 
  
We will use the notation $\Pr[\cdot] = \Pr_{\mathcal{A}}[ \cdot ]$ to refer to the probability of events in the probability space induced by the execution of algorithm $\mathcal{A}$. For example, the probability that $\mathcal{A}$ ever reaches a set $E \subseteq \Omega$ of the state space during its execution will be denoted by $\Pr[E]$. 
\begin{theorem}\label{main}
If $\mathcal{A} = (F,\sim,\rho)$ is commutative and the cluster expansion condition is satisfied then:
\begin{enumerate}
\item  for each  $i \in [m] $: $\ex[N_i]  \le  \lambda_{\mathrm{init}}  \psi_i       $ ;
\item  for each $E \subseteq \Omega$: $\Pr \left[  E    \right]  \le     \lambda_{\mathrm{init}}   \gamma(E)     \sum \limits_{S \in \mathrm{Ind} \left(\Gamma(E) \right)  }  \prod_{ j \in S}  \psi_j       $ ;
\end{enumerate}
where $N_i$ is the number of times flaw $f_i$ is addressed during the execution  of $\mathcal{A}$, $\lambda_{\mathrm{init}} =  \max_{\sigma \in \Omega}\frac{ \theta(\sigma) }{ \mu(\sigma) } $, and $\Gamma(E)$ and $\gamma(E)$ are defined with respect to a fixed commutative extension $\mathcal{A}_E$.
 \end{theorem}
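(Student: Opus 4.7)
The plan is to follow the Moser--Tardos paradigm, adapted to the commutative setting via Kolmogorov's \textsc{Swap} operation. The central combinatorial object is the \emph{witness tree}: from an execution trace $\sigma_1 \xrightarrow{i_1} \sigma_2 \xrightarrow{i_2} \cdots$, for each step $t$ I build a labeled rooted tree $\tau_t$ by starting with a single root labeled $i_t$ and processing $s = t-1, t-2, \ldots, 1$ in reverse, attaching a new node labeled $i_s$ as a child of the deepest existing node $u$ whose label lies in $\Gamma(i_s) \cup \{i_s\}$ (discarding $s$ if no such $u$ exists). By construction, the labels of any set of siblings are pairwise non-adjacent in the causality graph $C$, so every depth-level of the tree is an independent set in $C$.

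The key technical step is the \emph{Witness Tree Lemma}: for every valid labeled rooted tree $\tau$, $\Pr[\tau_t = \tau \text{ for some } t] \le \lambda_{\mathrm{init}} \prod_{v \in \tau} \gamma(f_{\ell(v)})$, where $\ell(v)$ denotes the label of $v$. This is where commutativity is indispensable and where I expect the main difficulty. Reading $\tau$ in reverse breadth-first order, any adjacent pair in this ordering corresponds to non-adjacent labels in $C$, so the steps they represent can be swapped in the execution via the map of Definition~\ref{commutativity}. Iterating \textsc{Swap}, I can rearrange the original trace into a canonical form in which the $|\tau|$ steps corresponding to $\tau$ occur consecutively in reverse BFS order; property~(2) of Definition~\ref{commutativity} preserves the joint probability of the trace under each swap, and injectivity of \textsc{Swap} guarantees that distinct executions produce distinct canonical forms (preventing overcounting). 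In the canonical form, the probability contributed by the $|\tau|$ consecutive steps telescopes: each addressing of $f_{\ell(v)}$ contributes a factor bounded by $d_{\ell(v)}$ via~\eqref{charge}, while the initial state contributes $\prod_v \mu(f_{\ell(v)})$ after bounding $\theta \le \lambda_{\mathrm{init}}\mu$ pointwise. The core obstacle is precisely this iterated swap argument, essentially the commutative-setting analog of the Harris--Srinivasan witness-tree argument of~\cite{PermHarris}; one must carefully verify that the canonical form determines the prefix of the original trace up to the remaining ``discarded'' steps.

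Given the Witness Tree Lemma, claim~(1) follows by a union bound: $\ex[N_i] \le \lambda_{\mathrm{init}} \sum_{\tau : \ell(\mathrm{root}) = i} \prod_{v \in \tau} \gamma(f_{\ell(v)})$. Summing over trees with a fixed root label is the standard branching-process calculation of~\cite{Bissacot,szege_meet}: under the cluster expansion condition~\eqref{ClusterLLL}, a weighted Galton--Watson tree rooted at $i$ whose offspring distribution ranges over independent subsets of the parent's causality neighborhood (weighted by the corresponding $\gamma$ and $\psi$ factors) has total expected weight bounded by $\psi_i$, yielding $\ex[N_i] \le \lambda_{\mathrm{init}}\psi_i$.

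For claim~(2), I pass to the commutative extension $\mathcal{A}'$ obtained by adjoining a virtual flaw $f_{m+1} = E$ with neighborhood $\Gamma(E)$, which by hypothesis yields a commutative tuple. The event that some $\sigma_t \in E$ occurs during the execution of $\mathcal{A}$ is analytically identified with the event that some witness tree of $\mathcal{A}'$ has root labeled $m+1$, by virtually inserting an addressing of $E$ at the first visit. Applying the Witness Tree Lemma to $\mathcal{A}'$ and performing the same branching sum, but fixing the root to be $m+1$ with charge $\gamma(E)$ and its children ranging over $\mathrm{Ind}(\Gamma(E))$, yields $\Pr[E] \le \lambda_{\mathrm{init}} \gamma(E) \sum_{S \in \mathrm{Ind}(\Gamma(E))} \prod_{j \in S} \psi_j$. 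Crucially, the cluster expansion condition needs to hold only on the original causality graph $C$, because each child subtree of the virtual root is an honest witness tree of $\mathcal{A}$ and the $\psi_j$ bound applies inductively from the previous paragraph.
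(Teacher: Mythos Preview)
Your proposal is correct and follows essentially the same route as the paper: prove the Witness Tree Lemma (Lemma~\ref{WitnessTreeLemma}) by using iterated \textsc{Swap} operations to bring the steps corresponding to $\tau$ into a contiguous canonical prefix (the paper formalizes this via Lemmas~\ref{step:swapping}, \ref{Kolmo27}, \ref{Kolmo28} and then applies Kolmogorov's Lemma~\ref{step_prefixes}), sum over witness trees with a fixed root via the branching-process bound (Lemma~\ref{witness_trees_sum}) to get part~(1), and for part~(2) pass to the truncated commutative extension $\mathcal{A}'$ with $f_{m+1}=E$ at highest priority, noting that the children of the virtual root are honest $\mathcal{A}$-subtrees so only the original cluster-expansion hypothesis is needed. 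One small slip: it is not true that ``any adjacent pair in reverse BFS order corresponds to non-adjacent labels in $C$'' (a child and its parent straddle consecutive levels and \emph{are} adjacent in $C$); the correct statement, which the paper uses, is that the swaps are applied only to out-of-order pairs that happen to be non-adjacent in $C$, and one shows (Lemma~\ref{Kolmo27}) that once no such swappable pair remains the canonical prefix $Q_\pi(\Sigma)$ has been achieved.
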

\begin{corollary}
Algorithm $\mathcal{A}$ terminates after  $O( \lambda_{\mathrm{init}} \sum \limits_{i \in [m] } \psi_i  )$  steps in expectation. 
\end{corollary}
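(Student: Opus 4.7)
The plan is to deduce this corollary directly from part~1 of Theorem~\ref{main}, since that bound on $\ex[N_i]$ is essentially the whole content, and summing over flaws gives the running time. Concretely, I will first observe that the total number of steps that $\mathcal{A}$ performs before reaching a sink of the multi-digraph $D$ is exactly $T = \sum_{i \in [m]} N_i$, because at each step the algorithm addresses exactly one currently present flaw and this addressing is counted in precisely one of the $N_i$'s.

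Next, I would apply linearity of expectation to write
\[
\ex[T] \;=\; \sum_{i \in [m]} \ex[N_i].
\]
Invoking Theorem~\ref{main}(1), which under the cluster expansion condition gives $\ex[N_i] \le \lambda_{\mathrm{init}}\,\psi_i$ for every $i \in [m]$, I immediately obtain
\[
\ex[T] \;\le\; \lambda_{\mathrm{init}} \sum_{i \in [m]} \psi_i,
\]
which is the claimed bound. Thus the corollary reduces to a one-line calculation once Theorem~\ref{main} is in hand.

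There is no real obstacle here: the only thing to double-check is the minor bookkeeping that each step of the walk on $D$ is charged to exactly one flaw (the one the algorithm chose to address at that step), so that the identity $T = \sum_i N_i$ is an equality and not merely an upper bound. In particular the corollary does not require any additional structural assumption beyond commutativity and the cluster expansion condition already assumed in Theorem~\ref{main}, and the conclusion holds for the arbitrary flaw-choice strategy fixed by the commutative setting.
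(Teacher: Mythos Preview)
Your proposal is correct and matches the paper's approach exactly: the paper presents this as a ``straightforward corollary'' of Theorem~\ref{main}(1), and the argument is precisely summing the bound $\ex[N_i]\le \lambda_{\mathrm{init}}\psi_i$ over $i\in[m]$ and using that the total number of steps equals $\sum_i N_i$.
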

\begin{remark}\label{ShearerRemark}
If the Shearer's condition is satisfied, then one can replace $\psi_i$  in Theorem~\ref{main} with $\frac{ q_{\{ i \}}(\gamma) }{  q_{\emptyset}( \gamma) }$.
\end{remark}

We note that the  first part of Theorem~\ref{main} allows us  to guarantee fast convergence of $\mathcal{A}$ to a perfect object without having to assume a ``slack" in the cluster expansion and Shearer's conditions (unlike the works of~\cite{Harmonic,Kolmofocs})  and, moreover, improves upon the (roughly quadratically  worse) running bound of~\cite{HV}, matching the one of~\cite{szege_meet}. Whether the latter could be done was left as an open question in~\cite{HV}.

\section{Proof of Main Results}\label{Proofs}
In this section we state and prove the witness tree lemma for our setting. We then use it to prove Theorem~\ref{main}.
\subsection{The Witness Tree Lemma}\label{WTL}
Given a trajectory $\Sigma = \sigma_1 \xrightarrow{w_1} \ldots \sigma_t  \xrightarrow{w_t} \sigma_{t+1} $ we denote by $W(\Sigma) = ( w_1, \ldots, w_t   ) $ the  \emph{witness sequence} of $\Sigma$. (Recall that according to our notation, $w_i$ denotes  the index of the flaw that was addressed at the $i$-th step). 

To state the witness tree lemma, we will  first need to recall the definition of witness trees from~\cite{MT}, slightly reformulating to fit our setting.
A witness tree $\tau  = ( T, \ell_{T} )$ is a finite rooted, unordered,  tree $T$ along with a labelling $\ell_T: V(T) \rightarrow [m] $ of its vertices with indices of flaws such that the children of a vertex $v \in V(T)  $ receives labels from $\Gamma( \ell(v) ) $. To lighten the notation, we will sometimes write $(v)$ to denote $\ell(v)$ and $V(\tau)$ instead of $V(T)$.  Given a witness sequence $W = (w_1, w_2, \ldots, w_t) $ we associate with each $i \in [t] $ a witness tree $\tau_W(i) $ that is constructed as follows:
Let $\tau_W^{i}(i) $ be an isolated vertex labelled by $w_i$. Then, going backwards for each $j = i-1, i-2, \ldots, 1$: if there is a vertex $v \in \tau_W^{j+1}(i) $  such that $(v)  \sim  w_j $ then we choose among those vertices the one having the maximum 
distance (breaking ties arbitrarily) from the root and attach a new child vertex $u$ to $v$ that we label $w_j$ to get $\tau_W^{j}(i)$. If there is no such vertex $v$ then $\tau_W^{j+1}(i) = \tau_W^{j}(i)$. Finally, let $\tau_W(i) = \tau_W^{1}(i)$.

We will say that a witness tree $\tau$ occurs in  a trajectory $\Sigma$ if  $W(\Sigma) = (w_1, w_2, \ldots, w_t)$ and there is $k \in [t]$ such that $\tau_W(k) = \tau$.  

\begin{theorem}[The witness tree lemma]\label{WitnessTreeLemma}
Assume that $\mathcal{A} = (F,\sim,\rho )$ is commutative. Then, for every witness tree $\tau$ we have that:
\begin{align*}
\Pr[\tau] \le \lambda_{\mathrm{init} } \prod_{v \in V(\tau) }  \gamma(f_{(v) })  \enspace.
\end{align*}
\end{theorem}
We show the proof of Theorem~\ref{WitnessTreeLemma} in Section~\ref{WTL_Proof_Sec}.

\subsection{Witness Trees and Stable Witness Sequences}

Here we prove some properties of witness trees (which are induced by witness sequences of the algorithm) that will be useful to us later. We  also draw a connection between witness trees and \emph{stable witness  sequences}, which we will need  in the proof of Theorem~\ref{main}.  Stable witness sequences were first introduced in~\cite{szege_meet} to make the Shearer's condition constructive in the variable setting.

\subsubsection{Properties of Witness Trees}

The following propositions capture the  main properties of witness trees we will need.

\begin{proposition}\label{indep_levels}
For a witness tree  $\tau =(T,\ell_{T}) $ let $L_i = L_i(\tau)$ denote the set of labels of the nodes at distance $i$ from the root.  For each $i \ge 0$, $L_i \in \mathrm{Ind}([m])$. 
\end{proposition}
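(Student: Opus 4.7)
The plan is to read the proposition directly off the backward construction of $\tau_W(i)$, using a depth-monotonicity argument that is standard in Moser--Tardos-style analyses. First I would fix a level $k \ge 1$ (the case $k=0$ is trivial since $L_0$ is a singleton) and pick two distinct nodes $u,v$ at depth $k$ in $\tau_W(i)$. Let $j_u$ and $j_v$ be the indices of the witness sequence at which $u$ and $v$ were respectively inserted into the partial tree; by relabeling if necessary I may assume $j_u > j_v$, which by the backward-sweep order $j = i-1, i-2,\ldots,1$ means that $u$ is inserted into the tree strictly before $v$.

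At the moment index $j_v$ is processed, $u$ already sits at depth $k$ of the partial tree $\tau_W^{(j_v+1)}(i)$, and crucially its depth does not change under any later insertion. The construction rule then attaches the new node, whose label is $w_{j_v}=[v]$, as a child of the vertex of maximum depth whose label lies in $\Gamma(w_{j_v})$, with ties broken arbitrarily; for $v$ to end up at depth $k$, such a candidate parent must exist at depth exactly $k-1$. Suppose, for contradiction, that $[u]=[v]$ or $[u]\sim[v]$; then $[u]\in\Gamma(w_{j_v})$, so $u$ itself would be a valid candidate parent sitting at depth $k$, and the maximum-depth rule would force $v$ to be attached at depth at least $k+1$, contradicting $v\in L_k$.

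Consequently, any two distinct nodes of $L_k$ carry labels that are pairwise distinct and pairwise non-adjacent in the undirected causality graph $C$, which is exactly the statement $L_k\in\mathrm{Ind}([m])$. The argument is purely combinatorial, and the only subtle point is the bookkeeping: keeping track of the insertion order, together with the invariance of the depths of already-inserted nodes. No probabilistic input (commutativity, charges, or the swap map) is needed here, and I do not foresee a genuine obstacle—this proposition is really a structural lemma about the witness-tree construction that will be used downstream in the proof of Lemma~\ref{WitnessTreeLemma}.
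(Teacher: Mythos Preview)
Your argument is correct and is essentially the same as the paper's: both exploit the maximum-depth attachment rule to conclude that a new node whose label is $\sim$-related to an existing node at depth $k$ must land at depth at least $k+1$; the paper packages this as an induction over the backward construction steps $j=k,k-1,\ldots$, while you phrase it as a direct contradiction after fixing two nodes at the same level. One minor over-claim: your conclusion that labels at a level are pairwise \emph{distinct} does not follow in this generality when $\sim$ fails to be reflexive (i.e.\ when $i\notin\Gamma(i)$, so a same-label node is not an eligible parent), but this is irrelevant to the proposition, which only asserts that the \emph{set} $L_k$ is independent.
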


\begin{proof}
We will show that for each $i \ge 0$,  and each $\alpha,\beta \in L_i$ we have that $\alpha \nsim \beta$.

Let $W = (w_1, w_2, \ldots, w_t)$ be a witness sequence that can occur in an execution of our algorithm. Let $\alpha, \beta $ be two distinct elements of $L_i$. By the definition of $\tau$, labels $\alpha, \beta$ correspond to two indices $w_{j_1}, w_{j_2} $ of $W$. Assume without loss of generality that $j_1 < j_2$.  Then, according to the algorithm for constructing $\tau$, index  $w_{j_2}$ is ``attached first" to the $i$-th level of $\tau$. The proof is concluded by noticing that if $w_{j_1} = \alpha \sim \beta =  w_{j_2}$ then the node corresponding to $w_{j_1}$ is eligible to be a child of the node corresponding to $w_{j_2}$ during the construction of $\tau$ and, thus, $\beta \notin L_i$, which is a contradiction.

\end{proof}

\begin{proposition}\label{uniqueness}
For a witness sequence $W $ of length $t$  and any two distinct  $i,j \in [t]$ we have that $\tau_W(i) \ne \tau_W(j)$.
\begin{proof}
Let $W = (w_1, w_2, \ldots, w_t )$. Assume w.l.o.g. that $i < j$. If $w_i \ne w_j$ then the claim is straightforward because the root of $\tau_W(i)$  is $w_i$  while the root of $\tau_W(j)$ is $w_j$.   If  $w_i = w_j =  w$, then there two cases. In the first case,  $ w  \in \Gamma(w)$,  and so tree  $\tau_W(j)   $ has at least one more vertex than $\tau_W(i)$. In the second case $ w \notin \Gamma(w) $. This implies that at the $i$-th step of any trajectory $\Sigma$ such that $W(\Sigma)  = W$, flaw $f_{w}$ was addressed and removed.  However, the fact that $w_j = w$ implies that there has to be $ k \in \{i+1, \ldots, j-1\}$ such that addressing $w_k$ introduced $w$ and thus,  $w_k \sim w$. Again, this means that $\tau_W(j)   $ has at least one more vertex than $\tau_W(i)$.

\end{proof}
 
\end{proposition}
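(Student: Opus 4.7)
Assume without loss of generality that $i < j$. The plan is to split into two cases according to whether $w_i = w_j$ and, in the harder case, to show $|V(\tau_W(j))| > |V(\tau_W(i))|$, which of course implies the trees are distinct.

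In the easy case $w_i \neq w_j$, I would observe that the backward construction of $\tau_W(k)$ lays down $w_k$ as the initial vertex at step $k$ and never relabels it; consequently $w_k$ remains the root of the final tree $\tau_W(k)$. Hence $\tau_W(i)$ and $\tau_W(j)$ have root labels $w_i$ and $w_j$ respectively, which differ, so the trees cannot coincide.

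In the harder case $w_i = w_j = w$, let $\tau_W^{(s)}(k)$ denote the partial tree obtained after processing the backward steps $k, k-1, \ldots, s$, so that $\tau_W(k) = \tau_W^{(1)}(k)$. The key intermediate claim is that in the backward construction of $\tau_W(j)$, processing of step $i$ actually attaches a new vertex (rather than being a no-op). I would verify this by sub-casing on whether $w$ is a self-neighbor in the causality graph. If $w \sim w$, then the root of $\tau_W^{(i+1)}(j)$, which is labeled $w$, is itself a valid attachment candidate for a step-$i$ node labeled $w_i = w$. Otherwise $w \not\sim w$; then addressing $f_w$ at step $i$ must genuinely remove $f_w$ from the state, and since $f_w$ is present again at step $j$, some intermediate step $k \in (i,j)$ must have reintroduced $w$, forcing $w_k \sim w$. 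In the backward construction of $\tau_W(j)$, the vertex labeled $w_k$ therefore attaches (to the root or deeper), so that by the time we process step $i$ a suitable attachment point of label in $\Gamma(w)$ is available.

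Given this claim, $\tau_W^{(i)}(j)$ has strictly more vertices than $\tau_W^{(i)}(i)$, the latter being a single root. To conclude, I would argue by induction over the remaining backward steps $s = i-1, \ldots, 1$ that whenever a vertex is attached in the construction of $\tau_W(i)$, a corresponding attachment point is already available in $\tau_W^{(s+1)}(j)$, so attachment in $\tau_W(i)$ implies attachment in $\tau_W(j)$. This preserves the vertex-count inequality down to $s = 1$. The principal subtlety I anticipate is formalizing the ``corresponding vertex'' relation cleanly, essentially by maintaining a label-preserving injection $V(\tau_W^{(s)}(i)) \hookrightarrow V(\tau_W^{(s)}(j))$ across the inductive steps; this is routine given the construction rules but is the one place where care is needed to keep the argument rigorous.
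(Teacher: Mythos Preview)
Your proposal is correct and follows essentially the same approach as the paper's proof: the same case split on whether $w_i=w_j$, and in the equal-label case the same sub-split on whether $w\sim w$, with the same causality argument producing an intermediate $k\in(i,j)$ with $w_k\sim w$ when $w\not\sim w$. The only difference is that the paper asserts ``$\tau_W(j)$ has at least one more vertex than $\tau_W(i)$'' without justification, whereas you spell out the monotonicity step---that every backward step $s<i$ which attaches in the construction of $\tau_W(i)$ also attaches in the construction of $\tau_W(j)$---via an induction; this extra care is warranted, and in fact you can simplify it slightly by tracking only the \emph{set} of labels present in each partial tree (attachment at step $s$ depends only on whether some label $\sim w_s$ is present), rather than maintaining a full label-preserving injection of vertices.
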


\subsubsection{Stable Witness Sequences}\label{stable_set_sequences}

We will now recall the definition of \emph{stable  sequences}~\cite{szege_meet}, which  have   been used in~\cite{HV,Kolmofocs} to make the Shearer's condition constructive.

\begin{definition}
A sequence of subsets $ (I_1, \ldots, I_k )$  of $[m]$  with $ k \ge 1 $ is called  \emph{stable} if 

\begin{enumerate}
\item $I_r \in \mathrm{Ind}([m]) \setminus \{  \emptyset \} $ for each $ r \in [k] $;
\item $I_{r+1} \subseteq \Gamma(I_r) $ for each $ r \in [k-1]$.
\end{enumerate}

\end{definition}

\begin{definition}\label{stable_seq}
A witness sequence $W =  (w_1, \ldots, w_t )$ is called \emph{stable} if it can be partitioned into non-empty sequences as $W =  (W_1, \ldots, W_k )$ such that the elements of each sequence $W_r$ are distinct, and the sequence $\phi_W := (I_1, \ldots, I_k )$ is  stable, where $I_r$ is the set of indices of flaws in $W_r$ (for $r \in [k])$.

 For any arbitrary ordering $\pi$ among indices of flaws, if in addition each sequence $W_r = ( w_i, \ldots, w_j) $ satisfies $w_i \prec_{\pi} \ldots \prec_{\pi} w_j $ then $W$ is called \emph{$\pi$- stable}.
\end{definition}

\begin{proposition}(\cite{Kolmofocs})\label{partitioning}
For a stable witness  sequence the partitioning in Definition~\ref{stable_seq} is unique. 
\end{proposition}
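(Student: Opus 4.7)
The plan is to prove uniqueness by showing that, given a stable witness sequence $W=(w_1,\ldots,w_t)$, the block to which each $w_j$ belongs is forced by the placements of $w_1,\ldots,w_{j-1}$. This reduces uniqueness to a greedy characterization of the partition: $w_j$ stays in the current block whenever it can do so consistently with the stability conditions, and otherwise opens a new block.

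Concretely, I would proceed by induction on $j\in[t]$. Fix any valid partition $(W_1,\ldots,W_k)$ of $W$ with associated stable sequence $(I_1,\ldots,I_k)$, let $p$ be the index of the block containing $w_{j-1}$, and let $I_p$ be its current index set. If $w_j\notin I_p$ and $I_p\cup\{w_j\}\in\mathrm{Ind}([m])$, then $w_j$ is forced into $W_p$: otherwise $w_j$ would open a new block $W_{p+1}$, requiring $w_j\in I_{p+1}\subseteq\Gamma(I_p)$, i.e.\ $w_j\sim u$ for some $u\in I_p$, which contradicts independence of $I_p\cup\{w_j\}$. In the complementary case, where $w_j\in I_p$ or $I_p\cup\{w_j\}\notin\mathrm{Ind}([m])$, the element $w_j$ cannot remain in $W_p$ (either by distinctness of the elements of $W_p$, or by independence of $I_p$), so it is forced to open a new block $W_{p+1}$; the required membership $w_j\in\Gamma(I_p)$ is then automatic, since either $w_j\sim u$ for some $u\in I_p$ directly, or $w_j=u\in I_p$ and the convention $i\in\Gamma(i)$ (natural in the causality digraph of Subsection~\ref{setting}, where every flaw may potentially cause itself) yields $w_j\in\Gamma(I_p)$.

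Together with the trivial base case $w_1\in W_1$, this forced dichotomy uniquely determines the partition of $W$, which is exactly the content of Proposition~\ref{partitioning}. I do not expect any genuinely difficult step; the whole argument is a careful book-keeping exploiting the mutual tension between the independence of each $I_r$ and the containment $I_{r+1}\subseteq\Gamma(I_r)$. The one technical point worth watching is the handling of self-loops in $C$, but under the standard convention $i\in\Gamma(i)$ the two cases above are exhaustive and mutually exclusive, so the greedy reconstruction is unambiguous.
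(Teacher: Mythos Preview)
Your greedy reconstruction is exactly the paper's approach: the paper's proof just describes the algorithm (open a new segment whenever $w_i \sim w_k$ for some $w_k$ already in the current last segment) and leaves the verification you spell out implicit.

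One correction on your self-loop discussion: the paper does \emph{not} assume $i\in\Gamma(i)$ in general (indeed, the proof of Proposition~\ref{uniqueness} explicitly treats the case $w\notin\Gamma(w)$), so you cannot appeal to that convention. Fortunately you don't need it. You have fixed a valid partition; once you argue in Case~2 that $w_j$ cannot remain in $W_p$, it follows that $w_j\in W_{p+1}$ in that partition, and the containment $w_j\in\Gamma(I_p)$ is then a \emph{consequence} of the partition's validity, not something you must re-derive. Equivalently, the subcase ``$w_j\in I_p$ yet $w_j\nsim u$ for every $u\in I_p$'' simply cannot occur in a stable witness sequence, so the dichotomy is already exhaustive without any assumption on self-loops.
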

\begin{proof}[Proof Sketch]
Let $W = (w_1, \ldots, w_t)$ be a stable sequence and consider the following algorithm. We start with a single segment containing $w_1$. For $i = 2 $ to $t$, if there exists index $w_k$ in the currently last segment such that $w_k \sim w_i$ then we start a new segment containing $w_i$. Otherwise, we add $w_i$ to the currently last segment. 
\end{proof}
For a witness  sequence  $W=  (w_1, \ldots,  w_t )$ let $\mathrm{Rev}[W]=  (w_t, \ldots, w_1) $ denote the reverse sequence.  Let also $R_W$ denote the first set (the ``root") of the stable sequence $\phi_W := (I_1, \ldots, I_k) $, i.e., $R_W = I_1$. Finally, let $\mathcal{R}_i^{\pi}$ be the set of    witness sequences $W$ such that  $\mathrm{Rev}[W]$ is $\pi$-stable and $R_{ \mathrm{Rev}[ W]  } = \{ i \}$.

There is a connection between stable  sequences and witness trees that we will need for the proof of Theorem~\ref{main}  and which we will describe below. 

Let $\mathcal{W}_i$ denote the set of witness trees with root labelled by $i$. For each $\tau \in \mathcal{W}_i$, let  $\chi_{\pi}(\tau)$  be the  \emph{ordered} witness tree that is induced by ordering the children of each node in $\tau$ from left to right, increasingly according to $\pi$. Define $\mathcal{W}_i^{\pi} := \chi_{\pi} ( \mathcal{W}_i  ) $ and observe that  $\chi_{\pi}$ is a bijection. Finally, recall that for a witness tree $\tau$ we denote by $L_j(\tau)$ the set of labels of the  nodes at distance $j$ from the root.

\begin{lemma}\label{trees_stable}
There is a bijection $\chi_i^{\pi} $ mapping $\mathcal{R}_i^{\pi}$ to $\mathcal{W}_i^{\pi}$ with the following property:  Fix $W \in \mathcal{R}_i^{\pi}$ and let  $(I_1 = \{i \}, I_2, \ldots, I_k )$  be the unique partitioning of $\mathrm{Rev}[W]$ guaranteed by Proposition~\ref{partitioning}. Then,  $ I_{j} = L_{j-1} \left(\chi_i^{\pi} \left( W \right) \right) $ for each $j \in [k]$ .
\end{lemma}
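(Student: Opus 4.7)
The plan is to define the bijection $\chi_i^\pi$ via the witness tree construction of Section~\ref{WTL} and verify it by exhibiting an explicit inverse. Given $W \in \mathcal{R}_i^\pi$ with $t=|W|$, first note that since $I_1 = \{i\}$ is a singleton and the initial element of $\mathrm{Rev}[W]$ lies in $I_1$, we have $w_t = i$. I will set $\chi_i^\pi(W) := \chi_\pi(\tau_W(t))$, breaking ties among deepest eligible ancestors during the construction by a fixed $\pi$-based rule so that the construction becomes deterministic.

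The main technical claim is the level identity $L_{j-1}(\tau_W(t)) = I_j$ for every $j \in [k]$, which I will prove by induction on $s$ over the segments processed. The base case $s=1$ is immediate since the root is labeled $w_t = i$ and $I_1 = \{i\}$. For the inductive step, assume that after inserting the indices of $I_1, \ldots, I_s$ the tree has levels exactly $I_1, \ldots, I_s$. Now consider the insertion of the indices of $I_{s+1}$, which are encountered in $\pi$-increasing order within $\mathrm{Rev}[W]$. For each such $w$, stability gives $w \in \Gamma(I_s)$, so some node at depth $s-1$ is an eligible ancestor; moreover the current tree has no nodes at depth $\ge s$, so this is the deepest eligible ancestor available. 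Since $I_{s+1}$ is an independent set, any members of $I_{s+1}$ attached before $w$ (now sitting at depth $s$) are not eligible parents for $w$. Hence every $w \in I_{s+1}$ attaches at depth $s$, giving $L_s = I_{s+1}$.

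For the inverse, define $\phi: \mathcal{W}_i^\pi \to \mathcal{R}_i^\pi$ by reading the labels of an ordered tree $\tau$ level-by-level from the deepest level up to the root, within each level in $\pi$-decreasing order. Then $\mathrm{Rev}[\phi(\tau)]$ enumerates the levels from the root outward with each level in $\pi$-increasing order. This sequence is $\pi$-stable because (i) each level is independent by Proposition~\ref{indep_levels}, (ii) the parent-child structure of the witness tree forces $L_{r+1} \subseteq \Gamma(L_r)$, and (iii) each segment is $\pi$-increasing by construction; the root set is $\{i\}$ by hypothesis on $\tau \in \mathcal{W}_i^\pi$. Thus $\phi(\tau) \in \mathcal{R}_i^\pi$, and by Proposition~\ref{partitioning} the stable partition of $\mathrm{Rev}[\phi(\tau)]$ is precisely the list of level label sets of $\tau$.

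To conclude, I will verify $\phi \circ \chi_i^\pi = \mathrm{id}$ and $\chi_i^\pi \circ \phi = \mathrm{id}$. The first direction follows directly from the level identity: $\phi(\chi_i^\pi(W))$ lists the $I_j$'s in $\pi$-decreasing order within each segment, which upon reversal recovers the $\pi$-stable enumeration of $\mathrm{Rev}[W]$ and thus $W$ itself. The second direction is more delicate: applying the witness-tree construction to $\phi(\tau)$ yields a tree with the correct level structure by the inductive argument above, but one must argue that the tie-breaking in the forward construction reproduces the specific parent-child edges present in $\tau$. I expect this last point to be the main obstacle; the natural approach is to pick a tie-breaking rule in the forward construction that mirrors the left-to-right ordering of siblings in $\chi_\pi(\tau)$ (for instance, assigning each $w \in I_{s+1}$, processed in $\pi$-increasing order, to the $\pi$-smallest eligible ancestor in $I_s$ whose child-slot budget in $\tau$ has not yet been exhausted), and then to verify that with this choice both compositions reduce to the identity on their respective domains.
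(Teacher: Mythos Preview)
Your definition of $\chi_i^\pi$ and the inductive argument for the level identity $L_{j-1}(\tau_W(t)) = I_j$ follow exactly the paper's approach; the paper phrases the latter more tersely by observing that the greedy segmentation algorithm of Proposition~\ref{partitioning} applied to $\mathrm{Rev}[W]$ makes the same sequence of decisions as the witness-tree construction of $\tau_W(t)$ (a new segment is opened precisely when the current element has a $\sim$-neighbor already placed, which is exactly when the construction pushes it to a deeper level). Your explicit inverse $\phi$ and the verification $\phi\circ\chi_i^\pi=\mathrm{id}$ also mirror the paper: from the tree, read off the level label sets to recover $(I_1,\dots,I_k)$, then $\pi$-sort each segment to recover $W$. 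This is exactly the injectivity argument the paper gives, and up to this point your proof is the same as the paper's, only more detailed.

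Where you go beyond the paper is in worrying about the other composition $\chi_i^\pi\circ\phi=\mathrm{id}$, and you are right that this is the delicate direction: two abstract witness trees in $\mathcal{W}_i^\pi$ can share identical level label sets while differing in their parent--child edges (for instance, root $a$; level~$1$ an independent pair $\{b,c\}\subseteq\Gamma(a)$; level~$2$ a single $d\in\Gamma(b)\cap\Gamma(c)$, which may hang under either $b$ or $c$). Any \emph{fixed} tie-breaking rule in the forward construction selects exactly one of these, so the other cannot lie in the image of $\chi_i^\pi$. Your proposed remedy---letting the tie-breaking consult ``the child-slot budget in $\tau$''---is circular: the forward map $\chi_i^\pi$ must be a function of $W$ alone and cannot peek at the target tree $\tau$. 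So the surjectivity gap you flag is genuine and is not closed by a tie-breaking convention; the paper's own proof likewise argues only the injective direction (recover $W$ from the level sets) and does not address surjectivity onto $\mathcal{W}_i^\pi$.
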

\begin{proof}
 Consider a witness sequence $W \in \mathcal{R}_i^{\pi}$ of length $t$. We define $\chi_i^{\pi}(W)  :=   \chi_{\pi} (\tau_{W}(t) ) $. That is, we map $W$ to the $\pi$-ordered  witness tree that is induced by applying the procedure that constructs witness trees to the final element of the sequence. Recall now the procedure in the proof of Proposition~\ref{partitioning}. The key observation is that the application of this procedure  to $\mathrm{Rev}[W]$ is ``equivalent" to the procedure that constructs $\tau_{W}(t)$, in the sense that the  decisions taken for partitioning $\mathrm{Rev}[W]$ to segments by the procedure of Proposition~\ref{partitioning} are identical to the decisions taken by the procedure that constructs $\tau_W(t)$  in order to form $L_j(\tau_{W}(t))$, $j \ge 0$. In particular, if $\phi_{ \mathrm{Rev}[W] } = (I_1 = \{i \}, I_2, \ldots, I_k )$ is the unique partitioning of $\mathrm{Rev}[W]$, then $ I_j =   L_{j-1} ( \tau_W(t) ) = L_{j-1} (\chi_i^{\pi}(W) ) $ for each $j \in [k]$.

It remains to show that $\chi_i^{\pi}$ is bijection. To see this, at first observe that from $\chi_i^{\pi}(W)$ one can uniquely reconstruct $\phi_{\mathrm{Rev}[W]  } $. Given $\phi_{\mathrm{Rev}[W]  } $ one can reconstruct $\mathrm{Rev}[W]$ (and, thus, $W$) by ordering each segment of $\phi_{\mathrm{Rev}[W]  } $ according to $\pi$.

\end{proof}

\subsubsection{Counting Witness Trees}

In our proofs we will need to bound the sum over all trees $\tau \in \mathcal{W}_i$ of the product of charges of the labels of the nodes of each tree $\tau$.  Fortunately, the method for doing that is well trodden by now (see for example ~\cite{MT,PegdenIndepen}). Here we show the following   lemma whose proof can be found in  Appendix~\ref{Misc}. Recall that $\mathcal{W}_i$ denotes the set of all possible witness trees with root that is labelled by $i$.   
\begin{lemma}\label{witness_trees_sum}
 If the cluster expansion condition is satisfied  then:
\begin{align*}
\sum_{ \tau \in \mathcal{W}_i  }  \prod_{ v \in V(\tau) } \gamma \left(  f_{(v  )}  \right) \le \psi_i \enspace.
\end{align*}
\end{lemma}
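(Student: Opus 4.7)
The plan is to mimic the classical Moser--Tardos/Pegden counting argument via a Galton--Watson branching process that produces labelled trees, and then to interpret the desired sum as a sum of probabilities which is automatically at most one.

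First, I will define a Galton--Watson process rooted at an isolated vertex with label $i$. For each vertex $v$ that is produced, with label $[v]=j$, I will independently sample a subset $S\subseteq \Gamma(j)$ belonging to $\mathrm{Ind}(\Gamma(j))$ and make its elements the labels of the children of $v$; the sampling probability of $S$ will be $\frac{\prod_{k\in S}\psi_k}{\zeta_j}$, where $\zeta_j \equiv \sum_{S\in \mathrm{Ind}(\Gamma(j))}\prod_{k\in S}\psi_k$. Every finite tree that this process can output is a valid witness tree rooted at $i$: indeed by Proposition~\ref{indep_levels} the children of any vertex in a tree in $\mathcal{W}_i$ must have labels forming an independent set in $\Gamma$ of the parent's label, and conversely the process is designed to respect exactly this structural constraint. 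Hence the process yields each $\tau\in\mathcal{W}_i$ with some probability $p_\tau$, and those probabilities sum to at most $1$ (some mass may escape to infinite trees).

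The next step is to compute $p_\tau$ explicitly. Multiplying the per-vertex contributions, and using the fact that each non-root vertex is counted once as a child,
\begin{equation*}
p_\tau \;=\; \prod_{v\in V(\tau)} \frac{\prod_{u:\,\text{child of }v}\psi_{[u]}}{\zeta_{[v]}} \;=\; \frac{1}{\psi_i}\prod_{v\in V(\tau)}\frac{\psi_{[v]}}{\zeta_{[v]}}\,.
\end{equation*}
Now I invoke the cluster expansion condition~\eqref{ClusterLLL} in the form $\gamma(f_{[v]})\,\zeta_{[v]}\le \psi_{[v]}$, which gives $\gamma(f_{[v]})\le \psi_{[v]}/\zeta_{[v]}$ at every vertex $v$. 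Taking the product over $V(\tau)$ and substituting yields
\begin{equation*}
\prod_{v\in V(\tau)} \gamma(f_{[v]}) \;\le\; \prod_{v\in V(\tau)} \frac{\psi_{[v]}}{\zeta_{[v]}} \;=\; \psi_i\, p_\tau\,.
\end{equation*}
Summing over $\tau\in \mathcal{W}_i$ and using $\sum_{\tau} p_\tau \le 1$ gives the bound $\psi_i$.

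I expect the main obstacle to be the bookkeeping step of showing that the branching process covers $\mathcal{W}_i$, namely justifying that restricting the children of each vertex to independent sets is both necessary (by Proposition~\ref{indep_levels}) and sufficient (so that each $\tau\in \mathcal{W}_i$ appears with positive probability $p_\tau$). The rest is a clean algebraic manipulation: unfolding the telescoping product of $\psi_{[u]}$ terms, applying the cluster expansion inequality vertex by vertex, and using that the $p_\tau$ are sub-probabilities of a genuine probabilistic process.
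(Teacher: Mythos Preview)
Your proposal is correct and follows essentially the same approach as the paper. The paper proves a slightly more general lemma (Lemma~\ref{counting}) with an arbitrary family $\mathrm{List}(j)$ in place of $\mathrm{Ind}(\Gamma(j))$ and then specializes, but the branching-process construction, the telescoping computation of $p_\tau = \frac{1}{\psi_i}\prod_{v}\psi_{[v]}/\zeta_{[v]}$, and the vertexwise application of the cluster expansion inequality are identical to yours.
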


We also show the  following lemma that can  be used whenever the Shearer's condition applies.
\begin{lemma}\label{ShearerTreeCounting}
If the Shearer's condition is satisfied then:
\begin{align*}
\sum_{ \tau \in \mathcal{W}_i  }  \prod_{ v \in V(\tau) } \gamma \left(  f_{ (v)  }  \right) \le \frac{ q_{  \{ i \} } (\gamma) }{q_{\emptyset} (\gamma) }   \enspace.
\end{align*}
\end{lemma}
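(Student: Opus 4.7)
The plan is to convert the sum over witness trees into a sum over certain stable set sequences via the machinery already set up in Section~\ref{stable_set_sequences}, and then to invoke the standard Kolipaka--Szegedy identity relating stable set sequences to Shearer's polynomial.

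First, fix an arbitrary ordering $\pi$ on $[m]$ and recall that the map $\chi_{\pi}:\mathcal{W}_i\to \mathcal{W}_i^{\pi}$ (which orders the children of each node of $\tau$ increasingly according to $\pi$) is a bijection that preserves the multiset of labels on $V(\tau)$. In particular, $\prod_{v\in V(\tau)}\gamma(f_{[v]})$ is invariant under $\chi_{\pi}$, so the sum on the left-hand side of the lemma can be rewritten as a sum over $\tau'\in\mathcal{W}_i^{\pi}$. Then I would apply Lemma~\ref{trees_stable}: the map $\chi_i^{\pi}:\mathcal{R}_i^{\pi}\to \mathcal{W}_i^{\pi}$ is a bijection, and if $W\in\mathcal{R}_i^{\pi}$ has reverse-partition $(I_1=\{i\},I_2,\ldots,I_k)$ then $I_j=L_{j-1}(\chi_i^{\pi}(W))$ for every $j\in[k]$. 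Combining this with Proposition~\ref{indep_levels}, which ensures that each $L_{j-1}$ is a (necessarily distinctly labelled) independent set, we obtain
\[
\prod_{v\in V(\tau')}\gamma(f_{[v]})\;=\;\prod_{j=0}^{\mathrm{depth}(\tau')}\gamma_{L_j(\tau')}\;=\;\prod_{j=1}^{k}\gamma_{I_j}.
\]

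Next, by Proposition~\ref{partitioning} the partition of $\mathrm{Rev}[W]$ into stable segments is unique, and the $\pi$-stability requirement forces each segment to list the elements of $I_j$ in $\pi$-order. Therefore the assignment $W\mapsto (I_1,\ldots,I_k)$ is itself a bijection between $\mathcal{R}_i^{\pi}$ and the set $\mathcal{S}_i$ of stable sequences $(I_1,\ldots,I_k)$ with $I_1=\{i\}$. Putting this together,
\[
\sum_{\tau\in\mathcal{W}_i}\prod_{v\in V(\tau)}\gamma(f_{[v]})\;=\;\sum_{(I_1,\ldots,I_k)\in\mathcal{S}_i}\;\prod_{j=1}^{k}\gamma_{I_j}.
\]

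The final step is to recognize the right-hand side as the standard generating function for stable sequences rooted at $\{i\}$, for which the Kolipaka--Szegedy identity (see~\cite{szege_meet}, and its transcription in~\cite{HV,Kolmofocs} for the general setting) gives
\[
\sum_{(I_1,\ldots,I_k)\in\mathcal{S}_i}\prod_{j=1}^{k}\gamma_{I_j}\;=\;\frac{q_{\{i\}}(\gamma)}{q_{\emptyset}(\gamma)},
\]
whenever Shearer's condition $q_S(\gamma)\ge 0$ for all $S$ and $q_{\emptyset}(\gamma)>0$ holds. This identity is precisely what makes Shearer's condition sufficient in this context, since it follows from the recursion $q_{S}(\gamma)=q_{S\cup\{j\}}(\gamma)+\gamma_j\,q_{\Gamma(j)\setminus S}(\gamma)$ for $j\in \Gamma(S)$, unrolled along the independent set $I_2\subseteq \Gamma(\{i\})$. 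Substituting above yields the stated bound.

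The only nontrivial step is the last one: verifying that the induced sum indeed matches $q_{\{i\}}/q_{\emptyset}$ rather than an artificial variant. This is ultimately a consequence of a well known identity for Shearer's polynomial (as used in~\cite{szege_meet,HV,Kolmofocs}); the contribution of the present lemma is the combinatorial reduction from witness trees to stable sequences, which is where commutativity (via the form of the causality relation used to build $\tau$) is implicitly exploited.
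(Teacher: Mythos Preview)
Your proposal is correct and follows essentially the same route as the paper: convert the sum over $\mathcal{W}_i$ to one over $\mathcal{R}_i^\pi$ via $\chi_\pi$ and Lemma~\ref{trees_stable}, pass to stable set sequences rooted at $\{i\}$, and then invoke the Kolipaka--Szegedy identity (the paper cites Theorem~14 of~\cite{szege_meet} and Lemmas~5.26--5.27 of~\cite{HV}) to obtain $q_{\{i\}}/q_\emptyset$. The only difference is that the paper records the map $W\mapsto\phi_{\mathrm{Rev}[W]}$ merely as an injection into $\mathrm{Stab}_i$ (yielding $\le$), whereas you argue it is a bijection; either suffices for the lemma.
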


\begin{proof}
We first observe that due to Lemma~\ref{trees_stable} we have that  
\begin{align*}
\sum_{ \tau \in \mathcal{W}_i  }  \prod_{ v \in V(\tau) } \gamma \left(  f_{ (v  )}  \right) =     \sum_{ \tau \in \mathcal{W}_i^{\pi}  }  \prod_{ v \in V(\tau) } \gamma \left(  f_{ (v  )}  \right) =  \sum_{ W \in \mathcal{R}_i^{\pi}  }  \prod_{ w \in W } \gamma \left(  f_{w}  \right)  \enspace.
\end{align*}

Now let $\mathrm{Stab}_i$ denote the set of stable set sequences whose first segment is $\{i \}$ and also every segment is non-empty. For $\phi = (I_1, I_2, \ldots, I_k )  \in \mathrm{Stab}_i$ define $\gamma_{\phi}  = \prod_{i=1}^{k} \prod_{i \in I} \gamma(f_{i} )  $.
Observe  that there is a natural injection from  $\mathcal{R}_i^{\pi}$ to $\mathrm{Stab}_i$  which maps each sequence $W \in \mathcal{R}_i^{\pi}$ to $\phi_{\mathrm{Rev}[W] } $.
This is because given $\phi_{\mathrm{Rev}[W] } $ one can reconstruct $\mathrm{Rev}[W]$ (and, thus, $W$) by ordering each segment of $\phi_{\mathrm{Rev}[W]  } $ according to $\pi$. The latter observation implies that:
\begin{align*}
\sum_{ \tau \in \mathcal{W}_i  }  \prod_{ v \in V(\tau) } \gamma \left(  f_{ (v  )}  \right) \le \sum_{ \phi \in \mathrm{Stab}_i } \gamma_{\phi}   =  \frac{ q_{ \{ i \} } (\gamma) }{ q_{\emptyset}(\gamma)}   \enspace, 
\end{align*}
where the proof of the last inequality  can be found in Theorem 14 of~\cite{szege_meet} and in Lemmas 5.26, 5.27 of~\cite{HV}. 
\end{proof}
\begin{remark}
We note that if we have assumed a stronger ``cluster expansion condition" (namely, in~\eqref{ClusterLLL} we have $\Gamma(i) \cup \{ i \}$) instead of $\Gamma(i)$, then Corollary~\ref{witness_trees_sum} could have also been shown as an immediate application of Lemma~\ref{ShearerTreeCounting}, since it is  known (\cite{Bissacot,HV,Kolmofocs}) that,  in this case,  for every $i \in [m]$ we have that $\frac{q_{ \{  i \} }(\gamma)}{q_{\emptyset}(\gamma) } \le \psi_i  $. 
\end{remark}
\subsection{Proof of Theorem~\ref{main}} 

We first prove Theorem~\ref{main}. The first part follows by observing that if $W$ is the witness sequence corresponding to the trajectory of the algorithm, then $N_i$ is the number of occurrences of flaw $f_i$ in $W$, and according to Proposition~\ref{uniqueness}, also the number of distinct witness trees occurring that have their root labeled $i$. Therefore,  one can bound the expectation of $T_i$ by summing the bounds in Lemma~\ref{WitnessTreeLemma}. In particular, Lemma~\ref{witness_trees_sum} concludes the proof of the first part.

To see  the second part of Theorem~\ref{main}, consider the new set of flaws  $F' = F \cup \{f_{m+1} \}$, where $f_{m+1} = E$, as well as a ``truncated" commutative extension  $\mathcal{A}'$ of $\mathcal{A}$ with  the following properties:

\begin{enumerate}[(i)]
\item For each state $\sigma \notin f_{m+1} $ algorithm $\mathcal{A}'$  invokes $\mathcal{A}$ to choose its next state;
\item $\gamma(E) := \gamma_{\mathcal{A}'}(f_{m+1} )$;
\item $f_{m+1}$ is always of the highest priority:  when at a state $\sigma \in f_{m+1} $, $\mathcal{A'}$ chooses to address $f_{m+1}$;
\item $\mathcal{A}'$ stops after it addresses $f_{m+1}$ for the first time.
\end{enumerate}

\noindent
By coupling $\mathcal{A}$ and $\mathcal{A}'$ we see that  $\Pr_{ \mathcal{A} }[ E ] =  \Pr_{\mathcal{A}'}[f_{m+1} ] $. Let $\mathcal{W}_{E}$ be the set of witness trees that can occur in an execution of $\mathcal{A}'$ and whose root is labelled  by $m+1$. 
Notice that, due to property (iv) of $\mathcal{A}'$, every tree  $\tau  \in \mathcal{W}_E$ contains exactly one node (the root) labelled by $m+1$, while every other node is labelled by elements in $[m]$. Furthermore,  the set of labels of the children of the root of $\tau$ is an element of $\mathrm{Ind}( \Gamma(E) ) $. Finally, if  $v$ is a node that corresponds to a child of the root in $\tau$, then the subtree $\tau_v$  that is rooted at $v$ is an element of $\mathcal{W}_{(v)}$.
Using Theorem~\ref{WitnessTreeLemma} and the fact that $\mathcal{A}'$ is commutative we get:
\begin{align*}
\Pr_{ \mathcal{A} }[ E ]  \le  \sum_{ \tau \in \mathcal{W}_E } \Pr_{\mathcal{A}'}[\tau]  \le   \lambda_{\mathrm{init}}  \gamma(E)   \sum_{ S \in \mathrm{Ind}(\Gamma(E) )  }  \left(  \prod_{j \in S}  \sum_{ \tau \in \mathcal{W}_j} \prod_{v \in \tau} \gamma((v))  \right)    \le  \lambda_{\mathrm{init}}  \gamma(E) \sum_{ S \in \mathrm{Ind}(\Gamma(E) )  } \prod_{j \in S}  \psi_j   \enspace,
\end{align*}
where the last equality follows from Lemma~\ref{witness_trees_sum}. The proof of Theorem~\ref{main} in the Shearer's condition regime is  identical,  where instead of Lemma~\ref{witness_trees_sum} we use Lemma~\ref{ShearerTreeCounting}. 
 \subsection{Proof of Lemma~\ref{WitnessTreeLemma}}\label{WTL_Proof_Sec}

Throughout the proof, we will use ideas and definitions  from~\cite{Kolmofocs}.    We also  note that we will  assume w.l.o.g. that  algorithm $\mathcal{A}$ follows a deterministic  flaw choice strategy. This is because randomized flaw choice strategies can equivalently be interpreted as convex combination of deterministic ones (and therefore,  randomized strategies can be seen as taking expectation over deterministic ones).

For a trajectory $\Sigma$ of length $t$ we define 
\begin{align*}
p(\Sigma) = \lambda_{\mathrm{init}} \prod_{ i =1  }^t \rho_{w_i}(\sigma_i, \sigma_{i+1} ) 
\end{align*}
and notice that $\Pr[\Sigma] \le p(\Sigma)$. Furthermore, we say that a trajectory $\Sigma'$ is a proper prefix of $\Sigma$ if $\Sigma'$ is a prefix of $\Sigma$ and $\Sigma \ne \Sigma'$.  
\begin{definition}[\cite{Kolmofocs}]\label{valid_traj}
A set $\mathcal{X}$ of trajectories of the algorithm will be called  \emph{valid} if (i)  all trajectories in $\mathcal{X}$ follow the same deterministic flaw choice strategy (not necessarily the same used by $\mathcal{A}$); and (ii) for any $\Sigma ,\Sigma' \in \mathcal{X}$  trajectory $\Sigma$ is not a proper prefix of $\Sigma'$.
\end{definition}
 \begin{lemma}[\cite{Kolmofocs}] \label{step_prefixes}
Consider a witness sequence $W =  (w_1,  \ldots, w_t)  $ and a valid set of trajectories $\mathcal{X}$ such that $W$ is a prefix of $W(\Sigma)$ for every $ \Sigma \in \mathcal{X}$. We have that
\begin{align*}
\sum_{ \Sigma  \in \mathcal{X}} p(  \Sigma )  \le  \lambda_{\mathrm{init}} \prod_{i=1}^{t} \gamma( f_{w_i}  )    \enspace.
\end{align*}
\end{lemma}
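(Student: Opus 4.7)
The plan is to split each trajectory $\Sigma \in \mathcal{X}$ at step $t$ and bound the two halves separately: a prefix-free estimate for the ``tails'' after $W$, and a telescoping application of the charge inequality for the prefix itself. No induction on $t$ is needed; a direct two-stage argument suffices.

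For the tail estimate, I would group $\mathcal{X}$ by the state sequence $\vec\sigma = (\sigma_1,\ldots,\sigma_{t+1})$ induced by $W$, collecting into $\mathcal{X}_{\vec\sigma}$ the trajectories whose first $t$ transitions are exactly $(\sigma_i, w_i, \sigma_{i+1})_{i=1}^{t}$. Conditions (i) and (ii) of Definition~\ref{valid_traj} imply that the tails in $\mathcal{X}_{\vec\sigma}$ form a prefix-free family of walks starting at $\sigma_{t+1}$, driven purely by the kernels $\rho_j$ under a fixed deterministic flaw-choice strategy. A standard cylinder-set (Kraft-type) argument then yields $\sum_{\Sigma \in \mathcal{X}_{\vec\sigma}} \prod_{j=t+1}^{|\Sigma|} \rho_{w_j}(\sigma_j,\sigma_{j+1}) \le 1$. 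Plugging this in collapses $\sum_{\Sigma \in \mathcal{X}} p(\Sigma)$ to
$$\sum_{\Sigma \in \mathcal{X}} p(\Sigma) \;\le\; \lambda_{\mathrm{init}} \sum_{\vec\sigma} \mu(\sigma_1) \prod_{i=1}^{t} \rho_{w_i}(\sigma_i,\sigma_{i+1}),$$
where I absorb the discrepancy between $\theta$ and $\mu$ at the initial state via $\theta(\sigma_1) \le \lambda_{\mathrm{init}}\,\mu(\sigma_1)$.

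For the prefix estimate, I would relax the sum over ``$\vec\sigma$ consistent with the flaw-choice strategy'' to the sum over all state sequences with $\sigma_i \in f_{w_i}$ for every $i$; this only enlarges the sum. Then the defining charge inequality
$$\sum_{\sigma \in f_i} \mu(\sigma)\,\rho_i(\sigma,\sigma') \;\le\; \gamma(f_i)\,\mu(\sigma')$$
telescopes through the state sequence: summing out $\sigma_1$ replaces $\mu(\sigma_1)\rho_{w_1}(\sigma_1,\sigma_2)$ by $\gamma(f_{w_1})\mu(\sigma_2)$; summing out $\sigma_2$ produces a further $\gamma(f_{w_2})\mu(\sigma_3)$; and so on, leaving $\prod_{i=1}^{t}\gamma(f_{w_i}) \sum_{\sigma_{t+1}} \mu(\sigma_{t+1}) \le \prod_{i=1}^{t} \gamma(f_{w_i})$. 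Multiplying by $\lambda_{\mathrm{init}}$ recovers the stated bound.

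The only subtle point is the tail step, where one must carefully justify that the tails within each group form a legitimate prefix-free Markov continuation so that their $\rho$-weights sum to at most one; this is where both conditions of Definition~\ref{valid_traj} are indispensable, and in particular where the ``no proper prefix'' clause is used. Note that commutativity of $(D,F,\sim,\rho)$ plays no role in this lemma; it will enter only later in the proof of Lemma~\ref{WitnessTreeLemma}, where the $\mathrm{Swap}$ map of Definition~\ref{commutativity} is used to reorder occurrences of a given witness tree into witness sequences that live inside valid sets, so that the present estimate can be applied term-by-term.
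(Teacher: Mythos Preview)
The paper does not prove this lemma itself; it is quoted from Kolmogorov~\cite{Kolmofocs}, and the only hint given (in Section~\ref{CommutativeBacktracking}) is that the original argument is inductive, in the same spirit as the proof of Lemma~\ref{witness_forest_upper_bound}: one maintains the invariant that the total mass of (prefixes of) trajectories in $\mathcal X$ reaching a given state $\tau$ after $s\le t$ steps is at most $\lambda_{\mathrm{init}}\,\mu(\tau)\prod_{i\le s}\gamma(f_{w_i})$, and each inductive step is precisely your charge-telescoping. Your direct two-stage argument is therefore not a genuinely different route but a reorganization of the same proof, with the Kraft/prefix-free bound playing the role of ``summing out the future'' and the telescoping playing the role of the induction.

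That said, there is a real gap at the point where you insert the factor $\mu(\sigma_1)$. With the paper's definition $p(\Sigma)=\lambda_{\mathrm{init}}\prod_i\rho_{w_i}(\sigma_i,\sigma_{i+1})$ the quantity $\theta(\sigma_1)$ never appears, so there is nothing to ``absorb'' via $\theta(\sigma_1)\le\lambda_{\mathrm{init}}\mu(\sigma_1)$; after your tail bound one is left with $\lambda_{\mathrm{init}}\sum_{\vec\sigma}\prod_i\rho_{w_i}(\sigma_i,\sigma_{i+1})$, which carries no weight on $\sigma_1$ and can exceed the claimed bound. (Take $\Omega=\{a,b\}$, $\mu=\theta$ uniform, a single flaw $f_1=\Omega$ with $\rho_1\equiv\tfrac12$, $W=(1)$, and let $\mathcal X$ consist of all four length-one trajectories: then $\sum_{\Sigma}p(\Sigma)=2>1=\lambda_{\mathrm{init}}\gamma(f_1)$.) The correct object to track is $\widetilde p(\Sigma)=\lambda_{\mathrm{init}}\,\mu(\sigma_1)\prod_i\rho_{w_i}(\sigma_i,\sigma_{i+1})$ (equivalently, keep $\theta(\sigma_1)$ and bound it by $\lambda_{\mathrm{init}}\mu(\sigma_1)$): it still dominates $\Pr[\Sigma]$, it is preserved by swaps since $\mathrm{Swap}$ fixes $\sigma_1$, and with it both your telescoping step and the inductive argument go through verbatim. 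The downstream use in~\eqref{prwto_vima} is unaffected by this change.
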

A  \emph{swap}  is the operation of transforming a trajectory  $\Sigma = \ldots \sigma_1 \xrightarrow{i} \sigma_2 \xrightarrow{j} \sigma_3 \ldots $, with $i \nsim j$, to a trajectory 
$\Sigma' = \ldots \sigma_1 \xrightarrow{j} \sigma_2 ' \xrightarrow{i} \sigma_3 \ldots $, where $   \sigma_1 \xrightarrow{j} \sigma_2 ' \xrightarrow{i} \sigma_3 = \mathrm{Swap}( \sigma_1 \xrightarrow{i} \sigma_2 \xrightarrow{j} \sigma_3 )$. A mapping $\Phi$ on a set of trajectories will be called a  \emph{swapping mapping} if it operates by applying a sequence of swaps.

The main idea now will be to construct a  swapping mapping whose goal will be to transform trajectories of the algorithm to a  form  that satisfies certain properties   by applying swaps .

For a trajectory $\Sigma$  in which  a tree $\tau \in \mathcal{W}_i$ occurs, we denote by  $W^{\tau}_\Sigma$ the  prefix of $W(\Sigma)$ up to the step that corresponds to the root of $\tau$ (observe that Proposition~\ref{uniqueness} mandates that there exists a unique such step). Notice that, since $\tau \in \mathcal{W}_i$, the algorithm addresses flaw $f_i$ at this step, and thus the final element of  $W^{\tau}_{\Sigma}$ is $\{i\}$. Finally, recall the definitions of $\mathcal{R}_i^{\pi}$, $\chi_{\pi}$ and  $\chi_i^{\pi}$.  
\begin{lemma}\label{step:swapping}
Fix a witness tree $\tau \in \mathcal{W}_i$ and let $\mathcal{X}^{\tau}$ be a valid set of trajectories in which $\tau$ occurs. 
If \\ $\mathcal{A} = (F,\sim,\rho)$ is commutative then there exists a set of trajectories $\mathcal{X}^{\tau}_\pi$ and a swapping mapping $\Phi^{\tau}:\mathcal{X}^{\tau}\rightarrow \mathcal{X}^{\tau}_\pi$ which is a bijection
such that \\
(a) for any $\Sigma\in \mathcal{X}^{\tau}_\pi$  we have that $W^{\tau}_\Sigma $  is the unique witness sequence in $\mathcal{R}_i^{\pi}$   such that $\chi_i^{\pi}(W_{\Sigma}^{\tau}) =  \chi_{\pi} (\tau) $; \\
(b) for any witness sequence $W$ the set $\{\Sigma\in \mathcal{X}^{\tau}_\pi\:|\:\mathrm{Rev}[W^{\tau}_\Sigma]=W\}$ is valid.
\end{lemma}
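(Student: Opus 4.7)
The plan is to define $\Phi^\tau$ as a canonical sequence of swap operations applied to each $\Sigma \in \mathcal{X}^\tau$, transforming its prefix $W^\tau_\Sigma$ into a fixed target $W^\star \in \mathcal{R}_i^\pi$. By Lemma~\ref{trees_stable}, $\chi_i^\pi$ is a bijection, so there is a unique $W^\star$ with $\chi_i^\pi(W^\star) = \chi_\pi(\tau)$; reading its reverse, $\mathrm{Rev}[W^\star]$ consists of segments $I_1 = \{i\}, I_2, \ldots, I_k$ corresponding to the levels $L_0(\tau), \ldots, L_{k-1}(\tau)$, each internally $\pi$-ordered. Equivalently, $W^\star$ lists the nodes of $\tau$ with deepest level first (in reverse $\pi$-order within each level) and the root last. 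This target depends only on $\tau$ and $\pi$, not on $\Sigma$.

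For each $\Sigma \in \mathcal{X}^\tau$, I would first identify the bijection between tree-contributing steps of $W^\tau_\Sigma$ and the vertices of $\tau$ induced by the witness-tree construction (with a fixed deterministic tie-breaking rule). Since the construction processes $W^\tau_\Sigma$ from right to left, attaching each new step to the currently deepest eligible node, if $u$ is an ancestor of $v$ in $\tau$ then the step mapped to $u$ has larger index in $W^\tau_\Sigma$ than the step mapped to $v$. I then migrate each tree step to its designated target position via adjacent transpositions. Legality of each swap, i.e. that the two labels being exchanged are non-causal, rests on three observations: (i) within a level $L_j(\tau)$, the labels form an independent set in the causality graph by Proposition~\ref{indep_levels}, so intra-level reorderings are always legal; (ii) an out-of-order pair across level boundaries must be non-causal, since otherwise the deeper-level step would have been attached as a descendant of the shallower one, contradicting their placement in incomparable positions of $\tau$; (iii) a step of $W^\tau_\Sigma$ that contributed no node to $\tau$ is non-causal with every tree-label present at the moment of its processing, so it can be pushed past adjacent tree-step entries toward the beginning of the prefix.

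Each individual $\mathrm{Swap}$ is injective by Definition~\ref{commutativity}(1), hence $\Phi^\tau$ is injective as a composition of injective maps. Property (a) then holds by design, since $W^\tau_{\Phi^\tau(\Sigma)} = W^\star$. For property (b), the only witness sequence $W$ producing a nonempty set in the statement is $W = \mathrm{Rev}[W^\star]$, whose preimage equals all of $\mathcal{X}^\tau_\pi$; validity is inherited from $\mathcal{X}^\tau$ because swaps are length-preserving and may be arranged to leave the post-root portion of each trajectory unchanged, so two distinct elements of $\mathcal{X}^\tau_\pi$ share a common length-$|V(\tau)|$ prefix but differ in their suffixes exactly as they did in $\mathcal{X}^\tau$. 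The main obstacle I anticipate is a rigorous verification of point (ii) above, namely that at every intermediate stage the particular adjacent pair we wish to transpose is indeed non-causal. This requires maintaining a careful invariant describing how the ``attach-to-deepest-eligible'' rule of the tree-construction interacts with partially sorted sequences, and is precisely where commutativity, as opposed to the variable-based dependency structure used by Moser--Tardos, must be exploited: in non-product spaces one cannot reason via disjoint variable supports, and one genuinely needs the explicit probability-preserving $\mathrm{Swap}$ map guaranteed by Definition~\ref{commutativity}.
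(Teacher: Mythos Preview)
Your overall strategy matches the paper's: identify the canonical target $W^\star=(\chi_i^\pi)^{-1}(\chi_\pi(\tau))$ and reach it by repeated swaps. However, two of the steps you treat as routine contain genuine gaps, while the step you flag as the main obstacle (your point~(ii)) is comparatively straightforward.

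\medskip
\textbf{Injectivity.} The argument ``each $\mathrm{Swap}$ is injective, hence $\Phi^\tau$ is injective as a composition of injective maps'' does not work. The sequence of swaps you apply to $\Sigma$ depends on $\Sigma$ itself (on where its tree-steps and non-tree steps sit), so $\Phi^\tau$ is not one fixed composition applied uniformly to $\mathcal{X}^\tau$; it is a family of \emph{different} compositions, one per input. Nothing in your argument rules out $\Phi^\tau(\Sigma_1)=\Phi^\tau(\Sigma_2)$ for $\Sigma_1\neq\Sigma_2$ reached via different swap sequences. The paper resolves this by processing all trajectories \emph{synchronously}: at each round it locates the globally rightmost ``swappable pair'' position $k=\max_\Sigma k(\Sigma)$ and swaps at position $k$ in every trajectory that has a swappable pair there. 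Lemma~\ref{Kolmo28} then proves that one such round is injective, using crucially that whenever a swap is performed it is at the \emph{same} position across trajectories.

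\medskip
\textbf{Validity.} Your claim that swaps ``may be arranged to leave the post-root portion of each trajectory unchanged'' is false. The prefix $W^\tau_\Sigma$ has length $t^*(\Sigma)\ge |V(\tau)|$ and in general contains non-tree steps; these must be evacuated past the root for $W^\star$ to become the prefix, so the suffix necessarily changes (and $t^*$ itself varies with $\Sigma$). Relatedly, your point~(iii) has the direction reversed: a non-tree step $w_j$ is non-causal with the tree labels \emph{already present} when it is processed, i.e.\ those at indices $>j$, so it can be swapped \emph{rightward} past them, not toward the beginning. More importantly, ``valid'' in Definition~\ref{valid_traj} requires that all trajectories follow a \emph{common deterministic flaw-choice strategy}; length preservation and agreement on the first $|V(\tau)|$ steps say nothing about this for the (modified) suffixes. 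This is again exactly what Lemma~\ref{Kolmo28} establishes for the synchronized algorithm, via a four-case analysis on where the swap occurs relative to the first point of divergence between two trajectories.

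\medskip
In short, the delicate content of Lemma~\ref{step:swapping} is not the combinatorics of which individual swaps are legal, but designing the swap \emph{schedule} so that injectivity and validity are preserved \emph{across} $\mathcal{X}^\tau$. The paper's device of always swapping at the globally rightmost swappable position, together with Lemmas~\ref{Kolmo27} and~\ref{Kolmo28}, is what carries this through.
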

We prove Lemma~\ref{step:swapping} in Section~\ref{main_proof}.
To see  Theorem~\ref{WitnessTreeLemma}, consider  a witness tree $\tau \in \mathcal{W}_i$, and let $\mathcal{Y}^{\tau}$ be the set of all  trajectories that $\mathcal{A}$ may follow in which $\tau$ occurs. Now remove from $\mathcal{Y}^{\tau}$ any trajectory $\Sigma$ for which there exists a trajectory $\Sigma'$ such that $\Sigma$ is a proper prefix of $\Sigma'$ to get $\mathcal{X}^{\tau}$. Clearly, this is a valid set and so  recalling that $\chi_{\pi}$ is a bijection and  applying Lemma~\ref{step:swapping}  we have that:
\begin{align}\label{prwto_vima}
\Pr[ \tau ]  = \sum_{ \Sigma \in \mathcal{X}^{\tau} }  \Pr [ \Sigma ]    \le       \sum_{ \Sigma \in \mathcal{X}^{\tau} }  p( \Sigma )           =  \sum_{ \Sigma \in \mathcal{X}_{\pi}^{\tau} }  p (\Sigma )  \enspace,
\end{align}
 where to get the second equality we use  the second requirement of Definition~\ref{commutativity}.  Lemma~\ref{step:swapping} further implies that for every trajectory $\Sigma \in \mathcal{X}_{\pi}^{\tau}$ we have that $W_{\Sigma}^{\tau}$ is the (unique) witness sequence in $\mathcal{R}_i^{\pi}$  such that $\chi_i^{\pi}(W_{\Sigma}^{\tau})  = \chi_{\pi}(\tau)$, i.e.,  $ W_{\Sigma}^{\tau}     =  ( \chi_i^{\pi} )^{-1} \left( \chi_{\pi}(\tau) \right)  $ . This means that the witnesses of the trajectories in $\mathcal{X}_{\pi}^{\tau}$ have $W:= ( \chi_i^{\pi} )^{-1} \left( \chi_{\pi}(\tau) \right) $ as a common prefix. Since  part $(b)$ of Lemma~\ref{step:swapping} implies that $\mathcal{X}_{\pi}^{\tau}$ is valid, applying  Lemma~\ref{step_prefixes} we get:
\begin{align}
 \sum_{\Sigma \in  \mathcal{X}_{\pi}^{\tau}   } p ( \Sigma )  \le \lambda_{\mathrm{init}} \prod_{w \in W} \gamma( f_{ w} ) =\lambda_{\mathrm{init}}  \prod_{v \in V(\tau)} \gamma( f_{ (v)} )   \enspace \enspace,
\end{align}
where the second inequality follows from the fact that $\chi_{i}^{\pi}( W) = \chi_{\pi}( \tau )$ and $V(\tau) = V( \chi_{\pi}(\tau) ) $, concluding the proof.

\subsection{Proof of Lemma~\ref{step:swapping}}\label{main_proof}
Our proof builds on  the proof of Theorem 19 in~\cite{Kolmofocs}.  We will be denoting witness sequences  $W = (w_1, w_2, \ldots, w_t)$ as a sequence of \emph{named indices of flaws} $W =  (\mathbf{w}_1, \ldots, \mathbf{w}_t)$ where $\mathbf{w}_j = (w_j,n_j)$ and $n_j = | \{ k \in [j] \mid w_k = w_j  \} | \ge 1$ is the number of occurrences of  $w_j$ in the length-$j$  prefix of $W$. Note that a named index $\mathbf{w}$ cannot appear twice in a sequence $W$. Finally, if $\mathbf{w} $ is a named index of flaw we denote by $w$ (that is,  without bold font) the flaw index that is associated with it. 

For a trajectory $\Sigma$ such that $W(\Sigma)  =  ( \mathbf{w}_1, \ldots, \mathbf{w}_t )$ we define a directed acyclic graph $\mathbf{G}(\Sigma) = ( \mathbf{V}(\Sigma), \mathbf{E}(\Sigma)) $ where $\mathbf{V}(\Sigma) = \{ \mathbf{w}_1, \ldots, \mathbf{w}_t \} $ and $\mathbf{E}(\Sigma) =  \{ \left( \mathbf{w}_j, \mathbf{w}_k \right)  \text{ s.t. }  w_j \sim w_k  \text{ and } j< k $  \}. This means that we have an edge from a named  flaw $\mathbf{w}_i$ to another flaw $\mathbf{w}_j$ whenever their corresponding flaw indices are related according to $\sim$ and  $\mathbf{w}_j$ occurs in $\Sigma$ before $\mathbf{w}_k$.

By Proposition~\ref{uniqueness}, for any trajectory $\Sigma$ in which $\tau$ occurs there is a unique step $t^* = t^*(\Sigma)$ such that $\tau_{W(\Sigma)}(t^*) = \tau$. For such a trajectory $\Sigma$, let $\mathbf{Q}(\Sigma) \subseteq \mathbf{V}(\Sigma)$ be the set of flaws from which the node $\mathbf{w}_{t^*}$ can be reached in $\mathbf{G}(\Sigma)$, where $\mathbf{w}_{t^*}$ is the named flaw index that corresponds to the step $t^*$. Notice that $w_{t^*} = i$ (since $\tau \in \mathcal{W}_i$). For $\mathbf{w} \in \mathbf{Q}(\Sigma)$ let $d(\mathbf{w} ) $ be the length of the longest path from $\mathbf{w}$ to $\mathbf{w}_{t^*}$ in $\mathbf{G}(\Sigma)$ plus one. For example, $d(\mathbf{w}_{t^*} ) = 1$.

Let $Q(\Sigma)$ denote the sequence consisting of the named flaws in $\mathbf{Q}(\Sigma)$ listed in the order they appear in $\Sigma$. The idea is to repeatedly apply the operation $\mathrm{Swap}$  to $\Sigma$ so that we reach a trajectory $\Sigma'$ that has a permutation $Q_{\pi}(\Sigma)$ of $Q(\Sigma)$  as a prefix. In particular, we will show that $Q_{\pi}(\Sigma) \in \mathcal{R}_i^{\pi}$ and $ \chi_{i}^{\pi}( Q_{\pi}(\Sigma) ) = \chi_{\pi}( \tau)$.

To that end,  for an integer $r\ge 1$ define $\mathbf{I}_r = \{ \mathbf{w } \in \mathbf{Q}(\Sigma) \mid d( \mathbf{w} ) = r  \}$, and let $Q_r$ be the sequence   consisting of the named flaw indices in $\mathbf{I}_r$ sorted in decreasing order with respect to $\pi$. Then, we define $Q_{\pi}( \Sigma)  =( Q_s ,\ldots, Q_1)  $ where $s = \max \{ d(\mathbf{w})  \mid \mathbf{w} \in \mathbf{Q}(\Sigma)  \} $.
 \begin{lemma}
 $Q_{\pi}(\Sigma) \in \mathcal{R}_i^{\pi}$ and $ \chi_{i}^{\pi}( Q_{\pi}(\Sigma) ) =  \chi_{\pi} (\tau) $.
\end{lemma}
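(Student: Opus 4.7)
The plan is to verify the two conditions separately. For the first, $Q_{\pi}(\Sigma) \in \mathcal{R}_i^{\pi}$, I need to check that $\mathrm{Rev}[Q_{\pi}(\Sigma)] = (\mathrm{Rev}[Q_1], \ldots, \mathrm{Rev}[Q_s])$ is a $\pi$-stable (flaw-index) sequence whose first segment has index set $\{i\}$. For the second, $\chi_i^{\pi}(Q_{\pi}(\Sigma)) = \chi_{\pi}(\tau)$, I would use the fact (established inside the proof of Lemma~\ref{trees_stable}) that $\chi_i^{\pi}(W) = \chi_{\pi}(\tau_W(|W|))$ together with the bijectivity of $\chi_{\pi}$ on $\mathcal{W}_i$, reducing the statement to the labeled-tree equality $\tau_{Q_{\pi}(\Sigma)}(t_0) = \tau$, where $t_0 = |Q_{\pi}(\Sigma)|$.

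For the first condition, since $\mathbf{w}_{t^*}$ is the unique named flaw with $d(\cdot) = 1$, we have $\mathbf{I}_1 = \{\mathbf{w}_{t^*}\}$, so the first segment is $(i)$ and its index set is $\{i\}$. Each $\mathrm{Rev}[Q_r]$ is in increasing $\pi$-order directly from the construction of $Q_r$. The remaining two properties---independence of $I_r$ in $C$ and distinctness of its flaw indices---both follow from a single observation: if $\mathbf{w}, \mathbf{w}' \in \mathbf{I}_r$ are distinct named flaws with $\mathbf{w}$ preceding $\mathbf{w}'$ in $\Sigma$ and either $w \sim w'$ or $w = w'$, then there is a directed path from $\mathbf{w}$ to $\mathbf{w}'$ in $\mathbf{G}(\Sigma)$, forcing $d(\mathbf{w}) > d(\mathbf{w}') = r$, a contradiction. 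When $w \sim w'$ the path is the direct edge; when $w = w'$, the flaw $f_w$ must be reintroduced at some intermediate time by addressing a flaw $f_{w_k}$ with $f_w \not\ni \sigma_k$ and $f_w \ni \sigma_{k+1}$, which by the causality definition forces $w_k \sim w$, producing the two-edge path $\mathbf{w} \to \mathbf{w}_k \to \mathbf{w}'$. Finally, $I_{r+1} \subseteq \Gamma(I_r)$ because every $\mathbf{w} \in \mathbf{I}_{r+1}$ sits on a longest path to $\mathbf{w}_{t^*}$ whose first step enters some $\mathbf{w}' \in \mathbf{I}_r$, and $w \in \Gamma(w')$.

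For the second condition I would argue by induction on the level $r$ that $\tau$ and $\tau_{Q_{\pi}(\Sigma)}(t_0)$ coincide as labeled rooted trees. They share the root $\mathbf{w}_{t^*}$ with label $i$. When any $\mathbf{w} \in \mathbf{I}_r$ is processed ($r \ge 2$), both constructions attach it at depth $r-1$: in the construction of $\tau$ (reverse time order) the argument of the previous paragraph rules out any later $\mathbf{w}_k$ that is $\sim$-related to $w$ lying in the current tree at depth $\ge r-1$, while in the construction of $\tau_{Q_{\pi}(\Sigma)}(t_0)$ (reverse level order) only nodes of level $\le r-2$ have been added so far; in both cases at least one depth-$(r-2)$ ancestor is $\sim$-eligible by the definition of $d(\mathbf{w}) = r$. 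The main obstacle is that multiple depth-$(r-2)$ ancestors may be eligible and the witness-tree construction breaks the tie arbitrarily, so to conclude an honest equality of labeled rooted trees one must fix a canonical tie-breaking rule (for instance, always pick the eligible ancestor of $\pi$-smallest label) in both constructions and verify inductively that the same parent is chosen in each.
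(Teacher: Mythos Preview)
Your argument follows the same two-part route as the paper: verify that $\mathrm{Rev}[Q_{\pi}(\Sigma)]$ is $\pi$-stable with root $\{i\}$, then reduce the second claim to $\tau_{Q_{\pi}(\Sigma)}(t_0)=\tau_{W(\Sigma)}(t^*)=\tau$ via the definition $\chi_i^{\pi}(W)=\chi_{\pi}(\tau_W(|W|))$. Your treatment of the first part is in fact more careful than the paper's, which only checks $I_{r+1}\subseteq\Gamma(I_r)$ and leaves independence and the $\pi$-ordering implicit.

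On the second part you raise the tie-breaking issue, which the paper simply asserts away. Your concern is legitimate, and your proposed fix is the right one; let me note why it goes through. Once a deterministic tie-breaking rule (depending only on the labels of the candidate parents) is fixed---which is already implicitly required for $\chi_i^{\pi}$ to be a well-defined map---the two constructions agree, because at the moment any $\mathbf{w}\in\mathbf{I}_r$ is processed the set of eligible parents is \emph{identical} in both: it is exactly $\{\mathbf{w}'\in\mathbf{I}_{r-1}:w\sim w'\}$. Indeed, every such $\mathbf{w}'$ is already present (in the $W(\Sigma)$-construction because $w\sim w'$ forces $\mathbf{w}'$ to appear later in $\Sigma$, else $d(\mathbf{w}')\ge r$; in the $Q_{\pi}(\Sigma)$-construction because all of $I_{r-1}$ is processed before any of $I_r$), and no deeper eligible node can be present (any $\sim$-related $\mathbf{w}''$ with $d(\mathbf{w}'')\ge r$ would force $d(\mathbf{w})>r$). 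One small correction: your sentence ``only nodes of level $\le r-2$ have been added so far'' in the $Q_{\pi}(\Sigma)$-construction is not quite right---nodes of $I_r$ with smaller $\pi$-index are already at depth $r-1$---but since $I_r$ is independent they are never eligible parents, so this does not affect your conclusion.
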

\begin{proof}
Let $Y = Y(\Sigma) =  \mathrm{Rev}[Q_{\pi}(\Sigma)] = ( Q_1, Q_2, \ldots, Q_{s}) $ be the reverse sequence of $\Sigma$. By definition, $R_Y = Q_1 = \{i \}$.  To show that $Q \in \mathcal{R}_i^{\pi}$ it suffices to show  that $Q_{i+1} \subseteq \Gamma(Q_i)$ for each $i \in [s-1]$.  To see this, recall the definitions of  $ \mathbf{I}_{r+1}$ and $Q_{r+1}$ and observe that, for each $i_{r+1}  \in Q_{r+1}$, there must be a path of $r$ indices of flaws $i_r, i_{r-1}, \ldots, i_{1}  $ such that  for every $j \in [ r - 1]$ we have that $i_j \in Q_{j} $ and $i_j \sim i_{j+1} $.

Let $k$ be the number of elements in witness sequence $Q(\Sigma)$. Recall  that $\chi_{\pi}^i(Q_{\pi}(\Sigma)) := \chi_{\pi} (\tau_{Q_{\pi}(\Sigma) }(k))$  (proof of Lemma~\ref{trees_stable}). 
The proof is concluded by also recalling the algorithm for constructing witness trees and observing  that $ \tau_{Q_{\pi}(\Sigma)}(k) =  \tau_{W(\Sigma)}(t^*) = \tau $. 

\end{proof}
Note  that applying $\mathrm{Swap}$ to $\Sigma$ does not affect graph $\mathbf{G}(\Sigma)$ and set $\mathbf{Q}(\Sigma)$ and, thus, neither the sequence $Q_{\pi}(\Sigma)$. With that in mind, we show next how we could apply  $\mathrm{Swap}$ repeatedly to $\Sigma \in \mathcal{X}^{\tau}$ to reach a $\Sigma'$ such  $Q_{\pi}(\Sigma)$ is a prefix of its witness sequence (that is,  $W(\Sigma') = (Q_{\pi}(\Sigma), U )$). We will do this by applying swaps to \emph{swappable pairs} in $\Sigma$.
\begin{definition}
Consider a trajectory $\Sigma \in \mathcal{X}^{\tau}$. A pair $(\mathbf{w}, \mathbf{y} )$ of named indices of flaws is called \emph{a swappable} pair in $\Sigma$ if it can be swapped in $\Sigma$ (i.e., $W(\Sigma)  = (\ldots \mathbf{w},\mathbf{y} \ldots  )$ and $\mathbf{w} \nsim \mathbf{y}$) and either 

\begin{enumerate}

\item$ (\mathbf{w,y}) \in \left( \mathbf{V}(\Sigma) \setminus \mathbf{Q}(\Sigma)   \right)  \times   \mathbf{Q}(\Sigma)  $, or

\item $(\mathbf{w,y }) \in \mathbf{Q}(\Sigma) \times \mathbf{Q}(\Sigma)$ and their order in $Q_{\pi}(\Sigma)$ is different: $Q_{\pi}(\Sigma) =  (\ldots ,\mathbf{y } , \mathbf{w}, \ldots )$

\end{enumerate}

The position of the rightmost swappable pair in $\Sigma$ will be denoted as $k(\Sigma)$ , where the position of $(\mathbf{w}, \mathbf{y}   )$  in $\Sigma$ is the number of named indices of flaws that precede $\mathbf{y}$ in $W(\Sigma)$. If $\Sigma$ does not contain a swappable pair then $k(\Sigma) = 0 $. Thus, $k(\Sigma) \in [0, |\Sigma | -1]$.
\end{definition}

We can only apply finite many swaps to swappable pairs in $\Sigma$. This is because swapping pairs of the first form moves a named index in $\mathbf{Q}(\Sigma)$ to the left, while swapping pairs of the second one decrease the number of pairs  whose relative order in $Q(\Sigma)$ is not consistent with the one in $Q_{\pi}(\Sigma)$. Clearly, both of these actions can be performed only a finite number of times.

The following lemma shows how we can obtain a mapping $\Phi^{\tau}$ such that $ \mathcal{X}_{\pi}^{\tau} := \Phi^{\tau} (  \mathcal{X}^{\tau}) $ satisfies the first condition of Lemma~\ref{step:swapping}. The proof is identical (up to minor changes)  to the one of Lemma 27 of~\cite{Kolmofocs}. We include it in Section~\ref{Misc} for completeness.

\begin{lemma}\label{Kolmo27}
Consider a trajectory $\Sigma \in \mathcal{X}^{\tau} $ such that  $W(\Sigma) = ( A,U)$  where $A$ and $U$ are some sequences of indices of flaws, and there are no swappable pairs inside $U$. Then $U = (B,C)$ where sequence $B$ is a subsequence of $Q_{\pi}(\Sigma)$ and $C$ does not 
contain named indices of flaws from $\mathbf{Q}(\Sigma)$.

In particular, if $|A| = 0$ and $W(\Sigma) = U$ does not contain a swappable pair then $W(\Sigma)= (Q_{\pi}(\Sigma),C)$. 
\end{lemma}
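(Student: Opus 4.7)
The plan is to analyze the structure of $U$ by classifying each position according to two criteria: whether the named flaw there belongs to $\mathbf{Q}(\Sigma)$, and whether each pair of adjacent named flaws is related under $\sim$. The hypothesis that $U$ contains no swappable pair sharply constrains these patterns, and exploiting the constraints yields the decomposition $U = (B, C)$ with the required properties.

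First I will show that the named flaws from $\mathbf{Q}(\Sigma)$ appearing in $U$ form a prefix of $U$. Write $U = (\mathbf{u}_1, \ldots, \mathbf{u}_k)$ and suppose for contradiction that some $\mathbf{u}_i \in \mathbf{Q}(\Sigma)$ is preceded in $U$ by some $\mathbf{u}_{i-1} \notin \mathbf{Q}(\Sigma)$. If $\mathbf{u}_{i-1} \nsim \mathbf{u}_i$, then $(\mathbf{u}_{i-1}, \mathbf{u}_i)$ is swappable of type 1, contradicting the hypothesis. If instead $\mathbf{u}_{i-1} \sim \mathbf{u}_i$, then $\mathbf{G}(\Sigma)$ contains the edge $\mathbf{u}_{i-1} \to \mathbf{u}_i$; since $\mathbf{u}_i$ reaches $\mathbf{w}_{t^*}$, so does $\mathbf{u}_{i-1}$, forcing $\mathbf{u}_{i-1} \in \mathbf{Q}(\Sigma)$, again a contradiction. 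Hence the $\mathbf{Q}(\Sigma)$-flaws in $U$ form a prefix $B$, and the remaining suffix $C$ contains no named flaw from $\mathbf{Q}(\Sigma)$.

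Next I will show that $B$ is a subsequence of $Q_\pi(\Sigma)$. By the standard observation that a sequence whose consecutive pairs are all correctly ordered is globally correctly ordered, it suffices to verify that every adjacent pair $(\mathbf{u}_i, \mathbf{u}_{i+1})$ inside $B$ has the same relative order in $Q_\pi(\Sigma)$. If $\mathbf{u}_i \nsim \mathbf{u}_{i+1}$ and they were reversed in $Q_\pi(\Sigma)$, they would form a swappable pair of type 2, contradicting the hypothesis; so their order must match. If instead $\mathbf{u}_i \sim \mathbf{u}_{i+1}$, then $\mathbf{G}(\Sigma)$ contains the edge $\mathbf{u}_i \to \mathbf{u}_{i+1}$, so $d(\mathbf{u}_i) \ge d(\mathbf{u}_{i+1}) + 1$; since $Q_\pi(\Sigma) = (Q_s, \ldots, Q_1)$ lists named flaws of strictly larger $d$ first, $\mathbf{u}_i$ again precedes $\mathbf{u}_{i+1}$ in $Q_\pi(\Sigma)$. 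Either way the relative order agrees with $B$, which establishes the decomposition.

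For the ``in particular'' clause, if $|A| = 0$ then $W(\Sigma) = U$ contains every named flaw index occurring in $\Sigma$, and in particular all elements of $\mathbf{Q}(\Sigma)$ lie in $B$; since $Q_\pi(\Sigma)$ is a permutation of $\mathbf{Q}(\Sigma)$ and $B$ is a subsequence of $Q_\pi(\Sigma)$ of the same length, $B = Q_\pi(\Sigma)$, giving $W(\Sigma) = (Q_\pi(\Sigma), C)$. The main subtlety to watch is aligning the two swappable-pair types with the two structural properties needed: type 1 with the closure of $\mathbf{Q}(\Sigma)$ under predecessors in $\mathbf{G}(\Sigma)$ (for the prefix claim), and type 2 with the $d$-level sorting of $Q_\pi$ (for the subsequence claim). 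The rest is bookkeeping.
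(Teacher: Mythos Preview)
Your proof is correct and follows essentially the same approach as the paper's: both argue first that the $\mathbf{Q}(\Sigma)$-elements of $U$ form a prefix (using that a non-$\mathbf{Q}$ element immediately preceding a $\mathbf{Q}$-element would either be swappable or would itself lie in $\mathbf{Q}(\Sigma)$ via reachability), and then that consecutive elements of that prefix respect the $Q_\pi(\Sigma)$ order (using the $d$-level structure when $\sim$ holds and the absence of type-2 swappable pairs when $\nsim$). Your version makes the two-case split explicit in the prefix step, whereas the paper first deduces $w \nsim u_i$ from the reachability contradiction and then invokes swappability, but the content is the same.
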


It remains to show that  $\Phi^{\tau}$ can be constructed so that is also  a bijection and  that it satisfies the second condition of Lemma~\ref{step:swapping}. To do so, consider the following algorithm.

\begin{itemize}

\item Let $\mathcal{X}_0 = \mathcal{X}^{\tau}$.

\item While $k = \max_{\Sigma \in \mathcal{X}_p } k( \Sigma) \ne 0 $

\begin{itemize}

\item  For each $\Sigma \in \mathcal{X}_p$: if $k(\Sigma) = k$ then swap the pair $(\mathbf{w}, \mathbf{y } ) $ at position $k$ in $\Sigma$, otherwise leave $\Sigma$ unchanged.

\item Let $\mathcal{X}_{p+1} $ the new set of trajectories.

\end{itemize}

\end{itemize}

For a witnesses sequence $W$ define $\mathcal{X}_p[W] = \{ \Sigma \in \mathcal{X}_p \mid Q_{\pi}(\Sigma)  = W \}$ for an index $p \ge 0$. Now the following lemma concludes the proof since $\mathcal{X}_0[W] \subseteq  \mathcal{X}^{\tau} $ is valid.  Its proof is identical (up to minor changes) to the proof of Lemma 28
in~\cite{Kolmofocs}. We also include it in Section~\ref{Misc}  for completeness.

\begin{lemma}\label{Kolmo28}
If set $\mathcal{X}_p[W]$ is valid then so is $\mathcal{X}_{p+1}[W]$, and the mapping from $\mathcal{X}_p[W] $ to $\mathcal{X}_{p+1}[W]$ defined by the algorithm above is injective.
\end{lemma}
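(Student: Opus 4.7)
The plan is to follow closely the template of Lemma 28 of \cite{Kolmofocs}, adapted to our slightly different book-keeping. Let $\Phi_p \colon \mathcal{X}_p \to \mathcal{X}_{p+1}$ denote the map defined by one iteration of the while-loop. The starting observation is that a single application of $\mathrm{Swap}$ to a consecutive $\sim$-unrelated pair changes neither the DAG $\mathbf{G}(\Sigma)$ nor the set of named indices $\mathbf{V}(\Sigma)$; consequently $\mathbf{Q}(\Sigma)$ and $Q_\pi(\Sigma)$ are invariants of any swap. In particular, $\Phi_p(\mathcal{X}_p[W]) \subseteq \mathcal{X}_{p+1}[W]$, so $\Phi_p$ restricts to a well-defined map between the subsets of interest.

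For validity of $\mathcal{X}_{p+1}[W]$, I would produce a common deterministic flaw-choice strategy $\mathcal{S}_{p+1}$ that is a controlled modification of the strategy $\mathcal{S}_p$ followed by $\mathcal{X}_p[W]$: at any history $h$ realized as a prefix of some $\Phi_p(\Sigma)$, define $\mathcal{S}_{p+1}(h)$ to be the flaw addressed at the next step of that trajectory. Single-valuedness reduces to a case analysis depending on whether each of two trajectories sharing history $h$ was swapped or not; in every case the equality of next-flaws follows from the validity of $\mathcal{X}_p[W]$ together with the local symmetry of $\mathrm{Swap}$ (swapping only affects positions $k-1$ and $k$, and the pair swapped is the same for all trajectories with $k(\Sigma)=k$ because $Q_\pi(\Sigma)$ is swap-invariant and common to $\mathcal{X}_p[W]$). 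The no-proper-prefix condition is preserved because $\mathrm{Swap}$ preserves trajectory length, combined with the injectivity step below.

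Injectivity follows by exhibiting an explicit left inverse $\Psi_p$. Given $\Sigma' \in \mathcal{X}_{p+1}[W]$, inspect the pair at position $k$ of $W(\Sigma')$ and tentatively un-swap it to obtain a candidate $\widehat{\Sigma}$. Declare $\Psi_p(\Sigma') = \widehat{\Sigma}$ if $\widehat{\Sigma}\in \mathcal{X}_p[W]$ admits a swappable pair at position $k$, and $\Psi_p(\Sigma')=\Sigma'$ otherwise. These two cases are mutually exclusive because \emph{being a swappable pair at position $k$} is an intrinsic property determined by $\mathbf{Q}(\cdot)$, $Q_\pi(\cdot)$, and $\sim$, all of which are swap-invariant; hence $\widehat{\Sigma}$ admits a swappable pair at position $k$ iff $\Sigma'$ does. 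A direct check then yields $\Psi_p \circ \Phi_p = \mathrm{id}$ on $\mathcal{X}_p[W]$.

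The main obstacle is confirming that no $\Sigma_2 \in \mathcal{X}_p[W]$ with $k(\Sigma_2) < k$ accidentally coincides with $\Phi_p(\Sigma_1)$ for some $\Sigma_1 \in \mathcal{X}_p[W]$ with $k(\Sigma_1) = k$. This is precisely the role of the maximality of $k$ in the algorithm: such a collision would imply that $\Sigma_2$ itself admits a swappable pair at position $k$ (namely the pair obtained by un-swapping from $\Sigma_1$), contradicting $k(\Sigma_2) < k$. Once this collision is ruled out, both the injectivity of $\Phi_p$ and the well-definedness of $\mathcal{S}_{p+1}$ drop out cleanly, completing the proof.
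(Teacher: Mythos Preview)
Your overall plan matches the paper's (and Kolmogorov's): establish injectivity and validity separately, using that swaps leave $\mathbf{G}(\Sigma)$, $\mathbf{Q}(\Sigma)$ and $Q_\pi(\Sigma)$ invariant. However, your injectivity argument rests on a false claim.

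You assert that ``being a swappable pair at position $k$ is an intrinsic property determined by $\mathbf{Q}(\cdot)$, $Q_\pi(\cdot)$, and $\sim$, all of which are swap-invariant; hence $\widehat{\Sigma}$ admits a swappable pair at position $k$ iff $\Sigma'$ does.'' This is wrong: swappability also depends on the \emph{order} of the two named indices in $W(\Sigma)$, and the swap reverses exactly that order. Indeed, swapping a swappable pair always yields a \emph{non}-swappable pair (just check both cases of the definition). Your last-paragraph collision argument therefore also fails: if $\Sigma_2=\Phi_p(\Sigma_1)$ with $k(\Sigma_1)=k$, the pair at position $k$ in $\Sigma_2$ is the swapped, hence non-swappable, pair---so maximality of $k$ gives no contradiction with $k(\Sigma_2)<k$.

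The paper rules out this collision using the \emph{validity} of $\mathcal{X}_p[W]$, not maximality: if $\Sigma_1$ and $\Sigma_2=\Phi_p(\Sigma_1)$ were both in $\mathcal{X}_p[W]$, their state-histories would agree up to the state just before position $k-1$, but at that state one addresses $w$ and the other $y$, contradicting the common deterministic strategy. Once this is done, any two preimages that were actually swapped share the same witness sequence at positions $k-1,k$, and injectivity of $\mathrm{Swap}$ finishes.

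For validity, your sketch is too thin. The paper's case split is not ``swapped or not'' but \emph{where} the swap in the preimage of $\Sigma$ lies relative to the first divergence point $\ell$ between $\Sigma$ and $\Sigma'$. The case where the swap sits exactly at positions $(\ell{+}1,\ell{+}2)$ is the delicate one and needs Lemma~\ref{Kolmo27} to argue that the tails of $W(\Sigma_1)$ and $W(\Sigma_2)$ past position $\ell$ begin with identical subsequences of $Q_\pi(\Sigma)=W$; you will not close that case without it.
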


\section{Byproducts of Theorem~\ref{main}}\label{Byproducts}

In this section we show two important byproducts of Theorem~\ref{main} which we will use in our applications.

\subsection{Entropy of the Output Distribution}\label{EntropyOutput}

An important application of the known bounds for the Moser-Tardos distribution is estimating its randomness. In particular, Harris and Srinivasan~\cite{EnuHarris} show that  one can give lower bounds on the \emph{R\'{e}nyi entropy} of the output of the Moser-Tardos algorithm.

\begin{definition}[\cite{min_entropy3}]
Let $\nu$ be a probability measure over a finite set $S$. The \emph{R\'{e}nyi entropy with parameter $\rho$} of $\nu$ is defined to be
\begin{align*}
H_{\rho}[\nu]  = \frac{1}{1 - \rho} \ln \sum_{s \in S } \nu(s)^{\rho}  \enspace.  
\end{align*}
The \emph{min-entropy}  $H_{\infty}$ is  a special case defined as $H_{\infty}[\nu] = \lim_{\rho \to \infty} H_{\rho}[\nu] = - \ln \max_{ s \in S } \nu(s)$.  
\end{definition}

Using the results of Section~\ref{Results} we can show the analogous result in our setting.

\begin{theorem}\label{EntropyBound}
Assume that $\mathcal{A} = ( F, \sim ,\rho) $ is commutative, and the cluster expansion condition is satisfied. Let $\nu$ be the output distribution of $\mathcal{A}$. Then, for $\rho >1$,
\begin{align*}
H_{\rho}[\nu] \ge H_{\rho}[\mu]  - \frac{\rho}{\rho-1} \ln  \left(   \sum_{S \in \mathrm{Ind}([m])  } \prod_{j \in S}   \psi_j   \right) - \frac{\rho }{ \rho -1} \ln \lambda_{\mathrm{init} }     \enspace.
\end{align*}
\end{theorem}

 Given Theorem~\ref{main}, the proof is akin to the analogous result in~\cite{EnuHarris} and can be found in  Appendix~\ref{Misc} .

\begin{remark}\label{shearer_remark}
Using  Shearer's condition we can replace $\psi_i$ with $\frac{q_{ \{i \}} (\gamma)  } {q_{\emptyset}(\gamma) }  $, $i \in [m] $.
\end{remark}

A straightforward application of having a lower bound on  $H_{\rho}[\nu] $ (for any $\rho$), where $\nu$ is the output distribution of the algorithm, is that there exist at least $\mathrm{exp}( H_{\rho}[\nu]   ) $ flawless objects.  Before~\cite{EnuHarris}, the authors in~\cite{lu2009new} also used the (existential) LLL for enumeration of combinatorial structures by exploiting the fact that it guarantees a small probability $p$ of avoiding all flaws when sampling from the uniform measure (and, thus, their number is at least $p | \Omega |$).

\subsection{Partially Avoiding Flaws}\label{Partially}

One of the main results of~\cite{Haeupler_jacm,EnuHarris} are LLL conditions for the existence of objects that avoid  a large portion of the bad events. For instance, given a  sufficiently sparse $k$-SAT formula which, nonetheless,   violates the original LLL conditions, one can still find  an assignment that satisfies many clauses. Using the results of Section~\ref{Results}, we are able to extend the (most general) result of  Harris and Srinivasan~\cite{EnuHarris}  to the commutative setting.

Given a sequence of positive numbers $\{\psi_i \}_{i=1}^{m}$, for each $i \in [m]$ define: 
\begin{align*}
\zeta_i := \sum_{ S \in \mathrm{Ind}(\Gamma(i) )  }  \prod_{j \in S} \psi_j \enspace,
\end{align*} 
and notice that the cluster expansion condition can be expressed as requiring that for each $i \in [m]$ we have that $ \gamma(f_i)  \zeta_i \le \psi_i$. 

\begin{theorem}\label{partial_avoiding}
Assume that $\mathcal{A} = (  F, \sim ,\rho) $ is commutative and $\lambda_{\mathrm{init}} =1 $. Let $\{ \psi_i \}_{i=1}^{m}$ be a sequence of positive numbers. Then there is an algorithm $\mathcal{A}'$ (which is a modification of $\mathcal{A}$) and whose output distribution $\nu$ has the property that for each $i \in [m]$
\begin{align*}
\nu(f_i) \le \max \{ 0,  \gamma(f_i) \zeta_i  - \psi_i \}  \enspace.
\end{align*}
Furthermore, the expected number of times a flaw $f_i$ is addressed is  at most $  \psi_i$.
\end{theorem}

Given Theorem~\ref{main}, the proof of Theorem~\ref{partial_avoiding} is akin to the one of~\cite{EnuHarris}  and can be found in   Appendix~\ref{Misc}.

\begin{remark}\label{shearer_remark_2}
Using  Shearer's condition we can replace $\psi_i$ with $\frac{q_{ \{i \}} (\gamma)  } {q_{\emptyset}(\gamma) }  $, $i \in [m] $.
\end{remark}

\section{Applications}\label{Applicatia}

In this section we show concrete applications of our main results in several problems.

\subsection{Rainbow Matchings}\label{Rainbow_Matchings}

In an edge-colored graph $G=(V,E)$, say that $S \subseteq E$ is \emph{rainbow} if its elements have distinct colors. In this section we consider the problem of finding rainbow matchings in complete 
graphs of size $2n$, where each color appears a limited amount of times.

Applying the cluster expansion  condition, it can be shown~\cite{AIJACM,HV} that  any edge-coloring of a complete graph of size $2n$ in which each color appears on at most $\frac{27}{128} n \approx 0.211 n$ edges admits a rainbow perfect matching
that can be found efficiently. Furthermore, in~\cite{Kolmofocs} it is shown that the resampling oracles defined by~\cite{HV} for the space of matchings in a clique of even size, and which are used in this particular application, induce  commutative algorithms. The latter fact  will allow us to use our results to further study this problem. 

\subsubsection{Finding Rainbow Perfect Matchings}\label{algo}

We first formulate the problem to fit our setting and use Theorem~\ref{main} to show that the algorithm  of~\cite{AIJACM,HV} finds a perfect rainbow matching efficiently. Assuming a multiplicative slack in the cluster expansion conditions, a  running time (number of steps) of $O( n)$ can be given using the results of $\cite{AIJACM,HV,Kolmofocs}$. However, the best known upper bound without this assumption was given in~\cite{HV} to be $O( n^2)$. Here we improve the latter to $O(  n)$.  

Let $\phi$ be any edge-coloring of $K_{2n}$ in which each color appears on at most $\lambda n$ edges. Let $P=P(\phi)$ be the set of all pairs of vertex-disjoint edges with the same color in $\phi$, i.e., $P = \{ \{e_1,e_2 \}: \phi(e_1) = \phi(e_2) \}$. 
Let $\Omega$ be the set of all perfect matchings of $K_{2n}$. For each $\{e_i,e_j\} \in P$ let
\[
f_{i,j} = \{ M \in \Omega:  \{e_i,e_j\} \subset M\} \enspace .
\]
Thus, an element of $\Omega$ is flawless iff it is a rainbow perfect matching. The algorithm that finds a rainbow perfect matching starts at a state of $\Omega $ chosen uniformly at random and, in every subsequent step, it chooses (arbitrarily) a flaw to address. Algorithm~\ref{Rainbow_Matchings} below describes the probability distributions $\rho_{i,j}(M, \cdot) $, where $M \in f_{i,j} $. This a special case of the implementation of a general resampling oracle with respect to the uniform measure over $\Omega$ for perfect matchings  described in~\cite{HV}. For the problem of rainbow matchings, the latter implies that $\gamma(f_{i,j}) = \mu(f_{i,j} ) =  \frac{1}{ (2n-1 )(2n-3 )} $.

\begin{algorithm}[H]
\begin{algorithmic}[1]    

 \State $M':= M$, $A := \{e_1,e_2 \}$,  $A':= A$.

 \While{ $A' \ne \emptyset$ }

\State Pick $(u,v) \in A'$ arbitrarily

\State Pick $(x,y) \in M' \setminus A'$ uniformly at random, with $(x,y)$ randomly ordered;

\State With probability $1 - \frac{1}{ 2 | M' \setminus A'| +1} $,

Add $(u,y), (v,x)$ to $M'$ and remove $(u,v),(x,y)$ from $M'$;

\State Remove $(u,v)$ from $A'$;

 \EndWhile 
 
\State Output $M'$.

\caption{Probability Distribution $\rho_{i,j}(M, \cdot ) $}
\label{Rainbow_Matchings}
\end{algorithmic}
\end{algorithm}
For a vertex $v$  let $\Gamma(v)$ denote the set of  indices of flaws that correspond to edges adjacent to $v$. By observing the algorithm it's not hard   to verify (and is also proved in~\cite{AIJACM,HV,Kolmofocs}) that  the graph $C$ over indices of flaws such that for each $\left(e_i = (v_1,v_2),e_j = (v_3,v_4) \right) \in P$ we have that
\begin{align*}
\Gamma \left( i,j \right) =  \bigcup_{i=k}^{4} \Gamma(v_k) 
\end{align*}
is a causality graph. Furthermore, if $S \in \mathrm{Ind}\left( \Gamma  (i, j)  \right)$, then for each $k \in \{1,2,3,4 \}$ we have that $| S \cap \Gamma(v_k) | \le 1$. This means that $|S| \le 4$ and, moreover,  for each $j \in \{0,1,2,3,4\}$ there are at most
${4 \choose j } (2n-1)^j (\lambda n - 1)^j$ subsets $S \in \mathrm{Ind}( \Gamma(i,j) ) $ of size $j$. Choosing parameters $\psi_{i,j} = \psi  = \frac{3 }{4n^2 } $ we have that:
\begin{align*}
\gamma(f_{i,j} ) \zeta_{i,j} :=  \gamma(f_{i,j} ) \sum_{ S \in \mathrm{Ind}(\Gamma(i,j)) }  \psi^{|S|}  \le \frac{ 1}{ (2n-3)(2n-1)} \left( 1 + (2n-1)(\lambda n -1)  \right)^4 \enspace,
\end{align*}
from which it can be seen that whenever $\lambda \le \frac{27}{128} $ we have that  $ \gamma(f_{i,j} ) \zeta_{i,j}  \le 1$ and so the cluster expansion condition is satisfied.

Since $|P | \le (2n)^2   \cdot ( \lambda n -1)  <  4 \lambda n^3 $,  Theorem~\ref{main} implies that the algorithm terminates after an expected number of  $ 3 \lambda n  $ steps. Overall, we have showed the following theorem.
\begin{theorem}
For any  $\lambda \le \frac{27}{128} $, given any edge-coloring of the complete graph on $2n$ vertices in which each color appears on at most $\lambda n$ edges, there exists an algorithm that terminates in an expected number of at most $3 \lambda n$ steps and outputs a rainbow perfect matching.
\end{theorem}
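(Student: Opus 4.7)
The plan is a direct application of Theorem~\ref{main}, exploiting the fact, proved in~\cite{Kolmofocs}, that the perfect-matching resampling oracle of~\cite{HV} described in Algorithm~\ref{Rainbow_Matchings} is commutative with respect to the stated causality graph. Since the initial matching is drawn from the uniform distribution, which is precisely our reference measure $\mu$, we have $\lambda_{\mathrm{init}} = 1$, so the first part of Theorem~\ref{main} gives $\mathbb{E}[N_{i,j}] \le \psi_{i,j}$ for every flaw $f_{i,j}$. Summing over flaws, the expected total number of steps is at most $|P| \cdot \psi$, and it remains to check that the chosen $\psi$ satisfies the cluster expansion condition and that $|P| \cdot \psi \le 3\lambda n$.

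First I would verify the cluster expansion condition $\gamma(f_{i,j}) \zeta_{i,j} \le \psi$ with the uniform choice $\psi_{i,j} = \psi = 3/(4n^2)$. Using $\gamma(f_{i,j}) = 1/\bigl((2n-1)(2n-3)\bigr)$ together with the structural fact that any $S \in \mathrm{Ind}(\Gamma(i,j))$ contains at most one element from each of the four neighborhoods $\Gamma(v_k)$, so that the number of independent sets of size $j$ is at most $\binom{4}{j}(2n-1)^j(\lambda n - 1)^j$, yields
\begin{equation*}
\gamma(f_{i,j}) \zeta_{i,j} \le \frac{\bigl(1 + (2n-1)(\lambda n-1)\psi\bigr)^4}{(2n-1)(2n-3)}.
\end{equation*}
Plugging in $\psi = 3/(4n^2)$ and taking $n$ large enough, this is bounded by $\psi$ precisely in the asymptotic regime $\lambda \le 27/128$, a tight numerical optimization that is exactly the one carried out in~\cite{AIJACM,HV}.

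Second, I would combine this with the total count $|P| \le (2n)^2(\lambda n - 1) < 4\lambda n^3$, which follows because each edge has at most $\lambda n - 1$ other edges of the same color. Theorem~\ref{main}(1) then gives
\begin{equation*}
\mathbb{E}[\text{number of steps}] \;=\; \sum_{(i,j) \in P} \mathbb{E}[N_{i,j}] \;\le\; |P| \cdot \psi \;\le\; 4\lambda n^3 \cdot \frac{3}{4n^2} \;=\; 3\lambda n,
\end{equation*}
which is the desired bound.

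The only mildly delicate point I expect is the verification of commutativity for this particular resampling oracle, but this is done in~\cite{Kolmofocs} and can be invoked as a black box. The slight improvement over the $O(n^2)$ bound of~\cite{HV} is entirely due to replacing their union-bound style accounting with the per-flaw expectation $\mathbb{E}[N_{i,j}] \le \psi_{i,j}$ coming from the witness tree lemma of Section~\ref{WTL}, which is the main content of our commutative framework; no additional slack in the cluster expansion condition is needed.
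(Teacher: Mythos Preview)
Your proposal is correct and follows essentially the same route as the paper's own proof: both invoke Theorem~\ref{main}(1) with $\lambda_{\mathrm{init}}=1$ (since $\theta=\mu$), verify the cluster expansion condition with the uniform choice $\psi=3/(4n^2)$ via the same bound $\gamma(f_{i,j})\zeta_{i,j}\le (1+(2n-1)(\lambda n-1)\psi)^4/((2n-1)(2n-3))$, and then sum $\mathbb{E}[N_{i,j}]\le\psi$ over $|P|<4\lambda n^3$ flaws to get $3\lambda n$. The commutativity of the matching resampling oracle is indeed cited from~\cite{Kolmofocs} in both your argument and the paper's.
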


\subsubsection{Number of Rainbow Perfect  Matchings}\label{number_rainbow_matchings}

In this subsection  we use Theorem~\ref{EntropyBound} to give an exponential lower bound on the  number of perfect matchings when each color appears at most $\lambda n$ times, where $ \lambda \le \frac{27}{128}$, by bounding the entropy of the output distribution of the algorithm described in the previous subsection.
\begin{theorem}\label{Matchings_Enu}
For any  $\lambda \le \frac{27}{128} $, given any edge-coloring of the complete graph on $2n$ vertices in which each color appears on at most $\lambda n$ edges, there exist at least \footnote{Recall that $(2n- 1)!!=  1\cdot 3 \cdot \ldots  \cdot (2n-1) = \frac{(2n)! }{2^n n ! }  $ .} $e^{-3 \lambda n }\cdot (2n-1)!! $ rainbow perfect matchings. Furthermore, there exists an algorithm that outputs each one of them with positive probability.
\end{theorem}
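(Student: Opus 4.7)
The plan is to invoke Theorem~\ref{EntropyBound} for the commutative algorithm from Section~\ref{algo} in the limit $\rho \to \infty$, so that a lower bound on $H_\infty[\nu]$ translates directly into a lower bound on the size of the support of $\nu$, which by construction consists entirely of rainbow perfect matchings.

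First I would record the three ingredients we need for Theorem~\ref{EntropyBound}. The algorithm of Section~\ref{algo} is commutative (this was established in~\cite{Kolmofocs}), and the cluster expansion condition holds with $\psi_{i,j} = 3/(4n^2)$ as verified in Section~\ref{algo}. Second, since $\mu$ is uniform on the set $\Omega$ of perfect matchings of $K_{2n}$, we have $|\Omega|=(2n-1)!!$ and hence $H_\rho[\mu] = \ln(2n-1)!!$ for every $\rho$; in particular $H_\infty[\mu]=\ln(2n-1)!!$. Third, since the algorithm starts from a uniformly random matching, $\lambda_{\mathrm{init}}=1$.

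The main computation is to bound the global independence polynomial
\[
Z \;=\; \sum_{S \in \mathrm{Ind}([m])} \prod_{j \in S} \psi_j.
\]
Here I would use nothing more than the trivial bound $Z \le \prod_{i\in [m]} (1+\psi_i) = (1+\psi)^{|P|}$. Using $\psi = 3/(4n^2)$ and the estimate $|P|\le (2n)^2(\lambda n-1) < 4\lambda n^3$ from Section~\ref{algo}, we get
\[
\ln Z \;\le\; |P|\cdot \ln(1+\psi) \;\le\; |P|\cdot \psi \;\le\; 4\lambda n^3 \cdot \tfrac{3}{4n^2} \;=\; 3\lambda n.
\]
Taking $\rho\to\infty$ in Theorem~\ref{EntropyBound} (so that $\rho/(\rho-1)\to 1$), and using $\ln\lambda_{\mathrm{init}}=0$, yields
\[
H_\infty[\nu] \;\ge\; H_\infty[\mu] - \ln Z \;\ge\; \ln(2n-1)!! - 3\lambda n.
\]

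Finally, I would convert the min-entropy bound into a support-size bound. Since $H_\infty[\nu] = -\ln\max_\sigma \nu(\sigma)$, the last inequality gives $\max_\sigma \nu(\sigma) \le e^{3\lambda n}/(2n-1)!!$. Because the output distribution $\nu$ is a probability distribution, its support $R\subseteq \Omega$ satisfies $1 = \sum_{\sigma\in R}\nu(\sigma) \le |R|\cdot\max_\sigma \nu(\sigma)$, so
\[
|R| \;\ge\; e^{-3\lambda n}\cdot (2n-1)!!.
\]
By construction, the algorithm terminates only at flawless states, i.e.\ at rainbow perfect matchings, so $R$ is a subset of the set of rainbow perfect matchings and every element of $R$ is output with positive probability, giving both conclusions of the theorem. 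The only nontrivial step is the bound on $Z$, but because $\psi$ is of order $1/n^2$ while $|P|$ is of order $n^3$, the naive product bound $(1+\psi)^{|P|}$ already yields the tight exponent $3\lambda n$ without any refinement.
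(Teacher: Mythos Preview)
Your argument is correct and follows essentially the same route as the paper: both invoke Theorem~\ref{EntropyBound} with the cluster expansion parameters $\psi_{i,j}=3/(4n^2)$, bound the independence polynomial by the crude product bound $(1+\psi)^{|P|}\le e^{|P|\psi}\le e^{3\lambda n}$ using $|P|<4\lambda n^3$, and convert the resulting entropy lower bound into a support-size bound of at least $e^{-3\lambda n}(2n-1)!!$. Your version is simply more explicit about the $\rho\to\infty$ limit and the min-entropy-to-support-size step that the paper leaves implicit.
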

\begin{proof}

To apply Theorem~\ref{EntropyBound}, we will need to give an upper bound for  $\sum_{  S \in \mathrm{Ind}([m]) } \psi^{|S|} $. Similarly to applications in~\cite{EnuHarris}, we will find useful the following crude, but general upper bound:
\begin{align}\label{crude}
\sum_{  S \in \mathrm{Ind}([m]) }\prod_{j \in S} \psi_j  \le  \sum_{ S \subseteq [m]}   \prod_{j \in S} \psi_j   \le   \prod_{i \in [m] } (1 + \psi_i ) \le  \mathrm{exp}{  \left( \sum_{i \in [m]} \psi_i \right)  }  \enspace.
\end{align}
Since the number of perfect matching in $K_{2n}$ is $(2n-1)!!$ and   also $|P | <   4 \lambda n^3 $, Theorem~\ref{EntropyBound} and~\eqref{crude} imply  that  the number of  rainbow perfect matchings is at least 
\begin{align*}
\mathrm{exp} \left(  \ln | \Omega | - \sum_{ i \in [m]  }   \psi   \right) \ge \mathrm{exp} \left(  \ln \left(  (2n-1)!! \right)  -  3 \lambda n \right)  =  \frac{ (2n-1)!!}{e^{3 \lambda n} }   \enspace,
\end{align*}
concluding the proof.
\end{proof}

\subsubsection{Low Weight Rainbow Perfect Matchings}\label{LowWeightMatchings}
Consider an arbitrary weighting function $W: E \rightarrow \mathbb{R}$ over the edges of $K_{2n}$. Here we consider the problem of finding rainbow perfect matchings of low weight, where the weight of a matching is defined
as the sum of weights of its edges. Clearly, there is a selection of $n$ edges of $K_{2n}$ whose total weight is at most $\frac{1/2}{2n-1}\sum_{ e \in K_{2n}} W(e)$.  We use Theorem~\ref{main} to show that, whenever $\lambda \le \frac{27}{128}$, the algorithm of subsection~\ref{algo} outputs a rainbow perfect matching of similar expected weight. 
\begin{theorem}\label{Matchings_Weight}
For any  $\lambda \le \frac{27}{128} $, given any edge-coloring of the complete graph on $2n$ vertices in which each color appears on at most $\lambda n$ edges, there exists an algorithm that outputs a perfect rainbow matching $M$ such that
\begin{align*}
\ex[W(M)] \le  \frac{ \left(  1 + \frac{3}{2} \lambda   \right)^2 }{2n-1}  \sum_{e \in K_{2n} } W(e)  \enspace.
\end{align*}
\end{theorem}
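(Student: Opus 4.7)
The plan is to reduce the bound on $\ex[W(M)]$ to a bound, for each fixed edge $e$, on the probability that $e$ lies in the output matching $M$, and then to obtain that probability bound by applying Theorem~\ref{main} to a suitable ``extension'' of the algorithm in which the event $E_e = \{M \in \Omega : e \in M\}$ is treated as an additional flaw. By linearity of expectation, $\ex[W(M)] = \sum_{e} W(e)\,\Pr[e \in M]$, so the whole task reduces to showing $\Pr[e \in M] \le (1+\tfrac{3}{2}\lambda)^2/(2n-1)$ for every edge $e$.

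To apply Theorem~\ref{main}(2) I first set up the extension. I use the same matching-swap resampling oracle of Algorithm~\ref{Rainbow_Matchings} (restricted to the single edge $e$ in place of $\{e_i,e_j\}$) to ``address'' $E_e$; since this is an instance of the perfect-matching resampling oracle of~\cite{HV}, the extended system is still commutative (commutativity in this space was established by Kolmogorov~\cite{Kolmofocs}), and the oracle is perfectly compatible with the uniform measure, so $\gamma(E_e) = \mu(E_e) = 1/(2n-1)$. The initial distribution is $\mu$, so $\lambda_{\mathrm{init}} = 1$. For the causality graph of the extension, by the same argument used in Subsection~\ref{algo} for $\Gamma(i,j)$, if $e = (u,v)$ then $\Gamma(E_e) \subseteq \Gamma(u) \cup \Gamma(v)$, where $\Gamma(w)$ denotes flaw indices of pairs $\{e_i,e_j\} \in P$ with at least one edge incident to $w$.

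Next I bound the sum over independent sets. An independent $S \in \mathrm{Ind}(\Gamma(E_e))$ contains at most one element of $\Gamma(u)$ and one element of $\Gamma(v)$, since within each $\Gamma(w)$ every pair of vertices is adjacent in the causality graph. Each vertex $w$ has $2n-1$ incident edges and each color appears on at most $\lambda n$ edges, so $|\Gamma(w)| \le (2n-1)(\lambda n - 1) \le 2\lambda n^2$. With $\psi_{i,j} = \psi = 3/(4n^2)$ as in Subsection~\ref{algo}, I obtain
\begin{equation*}
\sum_{S \in \mathrm{Ind}(\Gamma(E_e))} \prod_{j \in S} \psi_j \;\le\; \bigl(1 + |\Gamma(u)|\,\psi\bigr)\bigl(1 + |\Gamma(v)|\,\psi\bigr) \;\le\; \Bigl(1 + \tfrac{3\lambda}{2}\Bigr)^{\!2}.
\end{equation*}
Theorem~\ref{main}(2) then yields $\Pr[e \in M] \le (1+\tfrac{3\lambda}{2})^2/(2n-1)$, and summing over all edges weighted by $W(e)$ gives the theorem.

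The only delicate point is verifying that the extension truly satisfies the commutativity hypothesis needed for Theorem~\ref{main}, since the resampling oracle for $E_e$ (a single-edge event) was not explicitly defined in Subsection~\ref{algo}. However, it is literally a one-edge degeneration of Algorithm~\ref{Rainbow_Matchings}, and the commutativity proof of~\cite{Kolmofocs} for the perfect-matching oracles goes through unchanged; alternatively, one can absorb $E_e$ into an existing flaw by picking any color that already appears at $e$ and an auxiliary edge of that color. Aside from this bookkeeping, every inequality reuses exactly the estimates already carried out in Subsection~\ref{algo}, so no new computation is required.
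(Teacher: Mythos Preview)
Your proposal is correct and follows essentially the same route as the paper's own proof: both define the single-edge event $A_e=\{M:e\in M\}$, address it with the one-edge version of Algorithm~\ref{Rainbow_Matchings} (a resampling oracle, so $\gamma(A_e)=\mu(A_e)=1/(2n-1)$), bound $\sum_{S\in\mathrm{Ind}(\Gamma(A_e))}\psi^{|S|}$ by $(1+(2n-1)(\lambda n-1)\psi)^2\le (1+\tfrac{3}{2}\lambda)^2$ using the clique structure of $\Gamma(u)$ and $\Gamma(v)$, and finish with Theorem~\ref{main}(2) plus linearity of expectation. Your extra discussion of commutativity for the extension is a point the paper handles by citation, but the substance is the same.
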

\begin{proof}
Let  $A_e$ be the subset of $\Omega$ that consists of the matchings that contain $e$. It is proven in~\cite{HV}, and it's also not hard to verify, that Algorithm~\ref{Rainbow_Matchings} with $A = \{ e\}$  is a resampling oracle  for this type of flaw. Moreover, using an identical counting  argument to the one in subsection~\ref{algo} we get that:
\begin{align*}
\sum_{ S \in  \mathrm{Ind}(\Gamma(A_e))  } \psi^{|S|}  \le \left(  1 + (2n-1)(\lambda n -1) \psi \right)^2 \enspace.
\end{align*}
Applying Theorem~\ref{main} we get that:
\begin{eqnarray*}
\ex[W(M)] &\le &  \sum_{e \in K_{2n} } W(e) \Pr[ A_e ] \\
		& \le & \sum_{e \in K_{2n} } W(e) \mu \left(A_e \right) \left(  1 + (2n-1)(\lambda n -1) \psi \right)^2 \\
		 & < & \frac{ \left(  1 + \frac{3}{2} \lambda  \right)^2 }{2n-1}  \sum_{e \in K_{2n} } W(e)  \enspace,
\end{eqnarray*}
concluding the proof.
\end{proof}

\subsubsection{Finding Rainbow  Matchings with many edges}\label{rainbow_matchings}
 In this subsection we use  Theorem~\ref{partial_avoiding}  to show that whenever $\lambda < 0.5$ we can find rainbow matchings with a linear number of edges.
\begin{theorem}\label{Matchings_Partial}
 Given any edge-coloring of the complete graph on $2n$ vertices in which each color appears on at most $\lambda n$ edges, where  $\lambda < 0.5 $    and $n$ is sufficiently large, there exists an algorithm that terminates within $O( n) $ steps in expectation and finds a rainbow matching with an expected number of edges that is at least $n \min \left( 1, 0.94  \sqrt [3] {  \frac{2}{\lambda} } -1 \right)$.
\end{theorem}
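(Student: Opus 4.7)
The plan is to apply Theorem~\ref{partial_avoiding} directly to the commutative algorithm $\mathcal{A}$ from Section~\ref{algo}, reusing its flaws $\{f_{i,j}\}_{(i,j)\in P}$, its causality graph, and the ingredients $\gamma(f_{i,j}) = \tfrac{1}{(2n-1)(2n-3)}$ and $\zeta_{i,j} \le (1+(2n-1)(\lambda n-1)\psi_{i,j})^{4}$ established there. The hypotheses are met: the Harvey--Vondr\'{a}k oracle for perfect matchings is commutative by~\cite{Kolmofocs}, and $\mathcal{A}$ samples its starting state uniformly from $\Omega$, so $\lambda_{\mathrm{init}}=1$.

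Setting all parameters uniformly to $\psi_{i,j} = c/n^{2}$ for a constant $c>0$ to be optimized, and letting $\mathcal{A}'$ and $\nu$ denote the modified algorithm and its output distribution guaranteed by Theorem~\ref{partial_avoiding}, the asymptotic estimate
\begin{align*}
\gamma(f_{i,j})\zeta_{i,j} - \psi_{i,j} \;=\; \frac{(1+2\lambda c)^{4}-4c}{4n^{2}}\,(1+o(1))
\end{align*}
together with $\nu(f_{i,j})\le \max\{0,\gamma(f_{i,j})\zeta_{i,j}-\psi_{i,j}\}$, the bound $|P|\le 4\lambda n^{3}$, and linearity of expectation gives $\ex[K] \le \lambda n\cdot\max\{0,g(c)\} + o(n)$, where $K$ is the number of monochromatic vertex-disjoint edge pairs surviving in the output perfect matching $M'$ and $g(c):=(1+2\lambda c)^{4}-4c$. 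The $O(n)$ running-time claim also follows from Theorem~\ref{partial_avoiding}: the total expected number of flaw-address operations is $\sum_{(i,j)\in P}\psi_{i,j}\le 4\lambda c n$.

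To convert $M'$ into a rainbow matching $M$ I would delete one edge from each surviving monochromatic pair, obtaining a rainbow submatching with $|M|\ge n-K$. It remains to optimize $g$: differentiating yields the critical point $1+2\lambda c^{\ast}=(2\lambda)^{-1/3}$, which is strictly positive exactly when $\lambda < 1/2$. Writing $t := (2\lambda)^{-1/3}$, a short computation gives $g(c^{\ast}) = t^{3}(4-3t)$ and hence $\lambda g(c^{\ast}) = \tfrac{1}{2}(4-3t)$, yielding
\begin{align*}
\ex[|M|] \;\ge\; n\bigl(1-\lambda g(c^{\ast})\bigr)_{+} \;=\; n\bigl(\tfrac{3}{2}(2\lambda)^{-1/3}-1\bigr)_{+} \;=\; n\bigl(\tfrac{3}{2^{4/3}}\lambda^{-1/3}-1\bigr)_{+}.
\end{align*}
Since $\tfrac{3}{2^{4/3}}\approx 1.1906 > 1.1843 \approx 0.94\cdot 2^{1/3}$, the right-hand side is at least $n(0.94\sqrt[3]{2/\lambda}-1)$. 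Moreover, $g(c^{\ast})\le 0$ exactly when $t\ge 4/3$, equivalently $\lambda\le 27/128$, in which case $\nu(f_{i,j})=0$ for every pair and $M=M'$ is already a perfect rainbow matching, recovering the result of Section~\ref{algo}. The two regimes together give the stated bound $n\min\{1,\,0.94\sqrt[3]{2/\lambda}-1\}$. The proof is then essentially a calculus exercise on top of Theorem~\ref{partial_avoiding}; the only delicate point, rather than any conceptual difficulty, is absorbing the $o(1)$ slack from the $\zeta_{i,j}$ counting into the ``$n$ sufficiently large'' hypothesis uniformly across the interesting range $27/128<\lambda<1/2$, so that the asymptotic value of $g(c^{\ast})$ controls $\ex[K]$.
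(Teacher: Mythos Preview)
Your proposal is correct and follows essentially the same approach as the paper: apply Theorem~\ref{partial_avoiding} to the commutative matching algorithm with a uniform parameter $\psi$, bound the expected number of surviving monochromatic pairs, delete one edge per pair, and optimize $\psi$. The paper carries the finite-$n$ expression and sets $\alpha = \frac{1}{(2n-1)(\lambda n-1)}\bigl(\sqrt[3]{\tfrac{2n-3}{4(\lambda n-1)}}-1\bigr)$ before passing to the limit, whereas you parametrize asymptotically as $c/n^{2}$ from the outset; your critical point $c^{\ast}=\tfrac{t-1}{2\lambda}$ is exactly the limit of their $\alpha n^{2}$, and your constant $\tfrac{3}{2^{4/3}}=\tfrac{3}{2^{5/3}}\cdot 2^{1/3}\approx 0.9449\cdot 2^{1/3}$ is the precise value of which the paper's $0.94$ is a rounded-down form.
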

\begin{proof}
Let $\phi$ be any edge-coloring of $K_{2n}$ in which each color appears on at most $\lambda n$ edges and recall the definitions of $P=P(\phi)$,  $\Omega$, and  $f_{i,j} = \{ M \in \Omega:  \{e_i,e_j\} \subset M\}$ from the proof of Theorem~\ref{Matchings_Enu}.

The idea is to apply Theorem~\ref{partial_avoiding}  that guarantees that we can come up with a ``truncated version" $\mathcal{A}'$ of our algorithm for finding perfect rainbow matchings. In particular, if $\nu$ is the output probability distribution of $\mathcal{A}'$ and for each flaw $f_{i,j}$ we set $\psi_{i,j} =  \alpha$ then:
\begin{eqnarray}
\nu(f_{i,j} )  & \le &  \max \left(0, \gamma(f_{i,j}) \zeta_{i,j}  - \alpha  \right) \nonumber \\
		 & \le  &\max \left(0,   \frac{  \left( 1 + (2n-1) (\lambda n-1) \alpha  \right)^4  }{ (2n-3) (2n-1) }  - \alpha \right) \label{some_flaws} \enspace .
\end{eqnarray}
Consider now the following strategy:  We first execute algorithm $\mathcal{A}'$ to get a perfect, possibly non-rainbow,  matching $M$ of $K_{2n}$. Then, for each flaw $f_{i,j}$ that appears in $M$, we choose arbitrarily one of its corresponding edges and remove it from $M$, to get a non-perfect, but rainbow, matching $M'$. If $ S = S(M')$ is the random variable that equals the size (number of edges) of $M'$ then by  setting  
$$\alpha =  \frac{1}{(2n-1) (\lambda n-1) }  \left( \sqrt[3]{ \frac{2n-3}{4( \lambda n-1)} }- 1\right)  $$
 we get:
\begin{eqnarray*}
\ex[S] & =  & n - \sum_{ (e_i, e_j) \in P } \nu(f_{i,j} )  \\
	  & \ge & n  -  \max \left(0,     |P| \left(  \frac{  \left( 1 + (2n-1) (\lambda n-1) \alpha  \right)^4  }{ (2n-3) (2n-1) }  - \alpha \right) \right)  \\
	  & =  & n \min \left( 1,  1-   \frac{ 4 n  }{2n-1   } \left(     1  -   \frac{3 }{ 4  \cdot 2^{2/3} }   \sqrt [3] { \frac{2n-3 }{\lambda n -1 } }   \right)  \right)  \enspace.
\end{eqnarray*}
For large enough $n$, the latter is 	  $ \min \left( 1, 0.94  \sqrt [3] {  \frac{2}{\lambda} } -1 \right) $. Finally, notice that  (for large $n$)  $\alpha  $ is positive whenever $\lambda < 0.5$.
\end{proof}

\subsection{List-coloring Triangle-Free Graphs}\label{MikeResult}

In the problem of list-coloring one is given a graph $G = G(V,E) $ over $n$ vertices $V = \{v_1, v_2, \ldots, v_n \}$ and,  for each $v \in V$, a list of colors $\mathcal{L}_v$. The goal is to find a list-coloring $\sigma \in \mathcal{L}_{v_1} \times \ldots \times \mathcal{L}_{v_n}  $ of $G$ such that $\sigma(v) \ne \sigma(u)$ for any pair of adjacent vertices.

The \emph{list chromatic number} $\chi_{\ell}(G)$ of a graph $G$ is the minimum number of colors for which such a coloring is attainable. A celebrated result of Johansson shows that there exist a large constant $C > 0$ such that every triangle-free graph with maximum degree $\Delta \ge \Delta_0 $ can be list-colored using $ C \Delta / \ln \Delta $ colors.  Very recently, Molloy~\cite{molloy2017list} improved Johansson's result showing that $C$ can be replaced by $(1+\epsilon)$ for any $\epsilon > 0$ assuming that $\Delta \ge \Delta_{\epsilon}$.(We note that, soon after, Bernshteyn~\cite{MolloyLLL} established the same bound for the list chromatic number using the LLL. However, his result is not constructive as it uses a sophisticated probability measure for which it is not clear how one could design ``efficient" resampling oracles.)

Here we show how that the algorithm of  Molloy is amenable to our analysis and, in particular,  we prove that it can output exponentially many  proper colorings with positive probability.

\subsubsection{The Algorithm}

The algorithm of~\cite{molloy2017list}  works in two stages. First, it finds a partial list-coloring   which has the property that (i) each vertex $v \in V$ has ``many" available colors; (ii) there is not ``too much competition" for the available colors of $v$, i.e., they do not appear in the list of available colors of its neighbors. Then, it completes the coloring via a fairly straightforward  application of the Moser-Tardos algorithm. 

To describe the algorithm formally, we will need some further notation. First, it will be convenient to treat $\mathrm{Blank}$ as a color that is in the list of every vertex. For each vertex $v$ and \emph{partial} proper coloring $\sigma$  let 
\begin{itemize}
\item $N_v$ denote the set of vertices adjacent to $v$;
\item $L_v(\sigma) \subseteq \mathcal{L}_v$ to be the set of \emph{available} colors for  $v$ at state $\sigma$, i.e., the set of colors that we can assign to $v$ in $\sigma$ without making any edge monochromatic. Notice that $\mathrm{Blank}$ is always an available color;
\item $T_{v,c}(\sigma) $ to be the set of vertices $u \in N_v$ such that $\sigma(u) = \mathrm{Blank}$ and $ c \in L_u(\sigma)$.
 \end{itemize}
Let $\Omega =   \prod_{ v \in V} \mathcal{L}_v$ and define $L =    \Delta^{ \frac{ \epsilon } {2  }}   $. Given a partial list-coloring, we define the following flaws for any vertex $v$:
\begin{eqnarray*}
B_v &= & \left\{  \sigma \in \Omega:  | L_v(\sigma) | < L \right\}; \\
Z_v & =&  \left\{  \sigma  \in \Omega:   \sum_{  c \in L_v(\sigma) \setminus \mathrm{Blank} } | T_{v,c}(\sigma) | > \frac{1}{10 } L \cdot | L_v(\sigma)|  \right\} \enspace.
\end{eqnarray*}

\begin{lemma}[The Second Phase]\label{sufficient_proper_coloring}
Given a flawless partial list-coloring, a complete list-coloring of $G$ can be found in expected polynomial time.
\end{lemma}

Lemma~\ref{sufficient_proper_coloring} was proved in~\cite{molloy2017list} via a fairly straightforward application of the Lov\'{a}sz Local Lemma, and can be made constructive via the Moser-Tardos algorithm. We also present its proof in Appendix~\ref{second_phase}, as it will be useful in our analysis. What is left is to describe the first phase of the algorithm.

\begin{itemize}

\item The initial distribution $\theta$, which is important in this case, is chosen to be the following: Fix  an independent set $S$ of $G$ of size at least $n/(\Delta+1)$. (This is trivial to find efficiently via a greedy algorithm). Choose one color from $\mathcal{L}_u \setminus \mathrm{Blank}$, $u \in S$, uniformly at random, and assign it to $u$;

\item  to address a flaw $f \in \{B_v, Z_v \}$ at state $\sigma$, for each $u \in N_v$,  choose uniformly at random a color from $L_u(\sigma)$ and assign it to $u$;

\item as a flaw choice strategy, the algorithm first fixes any ordering $\pi$ over flaws. At every step, it chooses  the lowest occurring flaw according to $\pi$ and address it.

\end{itemize}

%

\subsubsection{Proving Termination}

Let $\mathcal{A}_1$, $\mathcal{A}_2$  denote the first and second phase of our algorithm, respectively.  Here we prove that $\mathcal{A}_1$ terminates in expected polynomial time. To do so, we will use the convergence result corresponding to equation~\eqref{AlgoLLL}. (Although, as  we will see, $\mathcal{A}_1$ is commutative for an appropriate choice of a causality graph, we won't  use Theorem~\ref{main} to prove its convergence. This is because $\lambda_{\mathrm{init} } $ is exponentially large in this case). 

The measure $\mu$ we use for the analysis is the  uniform measure over partial proper colorings. We will use the following lemma whose proof can be found in Section~\ref{key_proof}.
\begin{lemma}\label{key_bounds2}  For each vertex $v$  and flaw $f \in \{B_v, Z_v \} $ we have that 
\begin{align*}
\gamma(f) \le 2 \Delta^{-4} \enspace.
\end{align*}
\end{lemma}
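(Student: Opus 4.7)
\textbf{Plan for the proof of Lemma~\ref{key_bounds2}.}

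The plan is to apply the definition
\[
\gamma(f_v) \;=\; \max_{\sigma' \in \Omega}\;\frac{1}{\mu(\sigma')} \sum_{\sigma \in f_v} \mu(\sigma)\,\rho_v(\sigma,\sigma')
\]
directly, and reduce the calculation to a purely local one on $N_v$ by exploiting the triangle-free hypothesis. Since $\rho_v(\sigma,\cdot)$ only touches the coordinates in $N_v$, any $\sigma$ contributing to the sum must agree with $\sigma'$ on $V \setminus N_v$, so the sum ranges over colorings of $N_v$ that are consistent with $\sigma'$ elsewhere.

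The key structural fact, which is where triangle-freeness enters, is that for each $u \in N_v$ the remaining neighbors of $u$ lie entirely outside $N_v$; otherwise $u$, $v$, and a third vertex would form a triangle through $v$. Consequently, $L_u(\sigma)$ is determined by $\sigma(v)$ and $\sigma|_{V\setminus N_v}$, both of which coincide with the corresponding restrictions of $\sigma'$. Hence $L_u(\sigma) = L_u(\sigma')$ for every relevant predecessor $\sigma$, and the resampling probability factors cleanly as
\[
\rho_v(\sigma,\sigma') \;=\; \prod_{u \in N_v} \frac{1}{|L_u(\sigma')|}
\]
whenever it is positive. A second appeal to triangle-freeness shows that no two vertices of $N_v$ are adjacent, so the colorings on $N_v$ that extend $\sigma'|_{V\setminus N_v}$ to a state in $\Omega$ factor coordinate-by-coordinate, contributing at most $\prod_{u \in N_v} |L_u(\sigma')|$ candidate predecessors.

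Putting the pieces together, the count of predecessors and the resampling factor cancel, reducing the bound to a product of the pointwise measure ratios $\mu(\sigma)/\mu(\sigma')$ across the $|N_v|\le \Delta$ coordinates on which $\sigma$ and $\sigma'$ differ; a crude bookkeeping of these ratios in terms of the local list sizes at $N_v$ and the degree bound is enough to absorb everything into the $2\Delta^4$ budget. The main delicate point I expect is controlling this measure ratio: for the uniform measure on partial proper colorings the ratio is identically $1$ and the argument actually delivers a bound much sharper than $2\Delta^4$ (so the stated bound is deliberately conservative, presumably to keep later calculations clean), whereas for a non-uniform $\mu$ one must carefully quantify how a single-coordinate change in $N_v$ perturbs $\mu$. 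In either setting, the real work is done by the two triangle-free cancellations above, and the constant $2\Delta^4$ is the conservative leftover.
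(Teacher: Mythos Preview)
Your triangle-free cancellation is exactly right and is the first half of the paper's own argument: since no two vertices of $N_v$ are adjacent and every neighbor of $u\in N_v$ other than $v$ lies outside $N_v$, one has $L_u(\sigma)=L_u(\sigma')$ for every predecessor $\sigma$, so $\rho_v(\sigma,\sigma')=1/|\Lambda(\sigma')|$ with $\Lambda(\sigma')=\prod_{u\in N_v}|L_u(\sigma')|$, and (with $\mu$ uniform) $\gamma(f_v)=|\{\sigma\in f_v:\sigma'\in A(v,\sigma)\}|\,/\,|\Lambda(\sigma')|$.

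However, you have been misled by a typo in the displayed statement: the intended bound, and the one actually used two lines later in the LLL verification, is $\gamma(f_v)\le 2\Delta^{-4}$, not $2\Delta^{4}$. Your plan only delivers $\gamma(f_v)\le 1$, because when you count predecessors you silently drop the restriction $\sigma\in f_v$ and bound their number by all of $|\Lambda(\sigma')|$. The entire content of the lemma is that the \emph{flawed} predecessors are a $2\Delta^{-4}$ fraction of $\Lambda(\sigma')$. Concretely, since the lists $L_u(\sigma')$ are fixed, the ratio in question equals the probability that $B_v\cup Z_v$ holds when each $u\in N_v$ independently receives a uniform color from $L_u(\sigma')$; this is precisely the setup of Molloy's key probabilistic lemma, which bounds each of $\Pr[B_v]$ and $\Pr[Z_v]$ by $\Delta^{-4}$. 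Your plan never invokes that lemma, so it cannot reach the bound the application needs; the remark that the stated constant is ``deliberately conservative'' is a misreading of the exponent's sign.
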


Consider the causality graph such that  $f_v \sim f_u $,  if $\mathrm{dist}(u,v) \le 3$, where $f_v, f_u$ are either $B$-flaws or $Z$-flaws.  Notice that it has maximum degree at most $2( \Delta^3 +1)$.
Setting  $\psi_f = \psi =  \frac{1 }{ 2(\Delta^{3} +1) } $ for every flaw $f \in \{B_v,Z_v \}$ and applying~\eqref{AlgoLLL}, we get that the algorithm converges in expected polynomial time since
\begin{align*}
\gamma(f ) \sum_{ S \subseteq  \Gamma(f)   } \prod_{ g \in S } \psi_g    \le \frac{2}{ \Delta^4} \cdot  2(\Delta^3 +1) \cdot \mathrm{e}  < 4 \mathrm{e} \left(  \frac{1}{ \Delta}  + \frac{1}{\Delta^4} \right) < 1 \enspace,
\end{align*}
 for large enough $\Delta$,  and $ \log_2 | \Omega | + 2n \log_2 \left(  1 + \psi   \right)  = O (n \log n  ). $


\subsection{A Lower Bound on the Number of Possible Outputs}

In this section it will be convenient to  assume that the list of each vertex has size exactly $q$.

Let $\mathcal{A}_1$, $\mathcal{A}_2$  denote the first and second phase of our algorithm, respectively.  The bound regarding the number of list-colorings the algorithm can output with positive probability follows almost immediately from the two following lemmas.

\begin{lemma}\label{first_phase_many}
Algorithm $\mathcal{A}_1$ can output at least  $\mathrm{exp} \left( n \left( \frac{\ln q }{ \Delta+1 }  - \frac{1}{\Delta^3}  \right) \right)$ flawless partial colorings with positive probability.
\end{lemma}
\begin{proof}
It is not hard to verify that algorithm $\mathcal{A}_1$ is commutative with respect to the causality relation $\sim$ induced by neighborhoods $\Gamma(\cdot)$. To see this, notice that for any two flaws $f_v, f_u$ and any $\sigma \in f_v \cap f_u$, invoking procedure { \sc  Resample$(v,\sigma)$ } does not change the list of available colors of the neighbors of $u$. Applying Theorem~\ref{EntropyBound} (using the crude bound we saw in~\eqref{crude}) we get that the algorithm can output at least
\begin{align}\label{many_flawless}
\mathrm{exp} \left( \ln  \frac{| \Omega|}{ \lambda_{\mathrm{init}} }  - \sum_{ f \in F } \psi_f   \right)  = \mathrm{exp} \left( \ln  \frac{1}{ \max_{\sigma  \in \Omega} \theta(\sigma) } - \frac{2n}{ 2( \Delta^3 +1)}   \right) > \mathrm{exp} \left( n \left( \frac{\ln q }{ \Delta+1 }  - \frac{1}{\Delta^3}  \right) \right) \enspace
\end{align}
flawless partial colorings. 
\end{proof}

\begin{lemma}\label{combo}
Suppose $\mathcal{A}_1$ can output $N$ flawless partial colorings with positive probability. Suppose further that among these partial colorings, the ones with the lowest number of colored vertices have exactly $\alpha n$ vertices colored, where $\alpha \in (0,1)$. Then, $\mathcal{A}_2$ can  output at least  $\max\left( N q^{-(1-\alpha)n  } , \left(\frac{8L }{11 } \right)^{ (1-\alpha)n} \right) $  list-colorings with positive probability. 
\end{lemma}

The proof of Lemma~\ref{combo} can be found in Appendix~\ref{applicatia_appendix}.

\begin{proof}[Proof of Theorem~\ref{many_solutions}]

According to Lemma~\ref{first_phase_many} algorithm $\mathcal{A}_1$ can output at least  $N: =   \mathrm{exp} \left( n \left( \frac{\ln q }{ \Delta+1 }  - \frac{1}{\Delta^3}  \right) \right)$ flawless partial colorings. Moreover, according to Lemma~\ref{combo}, algorithm $\mathcal{A}_2$ can output at least 
\begin{align*}
\min_{ \alpha \in (0,1) } \left\{  \max\left( N q^{-(1-\alpha)n  } , \left(\frac{8L }{11 } \right)^{ (1-\alpha)n} \right)   \right\} 
\end{align*}
distinct full-list colorings.  Since  $N q^{-(1-\alpha)n} $ ,  $\left(\frac{8L }{11 } \right)^{ (1-\alpha)n}   $ are increasing and decreasing as functions of  $\alpha$, respectively, the value of $\alpha$ that minimizes our lower bound is the one that makes them equal, which can be seen to be
\begin{align*}
\alpha^*  := 1 - \frac{ \ln N }{ n \ln ( \frac{  8 L q}{11}   )  }   =  1 - \frac{  \frac{\ln q}{ \Delta+1} - \frac{1}{ \Delta^3 } }{ \ln q  + \ln ( \frac{8L}{11 })  }   \enspace.
\end{align*}
Therefore, algorithm $\mathcal{A}_2$ can output at least
\begin{align*}
\mathrm{exp} \left( n \left( \frac{\ln q }{ \Delta+1 }  - \frac{1}{\Delta^3}  \right) \right) \cdot q^{ -(1-\alpha^*) n }   =  \mathrm{exp}\left(  n \left(\frac{q}{ \Delta+1} - \frac{1}{\Delta^3}  \right)  (1 - \delta)    \right) \enspace,
\end{align*}
list-colorings, where $\delta := \frac{1}{ 1 + \frac{ \ln ( 8L/11) }{\ln q }  }    \in (0,1)$, concluding the proof.

\end{proof}

\subsubsection{Proof of Lemma~\ref{key_bounds2}}\label{key_proof}

It will be convenient to extend the notion of ``addressing a flaw  $f$ in a state $\sigma$" to  arbitrary states $\sigma \in \Omega $,  meaning that we recolor the vertices associated  with $f$ in the same way we would do it if the constraint corresponding to $f$ was indeed violated. Consider the following random  experiments.
\begin{itemize}
\item Address $B_v$  at an arbitrary state $\sigma \in \Omega $ to get a state $\sigma'$. Let $\Pr_{\sigma}[B_v]$ denote the probability that $\sigma' \in B_v$.
\item   Address $Z_v$  at an arbitrary state $\sigma \in \Omega $ to get a state $\sigma'$. Let $\Pr_{\sigma}[Z_v]$ denote the probability that $\sigma' \in Z_v$. 
\end{itemize}
Our claim now is  that
\begin{eqnarray}
\gamma(B_v)   & \le   &\max_{ \sigma' \in \Omega } \Pr_{\sigma'} [B_v] \label{random_bound1} \enspace ; \\
\gamma(Z_v)    &  \le &     \max_{ \sigma' \in \Omega } \Pr_{\sigma'}[Z_v]  \label{random_bound2}\enspace.
\end{eqnarray}
To see this, let $f_v \in \{B_v, Z_v \}$ and  observe that 
\begin{align*}
\gamma(f_v) =  \max_{\sigma' \in \Omega} \sum_{ \sigma \in f_v} \frac{\mu(\sigma)}{ \mu(\sigma') } \rho_v(\sigma,\sigma')  = \max_{\sigma' \in \Omega} \sum_{ \sigma \in \mathrm{In}_{f_v}(\sigma')  }   \frac{ 1}{ | \Lambda(\sigma) | } \enspace, 
\end{align*}
where $\Lambda(\sigma): = \prod_{u \in N_v} L_u(\sigma) $ is the cartesian product of the lists of available colors of each vertex $u \in N_v$ at state $\sigma$ and $\mathrm{In}_{f_v}(\sigma')$ is the set of states  $ \sigma  \in f_v$  such that $\sigma' \in A(f_v,\sigma)$. 

The   key observation now is that $\Lambda(\sigma') = \Lambda(\sigma) $ for each state $\sigma \in \mathrm{In}_{f_v}(\sigma')$. This is because any  transition of the form $\sigma \xrightarrow{f_v} \sigma'$ does not alter the lists of available colors of vertices $u \in N_v$, since the graph is triangle-free.
Thus, 
\begin{align*}
\gamma(f_v) = \max_{\sigma' \in \Omega} \frac{ |\mathrm{In}_{f_v}(\sigma') |}{ \Lambda(\sigma') }  = \max_{\sigma' \in \Omega} \Pr_{\sigma'}[ f_v ]\enspace,
\end{align*}
where the  second equality follows from the fact that there is a bijection between $ \mathrm{In}_{f_v}(\sigma') $  and the set of color assignments from $\Lambda(\sigma')$  to the vertices of $N_v$ that violate the constraint related to flaw $f_v$.

The following lemma concludes the proof.
\begin{lemma}[\cite{molloy2017list}]\label{trelo}
For every state $\sigma \in \Omega$ it holds that
\begin{enumerate}[(a)]
\item $\Pr_{\sigma}[ B_v] < \Delta^{-4}$;
\item $\Pr_{\sigma}[ Z_v] < \Delta^{-4}$.
\end{enumerate}
\end{lemma}

\subsection{Acyclic Edge Coloring via the Clique Lov\'{a}sz Local Lemma}\label{AECARAMOU}

An edge-coloring of a graph is \emph{proper} if all edges incident to each vertex have distinct colors. A proper edge coloring is \emph{acyclic} if it has no bichromatic cycles, i.e., no cycle receives exactly two (alternating) colors. The smallest number of colors for which a graph $G$ has an acyclic edge-coloring  is denoted by $\chi'_a(G)$.

Acyclic Edge Coloring (AEC), was originally motivated by the work of Coleman et al.~\cite{coleman1,coleman2} on the efficient computation of Hessians and, since then, there has been a series of works~\cite{NogaLLL,mike_stoc,Ndreca,Haeupler_jacm,CliqueLLL,acyclic} that upper bound $\chi'_a(G)$ for graphs with bounded degree.  The currenty best result was given recently by Giotis et al.\  in~\cite{Kirousis} who showed that  $ \chi'_a(G) \le 3.74 \Delta$ in graphs with maximum degree $\Delta$.

The analysis of~\cite{Kirousis}, while inspired by the algorithmic LLL, uses a custom argument that does not correspond to any of its known versions. Furthermore, their algorithm does not correspond to  an instantiation of the
Moser Tardos algorithm and does not seem to be commutative (assuming the natural formulation in our setting) and, thus, it's not amenable to our analysis. 

On the other hand, Kolipaka, Szegedy and Yixin Xu show in~\cite{CliqueLLL} that  $8.6 ( \Delta -1)$ colors  suffice   for the Moser Tardos algorithm  to converge in this setting. They do this by introducing the \emph{Clique  the Lov\'{a}sz Local Lemma}, a condition that is typically stronger than (although, technically, incomparable to)   the cluster expansion condition, but  weaker than the Shearer's condition. In fact, the Clique Lov\'{a}sz Local Lemma is a member of a ``hierarchy" of LLL conditions that are increasingly complex and  use an increasing amount of information about the structure of the dependency graph. On the limit, they give the Shearer's condition.

While the use of the Clique Lov\'{a}sz Local Lemma (or any other condition in the hierarchy of~\cite{CliqueLLL}) makes the results of~\cite{Haeupler_jacm} inapplicable,  it does allow us to use Theorems~\ref{main} and~\ref{EntropyBound}  which capture the cases where the Shearer's condition is satisfied.

 We show two results. Our first theorem says that the Moser Tardos algorithm applied on the acyclic edge coloring problem converges in polynomial time and has high output entropy whenever $q \ge 8.6 (\Delta-1)$.

\begin{theorem}\label{AECARES}
Given a graph $G=(V,E)$ with maximum degree $\Delta$ and $q \ge 8.6 (\Delta -1)$ colors, there exist at least $(\frac{q}{4})^{|E|}$  acyclic edge colorings of $G$. Furthermore, there exists an algorithm with expected polynomial running time that
outputs each one of them with positive probability.
\end{theorem}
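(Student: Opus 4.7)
My plan is to model the problem in the commutative framework of the paper, verify that Shearer's condition holds for $q \ge 8.6(\Delta-1)$ via the Clique LLL of \cite{CliqueLLL}, and then obtain both the counting statement and the algorithmic statement as twin applications of Theorems \ref{ShearerCondition} and \ref{EntropyBound}.

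Concretely, I would take $\Omega = [q]^{|E|}$ with $\mu$ the uniform measure, and introduce one flaw per violated constraint: for every pair of adjacent edges $(e,e')$ the flaw of being monochromatic, and for every even-length cycle $C$ the flaw of $C$ being bichromatic. The Moser--Tardos algorithm that picks a present flaw and uniformly resamples the colors of the edges involved is a resampling oracle ($\gamma(f)=\mu(f)$ for each $f$), and it is commutative under the natural causality relation $f \sim f'$ iff $f$ and $f'$ share an edge (this is the standard variable-setting commutativity already mentioned right after Definition \ref{commutativity}). Setting $\theta=\mu$ gives $\lambda_{\mathrm{init}}=1$.

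Next I would invoke the main technical result of \cite{CliqueLLL}: grouping the flaws into cliques indexed by edges (each edge $e$ giving a clique consisting of all flaws using $e$) and applying the Clique LLL criterion on this decomposition shows that Shearer's condition is satisfied whenever $q \ge 8.6(\Delta-1)$. At this point Theorem \ref{ShearerCondition} (together with Remark \ref{shearer_remark}) provides the bound $\ex[N_i] \le q_{\{i\}}(\gamma)/q_\emptyset(\gamma)$ on resamplings of each flaw. The super-polynomial number of flaws is handled exactly as in Section \ref{Discussion}: the causality graph decomposes into $|E|$ cliques (one per edge), so Theorem \ref{clique_decomp} yields an expected polynomial number of steps, and a flaw present in the current state is found efficiently by scanning each edge's clique and using the routine of \cite{Haeupler_jacm,EnuHarris} for the cycle flaws.

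For the counting statement I would apply Theorem \ref{EntropyBound} under Remark \ref{shearer_remark}, let $\rho \to \infty$ to obtain a lower bound on the min-entropy of the output distribution $\nu$:
\begin{equation*}
H_\infty[\nu] \;\ge\; H_\infty[\mu] \;-\; \ln\!\left(\sum_{S \in \mathrm{Ind}([m])}\prod_{j \in S}\frac{q_{\{j\}}(\gamma)}{q_\emptyset(\gamma)}\right),
\end{equation*}
and then bound the independent-set sum by $4^{|E|}$ using the clique decomposition (inside each edge-clique at most one flaw can be selected, contributing a geometric sum controlled by $8.6(\Delta-1)/q$). Since $H_\infty[\mu] = |E|\ln q$, this yields $H_\infty[\nu] \ge |E|\ln(q/4)$, hence the support of $\nu$ has size at least $(q/4)^{|E|}$; every state in the support is an acyclic edge coloring (as $\mathcal{A}$ only halts at flawless states) and is produced with positive probability.

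The main obstacle I anticipate is the precise control of the independent-set polynomial $\sum_S \prod_j q_{\{j\}}/q_\emptyset$ under Shearer's condition: naive bounds through cluster expansion would degrade the constant, and a clean $4^{|E|}$ bound requires exploiting the clique decomposition of flaws per edge the same way \cite{CliqueLLL} do when verifying Shearer. The rest, including running time and efficient flaw search, is routine once commutativity and the clique decomposition are in place.
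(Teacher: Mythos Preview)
Your overall architecture matches the paper's: Moser--Tardos on path and cycle flaws, commutativity for free in the variable setting, Clique LLL to certify Shearer at $q\ge 8.6(\Delta-1)$, then Theorem~\ref{EntropyBound} with Remark~\ref{shearer_remark} for the count. Two places, however, need sharpening.

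\textbf{The $4^{|E|}$ bound.} Your heuristic ``inside each edge-clique at most one flaw can be selected, contributing a geometric sum'' does not go through as stated: a cycle flaw of length $2\ell$ sits in $2\ell$ different edge-cliques simultaneously, so the independent-set polynomial $\sum_{S\in\mathrm{Ind}}\prod_{j\in S} q_{\{j\}}/q_\emptyset$ does not factor edge-by-edge in any obvious way. The paper closes this gap by invoking a specific technical lemma of Harris--Srinivasan~\cite{EnuHarris} (restated as Theorem~\ref{dense_counting}): one replaces each ratio $q_{\{c\}}/q_\emptyset$ by $y_c=(1+q_{\{c\}}/q_\emptyset)^{1/|\mathrm{var}(c)|}-1$ and obtains $\sum_{S\in\mathrm{Ind}}\prod_{c\in S} q_{\{c\}}/q_\emptyset \le \prod_{e\in E}\bigl(1+\sum_{c\ni e} y_c\bigr)$. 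The exponent $1/|\mathrm{var}(c)|$ is exactly what distributes a long cycle's weight across its many edges; plugging in the Clique-LLL bounds on $q_{\{c\}}/q_\emptyset$ from~\eqref{clique_bound} then gives the per-edge factor below $4$. Without this device (or an equivalent one) your argument stalls at the point you yourself flagged as the main obstacle.

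\textbf{Running time.} Theorem~\ref{clique_decomp} assumes the \emph{General} LLL condition $\gamma(f_i)\le x_i\prod_{j\in\Gamma(i)}(1-x_j)$, not the Clique LLL; at $q=8.6(\Delta-1)$ the General LLL is not known to hold for this flaw family, so you cannot invoke it directly. The paper instead appeals to Theorem~7 of~\cite{szege_meet} (Shearer with multiplicative slack implies polynomial resamplings in the number of variables), together with a flaw-choice strategy that prioritizes path flaws so that cycle flaws only need to be located in a properly edge-colored graph, where they can be found in time polynomial in $|E|$ and $q$.
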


Our second theorem considers a problem of weighted acyclic edge colorings. In particular, given a graph $G(V,E)$ let $W = \sum_{v \in V} W_v $ be a weighting function over edge $q$-colorings of $G$ such that each $W_v $, $v \in V$, is a function of the colors of the edges adjacent to $v$.  By sampling uniformly at random, one can find  an edge coloring $\phi$ of weight $\ex_{\phi \sim \mu } [ W(\phi) ] $, where $\mu$ is the uniform distribution over the edge $q$-colorings of $G$. Using Theorem~\ref{main}, we  show that whenever $q \ge 8.6 (\Delta-1) $ we can use the Moser-Tardos algorithm to find an acyclic edge coloring of similar weight (assuming that $\Delta$ is constant). 

\begin{theorem}\label{LowWeight}
Given a graph $G(V,E)$ with maximum degree $\Delta$ and $ q \ge 8.6(\Delta-1)$ colors, there exist an algorithm with expected polynomial running time that outputs an acyclic edge coloring $\phi_{\mathrm{out} }$ of expected weight at most
\begin{align*}
 \ex[\phi_{\mathrm{out} } ]  <  1.3^\Delta \cdot  \ex_{\phi \sim \mu } [ W(\phi)]  \enspace.
 \end{align*}
 \end{theorem}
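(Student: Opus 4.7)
}
The starting point is the Moser--Tardos algorithm for acyclic edge coloring analyzed in~\cite{CliqueLLL}: the flaws are the proper-coloring violations together with the bichromatic (even) cycles, and each flaw is addressed by uniformly and independently resampling all edges on it from the palette $[q]$. The Moser--Tardos algorithm is commutative (Section~\ref{commutative_setting}), and the Clique LLL condition of~\cite{CliqueLLL} holds for $q\ge 8.6(\Delta-1)$, which (since the Clique LLL is stronger than Shearer's) implies Shearer's condition on the causality graph of these flaws. Hence Theorem~\ref{ShearerCondition} applies: the algorithm terminates in expected polynomial time when started from the uniform distribution $\mu$ over edge $q$-colorings, and $\lambda_{\mathrm{init}}=1$.

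By linearity of expectation it suffices to show $\ex[W_v(\phi_{\mathrm{out}})]\le 1.3^{\Delta}\cdot\ex_{\phi\sim\mu}[W_v(\phi)]$ for every vertex $v$, because summing over $v\in V$ then yields the theorem. Fix $v$ and enumerate color assignments $c\in[q]^{\deg(v)}$ to the edges incident to $v$; let $E_c\subseteq\Omega$ be the event that $\phi_{\mathrm{out}}$ restricted to these edges equals $c$. Form the auxiliary extension $\mathcal{A}'$ of Moser--Tardos by adjoining $E_c$ as a fictitious flaw whose resampling oracle re-draws all edges incident to $v$ independently and uniformly from $[q]$. As in the discussion following Theorem~\ref{main}, this extension is commutative, has charge $\gamma(E_c)=\mu(E_c)=q^{-\deg(v)}$, and its neighborhood $\Gamma(E_c)$ is exactly the set of (original) flaws that share at least one edge with $v$. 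Theorem~\ref{ShearerCondition}(2) then yields
\begin{align*}
\Pr[E_c]\;\le\;\mu(E_c)\cdot K_v(c),\qquad K_v(c)\;:=\sum_{S\in\mathrm{Ind}(\Gamma(E_c))}\prod_{j\in S}\frac{q_{\{j\}}(\gamma)}{q_{\emptyset}(\gamma)}.
\end{align*}
Writing $W_v(\phi)=\sum_{c} W_v(c)\,\mathbf{1}[\phi\in E_c]$ and combining these bounds gives $\ex[W_v(\phi_{\mathrm{out}})]\le(\max_c K_v(c))\cdot\ex_{\phi\sim\mu}[W_v(\phi)]$, reducing everything to the uniform bound $\max_c K_v(c)\le 1.3^{\Delta}$.

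The main obstacle, and the quantitative heart of the proof, is establishing this $1.3^{\Delta}$ bound on the inflation factor $K_v(c)$. The plan is to exploit the clique decomposition of the causality graph used in~\cite{CliqueLLL}, in which there is one clique per edge of $G$ containing every bichromatic-cycle flaw passing through that edge (plus the proper-coloring flaws adjacent to it). Since every flaw in $\Gamma(E_c)$ contains an edge incident to $v$, the neighborhood $\Gamma(E_c)$ is covered by the $\deg(v)\le\Delta$ cliques associated with the edges at $v$; any independent set in $\Gamma(E_c)$ therefore contains at most one flaw per incident edge, and $K_v(c)$ factorizes as a product of $\deg(v)$ per-edge factors of the form $1+\sum_{j\in\mathrm{clique}(e)} q_{\{j\}}(\gamma)/q_\emptyset(\gamma)$. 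Using the standard estimate $q_{\{j\}}(\gamma)/q_\emptyset(\gamma)\le \psi_j$ valid under Clique-LLL (cf.\ \cite{szege_meet,HV,CliqueLLL}) together with the per-clique computation of~\cite{CliqueLLL} for the AEC dependency structure, each per-edge factor is controlled by a quantity whose maximum over $q\ge 8.6(\Delta-1)$ is at most $1.3$; multiplying the $\Delta$ per-edge factors gives the required $1.3^{\Delta}$. The delicate step here is the per-edge estimate, where one must use the full slack between the actual palette size $8.6(\Delta-1)$ and the Clique-LLL threshold to beat the trivial bound.
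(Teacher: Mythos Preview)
Your proposal is correct and follows essentially the same route as the paper's proof: reduce by linearity to a single vertex $v$, apply Theorem~\ref{ShearerCondition}(2) to each event $E_c$ that fixes the colors on the edges at $v$, bound $\sum_{S\in\mathrm{Ind}(\Gamma(E_c))}\prod_{j\in S}q_{\{j\}}/q_\emptyset$ by the product over the (at most $\Delta$) edge--cliques at $v$, and control each per-edge factor using the Clique-LLL estimate~\eqref{clique_bound} with the explicit parameters $\epsilon,c$ from~\cite{CliqueLLL}. One small wording point: the product over edge-cliques is an \emph{upper bound} for $K_v(c)$ rather than an exact factorization (the cliques at $v$ overlap on cycle-flaws that use two edges incident to $v$), but an upper bound is exactly what you need.
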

The proofs of Theorems~\ref{AECARES},~\ref{LowWeight}, as well as more details on the Clique Lov\'{a}sz Local Lemma, can be found in Section~\ref{applicatia_appendix}.

\section{Acknowledgements}
The author is grateful to Dimitris Achlioptas and Alistair Sinclair  for  detailed comments and feedback, as well as to anonymous reviewers for comments and remarks.

\bibliographystyle{plain}
\bibliography{kolmo}

\begin{thebibliography}{10}

\bibitem{Harmonic}
Dimitris Achlioptas and Fotis Iliopoulos.
\newblock Focused stochastic local search and the {L}ov{\'{a}}sz local lemma.
\newblock In {\em Proceedings of the Twenty-Seventh Annual {ACM-SIAM} Symposium
  on Discrete Algorithms, {SODA} 2016, Arlington, VA, USA, January 10-12,
  2016}, pages 2024--2038, 2016.

\bibitem{AIJACM}
Dimitris Achlioptas and Fotis Iliopoulos.
\newblock Random walks that find perfect objects and the {L}ov\'{a}sz local
  lemma.
\newblock {\em J. ACM}, 63(3):22:1--22:29, July 2016.

\bibitem{AIS}
Dimitris Achlioptas, Fotis Iliopoulos, and Alistair Sinclair.
\newblock A new perspective on stochastic local search and the lovasz local
  lemma.
\newblock {\em CoRR}, abs/1805.02026, 2018.

\bibitem{alon_lll}
Noga Alon.
\newblock A parallel algorithmic version of the local lemma.
\newblock {\em Random Struct. Algorithms}, 2(4):367--378, 1991.

\bibitem{NogaLLL}
Noga Alon.
\newblock A parallel algorithmic version of the local lemma.
\newblock {\em Random Structures \& Algorithms}, 2(4):367--378, 1991.

\bibitem{alon1991acyclic}
Noga Alon, Colin Mcdiarmid, and Bruce Reed.
\newblock Acyclic coloring of graphs.
\newblock {\em Random Structures \& Algorithms}, 2(3):277--288, 1991.

\bibitem{beck_lll}
J{\'o}zsef Beck.
\newblock An algorithmic approach to the {L}ov\'asz local lemma. {I}.
\newblock {\em Random Structures Algorithms}, 2(4):343--365, 1991.

\bibitem{MolloyLLL}
Anton Bernshteyn.
\newblock The johansson--molloy theorem for dp-coloring.
\newblock {\em arXiv preprint arXiv:1708.03843}, 2017.

\bibitem{Bissacot}
Rodrigo Bissacot, Roberto Fern{\'{a}}ndez, Aldo Procacci, and Benedetto
  Scoppola.
\newblock An improvement of the {L}ov{\'{a}}sz local lemma via cluster
  expansion.
\newblock {\em Combinatorics, Probability {\&} Computing}, 20(5):709--719,
  2011.

\bibitem{determ}
Karthekeyan Chandrasekaran, Navin Goyal, and Bernhard Haeupler.
\newblock Deterministic algorithms for the {L}ov{\'{a}}sz local lemma.
\newblock {\em {SIAM} J. Comput.}, 42(6):2132--2155, 2013.

\bibitem{PartResmp2}
Antares Chen, David~G. Harris, and Aravind Srinivasan.
\newblock Partial resampling to approximate covering integer programs.
\newblock In Robert Krauthgamer, editor, {\em Proceedings of the Twenty-Seventh
  Annual {ACM-SIAM} Symposium on Discrete Algorithms, {SODA} 2016, Arlington,
  VA, USA, January 10-12, 2016}, pages 1984--2003. {SIAM}, 2016.

\bibitem{min_entropy3}
Benny Chor and Oded Goldreich.
\newblock Unbiased bits from sources of weak randomness and probabilistic
  communication complexity.
\newblock {\em {SIAM} J. Comput.}, 17(2):230--261, 1988.

\bibitem{distributed}
Kai{-}Min Chung, Seth Pettie, and Hsin{-}Hao Su.
\newblock Distributed algorithms for the {L}ov{\'{a}}sz local lemma and graph
  coloring.
\newblock In Magn{\'{u}}s~M. Halld{\'{o}}rsson and Shlomi Dolev, editors, {\em
  {ACM} Symposium on Principles of Distributed Computing, {PODC} '14, Paris,
  France, July 15-18, 2014}, pages 134--143. {ACM}, 2014.

\bibitem{coleman2}
Thomas~F Coleman and Jin~Yi Cai.
\newblock The cyclic coloring problem and estimation of spare hessian matrices.
\newblock {\em SIAM J. Algebraic Discrete Methods}, 7(2):221--235, April 1986.

\bibitem{coleman1}
Thomas~F. Coleman and Mor\'{e}~Jorge J.
\newblock Estimation of sparse hessian matrices and graph coloring problems.
\newblock {\em Mathematical Programming}, 28(3):243--270, 1984.

\bibitem{Czumaj_lll}
Artur Czumaj and Christian Scheideler.
\newblock Coloring non-uniform hypergraphs: a new algorithmic approach to the
  general {L}ov\'asz local lemma.
\newblock In {\em Proceedings of the {E}leventh {A}nnual {ACM}-{SIAM}
  {S}ymposium on {D}iscrete {A}lgorithms ({S}an {F}rancisco, {CA}, 2000)},
  pages 30--39, 2000.

\bibitem{LLL}
Paul Erd{\H{o}}s and L\'{a}szl\'{o} Lov{\'a}sz.
\newblock Problems and results on {$3$}-chromatic hypergraphs and some related
  questions.
\newblock In {\em Infinite and finite sets ({C}olloq., {K}eszthely, 1973;
  dedicated to {P}. {E}rd{\H o}s on his 60th birthday), {V}ol. {II}}, pages
  609--627. Colloq. Math. Soc. J\'anos Bolyai, Vol. 10. North-Holland,
  Amsterdam, 1975.

\bibitem{acyclic}
Louis Esperet and Aline Parreau.
\newblock Acyclic edge-coloring using entropy compression.
\newblock {\em European Journal of Combinatorics}, 34(6):1019--1027, 2013.

\bibitem{Kirousis}
Ioannis Giotis, Lefteris~M. Kirousis, Kostas~I. Psaromiligkos, and Dimitrios~M.
  Thilikos.
\newblock Acyclic edge coloring through the {L}ov{\'{a}}sz local lemma.
\newblock {\em Theor. Comput. Sci.}, 665:40--50, 2017.

\bibitem{ParallelHarrisHaeupler}
Bernhard Haeupler and David~G Harris.
\newblock Parallel algorithms and concentration bounds for the {L}ov{\'a}sz
  local lemma via witness-{DAG}s.
\newblock In {\em Proceedings of the Twenty-Eighth Annual ACM-SIAM Symposium on
  Discrete Algorithms}, pages 1170--1187. SIAM, 2017.

\bibitem{Haeupler_jacm}
Bernhard Haeupler, Barna Saha, and Aravind Srinivasan.
\newblock New constructive aspects of the {L}ov\'{a}sz local lemma.
\newblock {\em J. ACM}, 58(6):Art. 28, 28, 2011.

\bibitem{NewBoundsHarris}
David~G. Harris.
\newblock New bounds for the {M}oser-{T}ardos distribution: Beyond the {L}ovasz
  local lemma.
\newblock {\em CoRR}, abs/1610.09653, 2016.

\bibitem{ParallelHarris}
David~G. Harris.
\newblock Oblivious resampling oracles and parallel algorithms for the lopsided
  {L}ov{\'{a}}sz local lemma.
\newblock {\em CoRR}, abs/1702.02547, 2017.

\bibitem{PartResmp1}
David~G. Harris and Aravind Srinivasan.
\newblock The {M}oser-{T}ardos framework with partial resampling.
\newblock In {\em 54th Annual {IEEE} Symposium on Foundations of Computer
  Science, {FOCS} 2013, 26-29 October, 2013, Berkeley, CA, {USA}}, pages
  469--478. {IEEE} Computer Society, 2013.

\bibitem{PermHarris}
David~G. Harris and Aravind Srinivasan.
\newblock A constructive algorithm for the {L}ov{\'{a}}sz local lemma on
  permutations.
\newblock In Chandra Chekuri, editor, {\em Proceedings of the Twenty-Fifth
  Annual {ACM-SIAM} Symposium on Discrete Algorithms, {SODA} 2014, Portland,
  Oregon, USA, January 5-7, 2014}, pages 907--925. {SIAM}, 2014.

\bibitem{EnuHarris}
David~G. Harris and Aravind Srinivasan.
\newblock Algorithmic and enumerative aspects of the {M}oser-{T}ardos
  distribution.
\newblock In {\em Proceedings of the Twenty-Seventh Annual {ACM-SIAM} Symposium
  on Discrete Algorithms, {SODA} 2016, Arlington, VA, USA, January 10-12,
  2016}, pages 2004--2023, 2016.

\bibitem{HV}
Nicholas J.~A. Harvey and Jan Vondr{\'{a}}k.
\newblock An algorithmic proof of the {L}ov\'{a}sz local lemma via resampling
  oracles.
\newblock In Venkatesan Guruswami, editor, {\em {IEEE} 56th Annual Symposium on
  Foundations of Computer Science, {FOCS} 2015, Berkeley, CA, USA, 17-20
  October, 2015}, pages 1327--1346. {IEEE} Computer Society, 2015.

\bibitem{szege_meet}
Kashyap Babu~Rao Kolipaka and Mario Szegedy.
\newblock {M}oser and {T}ardos meet {L}ov{\'a}sz.
\newblock In {\em STOC}, pages 235--244. ACM, 2011.

\bibitem{CliqueLLL}
Kashyap Babu~Rao Kolipaka, Mario Szegedy, and Yixin Xu.
\newblock A sharper local lemma with improved applications.
\newblock In Anupam Gupta, Klaus Jansen, Jos{\'{e}} D.~P. Rolim, and Rocco~A.
  Servedio, editors, {\em Approximation, Randomization, and Combinatorial
  Optimization. Algorithms and Techniques - 15th International Workshop,
  {APPROX} 2012, and 16th International Workshop, {RANDOM} 2012, Cambridge, MA,
  USA, August 15-17, 2012. Proceedings}, volume 7408 of {\em Lecture Notes in
  Computer Science}, pages 603--614. Springer, 2012.

\bibitem{Kolmofocs}
Vladimir Kolmogorov.
\newblock Commutativity in the algorithmic {L}ov{\'{a}}sz local lemma.
\newblock In Irit Dinur, editor, {\em {IEEE} 57th Annual Symposium on
  Foundations of Computer Science, {FOCS} 2016, 9-11 October 2016, Hyatt
  Regency, New Brunswick, New Jersey, {USA}}, pages 780--787. {IEEE} Computer
  Society, 2016.

\bibitem{lu2009new}
Linyuan Lu and Laszlo~A Szekely.
\newblock A new asymptotic enumeration technique: the {L}ov{\'a}sz local lemma.
\newblock {\em arXiv preprint arXiv:0905.3983}, 2009.

\bibitem{molloy2017list}
Michael Molloy.
\newblock The list chromatic number of graphs with small clique number.
\newblock {\em arXiv preprint arXiv:1701.09133}, 2017.

\bibitem{mike_stoc}
Michael Molloy and Bruce Reed.
\newblock Further algorithmic aspects of the local lemma.
\newblock In {\em S{TOC} '98 ({D}allas, {TX})}, pages 524--529. ACM, New York,
  1999.

\bibitem{mike_book}
Michael Molloy and Bruce Reed.
\newblock {\em Graph colouring and the probabilistic method}, volume~23 of {\em
  Algorithms and Combinatorics}.
\newblock Springer-Verlag, Berlin, 2002.

\bibitem{Moser}
Robin~A. Moser.
\newblock A constructive proof of the {L}ov\'asz local lemma.
\newblock In {\em S{TOC}'09---{P}roceedings of the 2009 {ACM} {I}nternational
  {S}ymposium on {T}heory of {C}omputing}, pages 343--350. ACM, New York, 2009.

\bibitem{MT}
Robin~A. Moser and G{\'a}bor Tardos.
\newblock A constructive proof of the general {L}ov\'asz local lemma.
\newblock {\em J. ACM}, 57(2):Art. 11, 15, 2010.

\bibitem{Ndreca}
Sokol Ndreca, Aldo Procacci, and Benedetto Scoppola.
\newblock Improved bounds on coloring of graphs.
\newblock {\em Eur. J. Comb.}, 33(4):592--609, May 2012.

\bibitem{papafocus}
Christos~H. Papadimitriou.
\newblock On selecting a satisfying truth assignment.
\newblock In {\em FOCS}, pages 163--169. IEEE Computer Society, 1991.

\bibitem{PegdenIndepen}
Wesley Pegden.
\newblock An extension of the {M}oser-{T}ardos algorithmic local lemma.
\newblock {\em {SIAM} J. Discrete Math.}, 28(2):911--917, 2014.

\bibitem{Shearer}
J.B. Shearer.
\newblock On a problem of {S}pencer.
\newblock {\em Combinatorica}, 5(3):241--245, 1985.

\bibitem{aravind_08}
Aravind Srinivasan.
\newblock Improved algorithmic versions of the {L}ov\'asz local lemma.
\newblock In Shang{-}Hua Teng, editor, {\em SODA}, pages 611--620. {SIAM},
  2008.

\end{thebibliography}

\appendix

 \newpage

\section{Dealing with Super-Polynomially Many Flaws}\label{super-poly}

In this section we discuss how one can deal with problems where the number of flaws is super-polynomial in the natural size of the problem using commutative algorithms.

In such a setting,  there are two issues to be resolved. The first issue  is that one should be able to show that the expected number of steps until convergence is polynomial, and thus, much less than $\Theta(|F|) $. The second issue is that one should have an efficient procedure for  finding a flaw that is present in the current state, or decide that no such flaw exists.

\paragraph{Polynomial-Time Convergence.}

As far as the  issue of polynomial-time convergence is concerned, there are at least three approaches one can follow. 

A first approach is to start the algorithm at a state $\sigma_1$ in  which the set  of flaws present   is of polynomial size, and then employ the main results from~\cite{AIJACM,Harmonic,Kolmofocs} which  guarantee that the algorithm will terminate after $O\left( |U(\sigma_1)| +  \max_{\sigma \in \Omega} \log_2 \frac{1}{  \mu(\sigma) } \right) $ steps with high probability. This approach does not require  the algorithm to be commutative, but it does require that the LLL condition is satisfied with a slack in order to establish quick termination.

 A second approach, which was first applied in the context of  the Moser-Tardos algorithm by Haeupler, Saha and Srinivsan~\cite{Haeupler_jacm}, is to find a \emph{core} set of flaws of polynomial size and apply a modified version of the algorithm that effectively ignores any non-core flaw. The hope  is that non-core flaws will never occur during the execution of this modified algorithm. Extended to our setting, 
one uses the following theorem which is a straightforward corollary of Theorem~\ref{main}.
\begin{theorem}\label{core_events}
Assume that $\mathcal{A} = (  F, \sim ,\rho) $ is commutative. Let $ I \subseteq [m] $ be a set of indices that corresponds to a core subset of $F$ and  assume there exist positive real numbers  $\{ \psi_i \}_{i=1}^{m}$ such that for every $i \in [m]$
\begin{align*}
\gamma(f_i)    \sum \limits_{S \in \mathrm{Ind} \left(\Gamma(i)\cap I \right)  }  \prod_{j \in S} \psi_j  \le  \psi_i \enspace.
\end{align*}
Then there exists a modification of $\mathcal{A}$ that terminates in an expected number of $ O\left( \lambda_{\mathrm{init}} \sum_{i \in I} \psi_i \right) $ steps and outputs a flawless element with probability at least $1 -  \sum_{ i \in [m] \setminus I } \lambda_{\mathrm{init} }\gamma(f_i) \sum_{ S \in  \mathrm{Ind}(\Gamma(i)  \cap I) } \prod_{j \in S} \psi_j$.
\end{theorem}

Finally, a third approach is to show that the causality graph can be decomposed into a set of cliques of polynomial size and then apply a result   of~\cite{Haeupler_jacm} which states  that, in this case, the running time of the algorithm is polynomial (roughly quadratic) in the size of the decomposition. To be more precise, we note that in~\cite{Haeupler_jacm} the latter result  is shown for the Moser-Tardos algorithm  in the variable setting and assuming the General LLL condition~\eqref{eq:LLL}, where the clique decomposition considered is induced by the random variables that form the probability space (one clique per variable). However, the proof for the general case is identical. Using Theorem~\ref{main} and recalling Remark~\ref{general} we can extend this result to our setting to get the following theorem.

\begin{theorem}\label{clique_decomp}
Let $\mathcal{A} = (D,F, \sim,\rho)$ be a commutative algorithm such that the causality graph induced by $\sim$ can be partitioned into $n$ cliques, with potentially further edges between them. Assume there exist  real numbers $\{ x_i\}_{i=1}^{m}$ in $(0,1)$ such that for  every $i \in [m]$ we have that
\begin{align*}
\gamma(f_i) \le x_i \prod_{ j \in \Gamma(i)} (1- x_j) \enspace,
\end{align*}
and let $\delta := \min_{ i \in [m] } x_i \prod_{j \in \Gamma(i) } (1 - x_j) $. Then, the expected number of steps performed by  $\mathcal{A}$ is at most $t = O \left( \lambda_{\mathrm{init} } \cdot \frac{n}{\epsilon}  \log \frac{ n \log ( 1/ \delta) }{ \epsilon} \right)$, and for any parameter $\eta$, $\mathcal{A}$ terminates 
within $\eta t$ resamplings with probability $1 - \mathrm{e}^{-\eta }$.
\end{theorem}
\begin{remark}
In~\cite{Haeupler_jacm} it is argued that in the vast majority of applications $\delta = O(n \log n ) $ and in many cases even linear in $n$.
\end{remark}

Following Theorem~\ref{main},  the proof of Theorem~\ref{clique_decomp} is  identical to the analogous result  of Hauepler, Saha and Srinivasan~\cite{Haeupler_jacm} for the Moser-Tardos algorithm and hence we omit it.

\paragraph{Fast Search for Flaws.}

Searching for occurring flaws efficiently can be a major obstacle in getting polynomial time algorithms, even in the case where convergence is guaranteed after a polynomial number of steps. Again, there is more than one approach one can follow 
to deal with this issue.

A first approach was introduced in~\cite{Haeupler_jacm}  where it is shown that Theorems~\ref{core_events} and~\ref{clique_decomp}, in the context of the variable setting,  can be combined into a single theorem that guarantees the existence of a Monte Carlo algorithm which runs in polynomial time, even in the presence of super-polynomially many flaws. The theorem assumes the existence of a polynomial size decomposition of the causality graph into cliques and, moreover, that the LLL conditions hold with an exponential slack. Using Theorem~\ref{main}, we can extend  this result in a straightforward way to our setting to get:

\begin{theorem}
Let $\mathcal{A} = (D,F, \sim,\rho)$ be a commutative algorithm such that the causality graph induced by $\sim$ can be partitioned into $n$, with potentially further edges between them. Assume there exist real numbers  $\{ x_i\}_{i=1}^{m}$ and $\epsilon$ in $(0,1)$ be such that for every $i \in [m]$ we have that
\begin{align*}
\gamma(f_i)^{1-\epsilon} \le x_i \prod_{ j \in \Gamma(i)} (1- x_j) \enspace.
\end{align*}
If we furthermore have that $ \log 1/ \delta \le \mathrm{poly}(n)$, where $\delta = \min_{ i \in [m] } x_i \prod_{j \in \Gamma(i) } (1 - x_j) $, then for every $\gamma \ge \frac{1}{ \mathrm{poly}(n)}$  the set $\{ i \in [m] \text{ s.t. } \gamma(f_i) \ge \gamma \}$ has size at most $\mathrm{poly}(n)$. There also exists a Monte Carlo algorithm that terminates after $O( \lambda_{\mathrm{init} } \frac{n}{\epsilon} \log \frac{n}{\epsilon^2} )$ steps and returns a perfect object with probability at least $1 - n^{-c}$, where $c$ is any desired constant.
\end{theorem}

In a  follow-up work~\cite{EnuHarris},  Harris and Srinivasan describe a general technique that yields efficient procedures for searching for flaws. The main building blocks of their technique is a ``witness tree lemma for internal states" and problem-specific, possibly randomized, data-structures that contain the flaws that are present in each state.  We refer the reader  to~\cite{EnuHarris} for more details, but we note that combining the proof of~\cite{EnuHarris} with the proof of Theorems~\ref{WitnessTreeLemma} and~\ref{main}, one can show that the  ``witness tree lemma for internal states" holds for commutative algorithms.

\section{Proofs Omitted from Section~\ref{Proofs}}\label{Misc}

\subsection{Proof of Lemma~\ref{witness_trees_sum}}

We show the following more general lemma. The claim follows by applying this general lemma with $\mathrm{List}(j) = \mathrm{Ind}( \Gamma(i))$ for every $j \in [m]$.
  \begin{lemma}\label{counting}
Assume that for every $i \in [m]$ were are given a set $\mathrm{List}(i) \subseteq 2^{[m]}$ and there exist positive numbers $\{ \psi_i \}_{i=1}^m$ such that for each $i$:
\begin{align*}
\frac{ \gamma(f_i) }{ \psi_i} \sum_{S \in \mathrm{List}(i)  }  \prod_{j \in S} \psi_j \le 1 \enspace.
\end{align*}
Let $\mathcal{L}_i$ be  the set of trees whose root is labelled by $i$ and such that  the set of labels of every node $v$ with labael $(v)= j$ is in $\mathrm{List}(j)$, for every $j \in [m]$. Then:
\begin{align*}
\sum_{ \tau \in \mathcal{L}_i  }  \prod_{ v \in V(\tau) } \gamma \left(  f_{ (v  )}  \right) \le \psi_i \enspace.
\end{align*}
\end{lemma}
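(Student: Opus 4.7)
The plan is to prove Lemma~\ref{counting} (which implies Lemma~\ref{witness_trees_sum} by taking $\mathrm{List}(j) = \mathrm{Ind}(\Gamma(j))$ for every $j$) by a standard Galton--Watson branching process argument in the spirit of \cite{MT,PegdenIndepen}. Define, for every $j \in [m]$, the normalizing quantity
\[
Z_j \;=\; \sum_{S \in \mathrm{List}(j)} \prod_{k \in S} \psi_k,
\]
so that the hypothesis reads $\gamma(f_j)\,Z_j \le \psi_j$. Consider the branching process $\mathcal{T}$ that produces a labeled tree as follows: start with a single root labeled by $i$; inductively, each node labeled $j$ independently chooses a child-label set $S \in \mathrm{List}(j)$ with probability $\frac{1}{Z_j}\prod_{k \in S} \psi_k$, and creates a child labeled $k$ for every $k \in S$.

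For any finite tree $\tau \in \mathcal{L}_i$, let $N(v)$ denote the label-set of the children of $v$ in $\tau$; then the probability that $\mathcal{T}$ produces exactly $\tau$ equals
\[
p_\tau \;=\; \prod_{v \in V(\tau)} \frac{\prod_{k \in N(v)} \psi_k}{Z_{[v]}}.
\]
The key bookkeeping observation is that every non-root vertex of $\tau$ appears in exactly one set $N(v)$, while the root does not appear in any such set. Therefore
\[
\prod_{v \in V(\tau)} \prod_{k \in N(v)} \psi_k \;=\; \prod_{v \in V(\tau)\setminus\{\mathrm{root}\}} \psi_{[v]} \;=\; \frac{1}{\psi_i}\prod_{v \in V(\tau)} \psi_{[v]},
\]
which yields the clean formula $p_\tau = \frac{1}{\psi_i}\prod_{v \in V(\tau)} \frac{\psi_{[v]}}{Z_{[v]}}$.

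Since the trees in $\mathcal{L}_i$ are pairwise distinct outcomes of $\mathcal{T}$, we have $\sum_{\tau \in \mathcal{L}_i} p_\tau \le 1$, i.e.,
\[
\sum_{\tau \in \mathcal{L}_i} \prod_{v \in V(\tau)} \frac{\psi_{[v]}}{Z_{[v]}} \;\le\; \psi_i.
\]
Now apply the hypothesis $\gamma(f_j) \le \psi_j/Z_j$ termwise to each vertex of each tree:
\[
\sum_{\tau \in \mathcal{L}_i} \prod_{v \in V(\tau)} \gamma(f_{[v]}) \;\le\; \sum_{\tau \in \mathcal{L}_i} \prod_{v \in V(\tau)} \frac{\psi_{[v]}}{Z_{[v]}} \;\le\; \psi_i,
\]
which is the desired bound. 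The only subtle point worth double-checking is that $\mathcal{L}_i$ injects into the outcome space of $\mathcal{T}$ (so that the probabilities really do sum to at most one); this is immediate because witness trees are unordered rooted trees and $\mathcal{T}$ generates each such tree as a distinct event (the only care needed is to use the unordered-tree convention consistently when writing $p_\tau$, which costs at most a factor that favors us). No further obstacle arises, so this completes the proof sketch.
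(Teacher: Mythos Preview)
Your proof is correct and follows essentially the same approach as the paper: both define the identical Galton--Watson branching process (each node with label $j$ independently spawns a child set $S\in\mathrm{List}(j)$ with probability $\prod_{k\in S}\psi_k/Z_j$), derive the same formula $p_\tau=\frac{1}{\psi_i}\prod_{v\in V(\tau)}\psi_{[v]}/Z_{[v]}$ via the same bookkeeping identity, and finish with the same termwise application of $\gamma(f_j)\le\psi_j/Z_j$ together with $\sum_\tau p_\tau\le 1$. The only cosmetic difference is that the paper writes $\sum_\tau p_\tau=1$ whereas you (more carefully) write $\le 1$; your caveat about the unordered-tree convention is also fine since the child labels in any $S$ are distinct.
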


 \begin{proof}[Proof of Lemma~\ref{counting}]
 
 To proceed, we use ideas  from~\cite{MT,PegdenIndepen}. Specifically, we introduce a branching process that produces only trees in   $\mathcal{L}_i$  and bound $\sum_{\tau \in \mathcal{L}_i  }   \prod_{v \in V(\tau)  }  \gamma \left(  f_{ (v  )} \right) $ by analyzing it.

 In particular, we start with a single node labelled by $i$. In each subsequent round each leaf $u$ ``gives birth" to a set of nodes whose set of (distinct) labels is a set  $S \in \mathrm{List}((u) )$ with probability proportional to $\prod_{j \in S} \psi_j $. It is not hard to see that this process creates every  tree in $ \mathcal{L}_i $  with positive probability. 
 To express the exact probability received by each $S \subseteq [m]$ we define
\begin{equation}\label{eq:d_def}
Q(S)  :=    \prod_{j \in S} \param_j
\end{equation}
and let $Z_{\ell} = \sum_{ S \in \mathrm{List}((u))  }  Q(S)$. Clearly,  each $S \in \mathrm{List}((v))$ receives probability equal to $\frac{ Q(S) }{ Z_{\ell}}$.  We now show the following lemma.

\begin{proposition}\label{branchingLemma_2}
The branching process described above produces every tree $\tau \in \mathcal{L}_i$ with probability 
\begin{align*}
p_{\tau} = \frac{ 1}{ \psi_i} \prod_{v \in V(\tau) } \frac{ \psi_{(v) } }{ \sum_{ S \in \mathrm{List}((v)) } \prod_{ j \in S }  \psi_j } \enspace.
\end{align*}
\end{proposition}

\begin{proof} 

For each tree $\tau \in \mathcal{L}_i$ and each node $v$ of $\tau$, let $N(v)$ denote the set of labels of its children. Then:
\begin{eqnarray*}
p_{\tau} & = &    \prod_{v \in V(\tau)} \frac{  Q(N(v))}{\sum_{S  \in \mathrm{List}((v)) }  Q(S)} \\
& = & \frac{1}{ \psi_i}  \prod_{v \in V(\tau)} \frac{\param_{(v)}  }{ \sum_{S \in \mathrm{List}((v))  }Q(S)}  \enspace .
\end{eqnarray*}
\end{proof}
Notice now that
\begin{eqnarray}
\sum_{\tau \in \mathcal{L}_i  } \prod_{v \in V(\tau)  }  \gamma( f_{[v]} ) & \le & \sum_{\tau \in \mathcal{L}_i  } \prod_{v \in V(\tau)  } \frac{  \psi_{(v) }}{ \sum_{ S \in \mathrm{List}((v)) } \prod_{ j \in S  }  \psi_j }  \label{prwto_step_counting} \\
						 & = & \psi_{i} \sum_{\tau \in \mathcal{L}_i  }  p_{\tau}  \label{deutero_step_counting}  \\
						 & =& \psi_{i} \nonumber \enspace,
\end{eqnarray}

where~\eqref{prwto_step_counting} follows by the hypothesis of Lemma~\ref{counting}  while~\eqref{deutero_step_counting} by Proposition~\ref{branchingLemma_2}.

\end{proof}

\subsection{Proof of Lemma~\ref{Kolmo27}}

Let $\mathbf{u}_1, \ldots, \mathbf{u}_m$ be the named  indices of flaws of $\mathbf{Q}(\Sigma)$ that occur in $U$ listed in the order of their appearance in $U$. We claim that $\mathbf{u}_1, \ldots, \mathbf{u}_m$ is a prefix of $U$. To see this, assume for the sake of  contradiction that $U =  \ldots \mathbf{w } \mathbf{ u}_i  \ldots  $ where $\mathbf{w} \notin  \mathbf{Q}(\Sigma)  $ and $\mathbf{u}_i \in \mathbf{Q}(\Sigma)$. Thus, $( \mathbf{w}, \mathbf{u}_i ) \notin \mathbf{E}(\Sigma) $ and 
so $w \nsim u_i $. Therefore, $(\mathbf{w},\mathbf{u}_i)$ is a swappable pair in $U$, which is a contradiction.

Note that the latter observation implies that $W(\Sigma) = (A,B,C) $ where $B = ( \mathbf{u}_1, \mathbf{u}_2, \ldots, \mathbf{u}_m ) $. It remains to show that $B$ is a subsequence of $Q_{\pi}(\Sigma) $. In particular, it suffices to show that for any
$i \in [m-1]$ the relative order of $\mathbf{u}_i$ and $\mathbf{u}_{i+1}$ in $B$ is the same as in $Q_{\pi}(\Sigma)$. Assume the opposite, i.e., $Q_{\pi}(\Sigma) = \ldots \mathbf{u}_{i+1} \ldots \mathbf{u}_i \ldots$. It has to be that $\mathbf{u}_i \sim \mathbf{u}_{i+1}$. For otherwise, $(\mathbf{u}_i, \mathbf{u}_{i+1} ) $ would be a swappable pair, contradicting the assumption. This means that $(\mathbf{u}_i, \mathbf{u}_{i+1} )  \in \mathbf{E}(\Sigma) $, implying  that $d(\mathbf{u}_i) > d( \mathbf{u}_{i+1} ) $. Recalling the definition of $Q_{\pi}(\Sigma) $ we see that $\mathbf{u}_i$ should be to the left of $\mathbf{u}_{i+1}$ in $Q_{\pi}(\Sigma)$, a contradiction.

\subsection{Proof of Lemma~\ref{Kolmo28}}

First we prove that the mapping is injective.  In particular, assume that two distinct trajectories $\Sigma_1, \Sigma_2 \in \mathcal{X}_p [W]  $ are transformed to the same trajectory $\Sigma \in \mathcal{X}_{p+1}[W] $. At least one of $\Sigma_1, \Sigma_2$ must have changed. Without loss of generality, assume $\Sigma \ne \Sigma_1$. The latter implies that  $W(\Sigma_1) = (A, \mathbf{w},\mathbf{y},B)$
with $ w\nsim y$ and $\Sigma_1$ is transformed to a trajectory  $\Sigma$ with $W(\Sigma) = (A, \mathbf{y}, \mathbf{w}, B )$. Notice that it cannot be the case that $\Sigma_2 = \Sigma$ since then $\Sigma_1$ and $\Sigma$ would both be in $\mathcal{X}_p[W]$ without following the same deterministic flaw choice strategy, a contradiction. Thus, $W(\Sigma_2) = (A, \mathbf{w}, \mathbf{y}, B) $, and $\Sigma$ was obtained from $\Sigma_2$ by swapping $\mathbf{w}$ and $\mathbf{y}$. Recalling that the  $\mathrm{Swap}$ operation is an injection, we get that $\Sigma_1 = \Sigma_2$.

We will assume that $\mathcal{X}_{p+1}[W]$ is not valid and reach a contradiction. The latter assumption implies that there should be trajectories $\Sigma, \Sigma' \in \mathcal{X}_{p+1}[W]$ such that 
\begin{eqnarray*}
 W( \Sigma)  &=&  (\mathbf{w}_1, \ldots , \mathbf{w}_{\ell} , \mathbf{w}, \mathbf{y} ,\ldots ) \\
 W( \Sigma')  &=& ( \mathbf{w}_1, \ldots , \mathbf{w}_{\ell}, \overline{\mathbf{w}} ,\overline{\mathbf{y}},  \ldots )  \enspace,
\end{eqnarray*}
 with $\mathbf{w} \ne \overline{\mathbf{w}}$ and the states in $\Sigma$ to the left of $\mathbf{w}$ match the corresponding states in $\Sigma'$ to the left of $\overline{\mathbf{w}}$. Here it is assumed that some of $\mathbf{w}, \mathbf{y}, \overline{\mathbf{w}}, \overline{\mathbf{y}}$ may equal $\emptyset$, which means they don't exist. It is also assumed that $\mathbf{w} = \emptyset $ also implies that $ \mathbf{y} = \emptyset$ and similarly for  $ \overline{ \mathbf{w} } $ and $\overline{\mathbf{ y} }$.

Let $\Sigma_1 $ and $\Sigma_2 $ be respectively the trajectories in $\mathcal{X}_p[W] $ that were transformed to $\Sigma $ and $\Sigma'$. Since $\mathcal{X}_p[W]$ is valid, at least one of them must have changed. Assume, without loss of generality, $\Sigma \ne \Sigma_1$. We know that (i) $\Sigma_1 $ and $\Sigma_2$ follow the same deterministic  flaw choice strategy, and they are not proper prefixes of each other, as wells as, that (ii) named flaws indices $\mathbf{w}_1, \ldots, \mathbf{w}_{\ell}, \mathbf{w}$ are distinct.

Some other useful facts to have in mind  (and which we will implicitly use) are that the first $\ell +1$ states of $\Sigma $ match those of $\Sigma'$, and also that swapping adjacent indices of flaws only affects the state between them in a deterministic way. Now there are four cases:

\begin{enumerate}[(a)]

\item The swapped pair in $\Sigma$ was $(\mathbf{w}_i, \mathbf{w}_{i+1})$ for $i \in [ \ell -1]$. Thus, 
\begin{align*}
W(\Sigma_1) = (  \mathbf{w}_1, \ldots, \mathbf{w}_{i+1}, \mathbf{w}_i, \ldots, \mathbf{w}_{\ell}, \mathbf{w},\mathbf{y}, \ldots ) \enspace.
\end{align*}
Using (i) and (ii), we conclude that $\Sigma' \ne \Sigma_2$ and, thus, 
\begin{align*}
W(\Sigma_2) = ( \mathbf{w}_1, \ldots, \mathbf{w}_{i+1}, \mathbf{w}_i, \ldots, \mathbf{w}_{\ell}, \overline{ \mathbf{w} }, \ldots, \mathbf{y}, \ldots ) \enspace.
\end{align*}
To see this, recall that swaps are applied at the same positions for trajectories in $\mathcal{X}_p$. Condition (i) now implies that $\mathbf{w} = \overline{ \mathbf{w} }$.

\item The swapped pair in $\Sigma$ was $( \mathbf{w}_{\ell}, \mathbf{w} )$, and so $W(\Sigma)  = (\mathbf{w}_1, \ldots,  \mathbf{y}, \mathbf{w}_{\ell}, \mathbf{w}, \ldots  )$. Thus, $ \Sigma' \ne \Sigma_2$ by (i) and (ii). Now condition (i) implies that $\mathbf{w} = \overline{ \mathbf{w} } $.

\item The swapped pair in $\Sigma$ was $( \mathbf{w}, \mathbf{y}  ) $, and so $ W(\Sigma_1) = ( \mathbf{w}_1, \ldots, \mathbf{w}_{\ell}, \mathbf{y}, \mathbf{w}, \ldots ) $ and $\mathbf{w} \in \mathbf{Q}(\Sigma_1)$. We now apply Lemma~\ref{Kolmo27} to $\Sigma_1$ and we notice that since there are no swappable pairs in $\Sigma_1$ to the right of $(\mathbf{y},\mathbf{w} ) $ it should be that $W(\Sigma_1) = (\mathbf{w}_1, \ldots, \mathbf{w}_{\ell},B,C)$ where $B$ is a subsequence of $Q_{\pi}(\Sigma_1) $ and $C$ does not contain named indices of flaws from $Q_{\pi}(\Sigma) $. Observe that $B$ should start with $\mathbf{w}$. Using (i) we get that $W(\Sigma_2) = (\mathbf{w}_1, \ldots, \mathbf{w}_{\ell}, \mathbf{y}, \ldots ) $ and again by Lemma~\ref{Kolmo27}: $W(\Sigma_2)  = ( \mathbf{w}_1, \ldots, \mathbf{w}_{\ell}, \mathbf{y}, \overline{B}, \overline{C} ) $ where $\overline{B}$ is a subsequence of $Q_{\pi}(\Sigma_2)$ and $\overline{C}$ does not contain named indices of flaws from $\mathbf{Q}(\Sigma_2)$.

By the definition of $\mathcal{X}_p[W]$ we know that $Q_{\pi}(\Sigma_1) = Q_{\pi}(\Sigma_2) = W$. The latter fact, along with the forms of $\Sigma_1, \Sigma_2$  imply that  $B$ should be a permutation of $\overline{B}$. Notice though that $B$ and $\overline{B}$ are subsequences of $W$ and, furthermore, all elements of $W$ are distinct. Therefore, it has to be that $B = \overline{B}$.

The latter observation implies that $ W(\Sigma_2) = (  \mathbf{w}_1, \ldots ,\mathbf{w}_{\ell}, \mathbf{y}, \mathbf{w}, \ldots ) $. Now since $(\mathbf{y}, \mathbf{w})$ is a swappable pair in $\Sigma_1$ it should also be a swappable
pair in $\Sigma_2$. Thus, $\Sigma' = ( \mathbf{w}_1, \ldots,  \mathbf{w}_{\ell}, \mathbf{w},\mathbf{y}, \ldots )  $. This means that $\mathbf{w} = \overline{ \mathbf{w} } $.

\item The swapped pair in $\Sigma$ was to the right of $( \mathbf{w}, \mathbf{y})$. In this case condition (i) implies that $\mathbf{w} = \overline{ \mathbf{w} }$.

\end{enumerate}

\section{Proofs Omitted from Section~\ref{Byproducts}}\label{Misc}

\subsection{Proof of Theorem~\ref{EntropyBound} }

To lighten the notation, let $u := \lambda_{\mathrm{init} } \sum_{S \in \mathrm{Ind}([m]) }  \prod_{j \in S}  \psi_j   $. For each $\sigma \in \Omega$, define a  flaw $f_{\sigma} = \{ \sigma \}$ and consider the extended algorithm that addresses it by sampling from $\mu$, as well as  the extended causality graph that connects  $f_{\sigma}$ with every flaw in $F$. Clearly, we have that $\gamma(f_{\sigma} ) = \mu(\sigma)$. Moreover, if the original algorithm is commutative, so is the extended one since the commutativity condition is trivially true for flaws  $ \{ f_{\sigma} \}_{ \sigma \in \Omega } $. Observe now that for  every  $\sigma \in \Omega$,  Theorem~\ref{main} yields $\nu(\sigma) \le \Pr[\sigma] \le u \cdot \mu(\sigma)$. Thus:
\begin{align*}
H_{\rho}[\nu] = \frac{1}{1- \rho} \ln \sum_{ \sigma \in \Omega } \nu(\sigma)^{\rho}  \ge \frac{1}{1- \rho} \ln \sum_{ \sigma \in \Omega } ( u \mu(\sigma) )^{\rho} =   \frac{1}{1- \rho}  \ln \sum_{ \sigma \in \Omega } \mu(\sigma)^{\rho} - \frac{\rho}{\rho-1} \ln u \enspace,
\end{align*}
concluding the proof.

 \subsection{Proof of Theorem~\ref{partial_avoiding}}
 
For each flaw $f_i$ we define a Bernoulli variable $Y_i$ with probability of success $p_i = \min \left\{ 1, \frac{ \psi_i }{ \zeta_i  \gamma(f_i) }  \right\}$. The sequence $\{Y_i \}_{i=1}^m$  and $\Omega$ induce  a new space $\Omega' = \Omega \times \{ 0,1 \}^m$ 
which can be thought as a ``labelled" version of $\Omega$, where each  state $\sigma$ is labelled with a binary vector of length $m$ whose $i$-th bit describes the state of $Y_i$. Similarly, measure $\mu$ and $ \{ Y_i  \}_{i=1}^{m}$ induce a measure $\mu'$
over $\Omega'$.  

In this new state space we introduce a new family of flaws $F' = \{ f_1', f_2', \ldots, f_m'  \}$, where $f_i'$ is defined as the subset of $\Omega'$ where $f_i$ is present and $Y_i = 1$. Consider now the algorithm $\mathcal{A}'$ that is induced by $\mathcal{A}$ as follows:  Each time we want to address flaw $f_i'$ we move in $\Omega$ by invoking $\mathcal{A}$ to address $f_i$ and also take a sample from   $Y_i$ to update the value of the $i$-th entry of the label-vector. 

It is not hard to verify that (i)   the charge of each flaw $f_i'$ is  $\gamma(f_i') = \gamma(f_i) p_i$ ;  (ii) any causality graph for $(\Omega, F, \mathcal{A})$ is also a causality graph for $(\Omega', F', \mathcal{A}')$   (and, in particular,  so is the one induced by $\sim$);    (iii) if $\mathcal{A}$ is commutative then so is $\mathcal{A}'$ ; and that (iv) the cluster expansion condition with respect to the causality graph induced by $\sim$ is satisfied. 

To conclude the proof, consider a flaw $f_i'$ and notice that in order for $f_i$ to be present in the output of $\mathcal{A}'$ it has to be the case that $Y_i = 0$. Notice now that Theorem~\ref{main} implies:
\begin{align*}
\nu(f_i \cap  Y_i =0)    \le    (1- p_i) \gamma(f_i)  \zeta_i     =  \max \left\{ 0, \gamma(f_i) \zeta_i - \psi_i \right\} \enspace.
\end{align*}

 \section{Proofs Omitted from Section~\ref{Applicatia}}\label{applicatia_appendix}

\subsection{Proof of Lemma~\ref{sufficient_proper_coloring}}\label{second_phase}

We will use the following LLL condition which is obtained from~\eqref{eq:LLL} by setting $\psi_i = \frac{ 2\mu(A_i) }{1 - 2 \mu (A_i) } $. It's proof can be found in~\cite{mike_book}.
\begin{proposition}\label{MikeLLL}
Let $(\Omega,\mu)$ be an arbitrary probability space and let $\mathcal{A} = \{A_1, \ldots, A_m \}$ be a set of (bad) events. For each $i \in [m] $ let $D(i) \subseteq ([m] \setminus \{i \}  )$  be such that 
$\mu( A_i \mid \cap_{j \in S} \overline{ A_j })  = \mu(A_i)$ for every $S \subseteq ( D(i) \cup \{i \} ) $.
If 
\begin{align*}
 \sum_{ j \in D(i) \cup \{i \} } \mu(f_j) < \frac{1}{4} \enspace \text{for each $i \in [m] $ }\enspace, 
 \end{align*}
 then the probability that none of the events in $\mathcal{A}$  occurs is strictly positive.
\end{proposition}

Let $\mu$ be the probability distribution induced by giving each $\mathrm{Blank}$ vertex $v$ a color from $L_v(\sigma) \setminus \mathrm{Blank} $ uniformly at random.  For any edge $e$ and color $c \in \bigcap_{ u \in e } L_u(\sigma) \setminus \mathrm{Blank} $ we define $A_{e,c}$ to be the
event that all vertices of $e$ receive $c$. We also define $\mathrm{Blank}(e)$ to be the set of vertices  of $e$ that are $\mathrm{Blank}$ in $\sigma$. Observe now that
\begin{align*}
\mu\left( A_{e,c} \right)  \le \frac{1}{ \prod_{ v \in  \mathrm{Blank}(e) } \left( |L_v(\sigma)| -1 \right)  }  \enspace.
\end{align*}
Furthermore, $A_{e,c}$ is mutually independent of all events with which it does not share a vertex. The lemma follows from Proposition~\ref{MikeLLL} (and can be made constructive using the Moser-Tardos algorithm)  as  flaws $B_v , Z_v$ are not present for every vertex $v \in V$ and so
 \begin{eqnarray*}
 \sum_{ v \in \mathrm{Blank}(e) } \sum_{ c \in L_v(\sigma) \setminus \mathrm{Blank} } \sum_{u \in T_{v,c}(\sigma)}  \mu \left( A_{ \{u,v \},c} \right) & =  &   \sum_{ v \in \mathrm{Blank}(e) } \sum_{ c \in L_v(\sigma) \setminus \mathrm{Blank} } \sum_{ u \in T_{v,c}(\sigma) }  \frac{1}{  (|L_v(\sigma)|-1) (|L_u(\sigma)| -1)   } \\
 	 & \le  & 2 \max_{ v\in \mathrm{Blank}(e) } \frac{1}{  \left( | L_v(\sigma)|-1 \right) \left(   L-1  \right) }    \sum_{ c \in L_v(\sigma) \setminus \mathrm{Blank} } | T_{v,c'}(\sigma) |  \\
	 & \le&   \frac{2}{10} \max_{ v \in \mathrm{Blank}(e) }   \frac{L \cdot  |L_v(\sigma) |}{(L-1) \cdot ( |L_v(\sigma)| -1)} \\
	 &   \le & \frac{1}{5} \left(  \frac{L  }{ L-1 } \right)^2   < \frac{1}{4} \enspace,
 \end{eqnarray*}
 for  large enough $ \Delta   $, concluding the proof.

\subsection{Proof of Lemma~\ref{combo}}

Recall the description of $\mathcal{A}_2$ from the proof of Lemma~\ref{sufficient_proper_coloring}. 

First, we show that  $\mathcal{A}_2$ is able to output at least $N q^{-(1-\alpha) n }  $ list-colorings with positive probability.  Let $\Omega_{\mathcal{A}_1}^*$ denote the set of flawless partial list-colorings algorithm $\mathcal{A}_1$ can output with positive probability, and note that, according to our assumption, $|\Omega^*_{\mathcal{A}_1}| =  N$. To see the idea behind the bound, observe that given two colorings $\sigma_1, \sigma_2 \in \Omega_{ \mathcal{A}_1}^*$,  applying $\mathcal{A}_1$ to each one of them is guaranteed to result in different full list-colorings unless  there is a way to start from $\sigma_1$ (respectively, from $\sigma_2$) and assign colors to  $\mathrm{Blank}$ vertices so that we reach $\sigma_2$ (respectively, to $\sigma_1$). In this  bad case we write  $\sigma_1 \bowtie \sigma_2$. Consider now the graph $H$ over $\Omega_{ \mathcal{A}_1 }^*$  in which two colorings $\sigma_1, \sigma_2$ are adjacent iff $\sigma_1 \bowtie \sigma_2$, and observe  that the size of any independent set of $H$ is a lower bound on the number of list-colorings $\mathcal{A}_2$ can output. Since we have assumed that every coloring in $\Omega_{ \mathcal{A}_1 }^*$   has at least $\alpha n$ vertices colored, we see that the maximum degree of $H$ is at most  $ D:= q^{(1-\alpha)n}-1$ and, therefore, there exists an independent set of size at least $\frac{|\Omega_{ \mathcal{A}_1 }^*|}{D+1} =  N q^{-(1-\alpha)n }$, concluding the proof of the first part of Lemma~\ref{sufficient_proper_coloring}.

Second, we show that $\mathcal{A}_2$ is able to output at least $\left(\frac{8L }{11 } \right)^{ (1-\alpha)n} $  list colorings with positive probability.  To do that, we will need the following theorem regarding the output distribution of the Moser-Tardos algorithm that was proved in~\cite{EnuHarris}, and which we rephrase here to fit our needs.

\begin{theorem}[\cite{EnuHarris}]\label{dense_counting}
Consider a constraint satisfaction problem on  a set of variables $\mathcal{V}$ and set of constraints $\mathcal{C}$. Assume we have a flaw $f_c$ for each constraint $c$, comprising the set of states that violate $c$. We are also given an undirected causality graph such that two constraints are connected with an edge iff they share variables. For each constraint $c$ define  
\begin{align*}
y_c = \left( 1 +  \psi_c \right)^{ \frac{1}{|\mathrm{var}(c)|}}  -1 \enspace,
\end{align*}
where $\mathrm{var}(c)$ denotes the set of variables that correspond to constraint $c$. Then:
\begin{align*}
\sum_{  S \in \mathrm{Ind}(\mathcal{C})  }\prod_{c\in S} \psi_c \le  \prod_{ v \in \mathcal{V} } \left( 1 + \sum_{ \substack{c \in \mathcal{C} \\  v \in \mathrm{var}(c) } }  y_c \right)  \enspace,
\end{align*}
where $\mathrm{Ind}(\mathcal{C} ) $ denotes the set of independent sets of $\mathcal{C}$.
\end{theorem}

Observe that the hypothesis implies that $\mathcal{A}_1$ can output   a flawless partial coloring $\sigma$ where exactly $\alpha n$ vertices are colored with positive probability. We apply $\mathcal{A}_2$ to $\sigma$, which recall that is an instantiation of the Moser-Tardos algorithm using the uniform measure $\mu$ over the  the cartesian product,  $\Omega'$, of the lists of non-$\mathrm{Blank}$ available colors of the $\mathrm{Blank}$ vertices of $\sigma$, and where we have a bad event  $A_{e,c}$ for any edge $e$ and color $c \in  \bigcap_{u \in e } L_u(\sigma) \setminus \{ \mathrm{Blank} \}$. Recall further that the general (and, thus, also the cluster expansion) LLL condition~\eqref{eq:lllcond} is satisfied with   $\psi_{e,c} = \frac{2\mu(A_{e,c}) }{ 1 - 2\mu(A_{e,c})}$.  Thus,  we can combine Theorem~\ref{EntropyBound} and Theorem~\ref{dense_counting} to get that $\mathcal{A}_2$ can output at least
\begin{eqnarray*}
|\Omega'| \left(  \prod_{  \substack{ v \in V  \\  \sigma(v) = \mathrm{Blank}} } \left( 1 + \sum_{ c \in L_v(\sigma) \setminus \mathrm{Blank}   } \sum_{ u \in T_{v,c}(\sigma)   }  y_{ \{ u, v\},c} \right) \right)^{-1}  \ge    \frac{L^{(1-\alpha)n }}{   \left( 1 +  \frac{3}{8} \right)^{(1-\alpha)n }  }  \ge  \left(\frac{8L }{11 } \right)^{ (1-\alpha)n} \enspace,
\end{eqnarray*}
 list-colorings with positive probability. To see this, notice that  for any $v \in V$ that is $\mathrm{Blank}$ in  $\sigma$, and sufficiently large $\Delta$,
\begin{eqnarray}
 \sum_{ c \in L_v(\sigma) \setminus \mathrm{Blank}   } \sum_{ u \in T_{v,c}(\sigma)   }  y_{ \{ u, v\},c}  & = &  \sum_{ c \in L_v(\sigma) \setminus \mathrm{Blank}  } \sum_{ u \in T_{v,c}(\sigma)   }  \left( \sqrt{ 1 +  \frac{2 \mu(A_{ \{u, v\}, c}) }{  1 - 2 \mu(A_{ \{u, v\}, c}) } } -1 \right) \nonumber \\
 														& \le &  \sum_{ c \in L_v(\sigma) \setminus \mathrm{Blank}   } \sum_{ u \in T_{v,c}(\sigma)   } 3 \mu(A_{ \{u,v \}, c} ) \nonumber \\
														& \le & 3  \cdot  \left( \frac{1}{2} \cdot  \frac{1}{4} \right)  =  \frac{3}{8} \label{teleiwsame} \enspace,
\end{eqnarray}
where to obtain~\eqref{teleiwsame} we perform identical calculations to the ones in Lemma~\ref{sufficient_proper_coloring}.

 \subsection{Proofs of Theorems~\ref{AECARES} and~\ref{LowWeight}}

\subsubsection{The Clique Lov\'{a}sz Local Lemma}

We first state the Clique Lov\'{a}sz Local Lemma  (reformulated to fit our setting) assuming as input a commutative algorithm $(F,C,\rho)$, where $C$ is a causality graph.
\begin{theorem}[The Clique Lov\'{a}sz Local Lemma]\label{Clikes}
Let $\{ K_1, K_2,\ldots, K_n \}$ be a set of cliques in  $C$ covering all the edges (not necessarily disjointly).
If there exists a set of vectors $\{ \mathbf{x}_1, \ldots, \mathbf{x}_n \}$ from $(0,1)^m$ such that the following condition
are satisfied: 
\begin{itemize}
\item for each $v \in [n]$: $\sum_{ i \in K_v } x_{i,v} < 1$;

\item for each $i \in [m]$, $\forall v$ such that $i \in K_v$:

\begin{align*}
\gamma(f_i) \le x_{i,v} \prod_{ u \ne v: K_u  \ni i } (1 - \sum_{j \in K_u \setminus \{ i \} } x_{j,u} )
\end{align*}

\end{itemize}
then:

\begin{enumerate}

\item $ \mu \left(  \bigcap_{i \in [m] } \overline{f_i}    \right) \ge  \prod_{v \in [m] } \left( 1 - \sum_{ i \in K_v} x_{i,v}  \right) > 0$

\item The algorithm terminates after an expected number of at most
\begin{align*}
\sum_{i \in [m] } \min_{ v : K_v \ni i } \frac{x_{i,v} }{ 1 - \sum_{j \in K_v \setminus \{ i \} } x_{j,v} }  \enspace,
\end{align*}
steps.
\end{enumerate}
\end{theorem}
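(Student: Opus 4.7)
For the existential bound (Part 1), the plan is to reduce it to the classical Clique LLL. Note that $\gamma(f_i) = d_i\,\mu(f_i) \ge \mu(f_i)$ since $d_i \ge 1$, so the hypothesis on the charges $\gamma(f_i)$ immediately implies the same hypothesis on the raw probabilities $\mu(f_i)$. Applying the non-algorithmic Clique LLL of Kolipaka--Szegedy--Yixin Xu~\cite{CliqueLLL} to the probability space $(\Omega,\mu)$ with events $\{f_i\}_{i\in[m]}$ and dependency graph $C$ (a legal classical dependency graph, since the causality graph dominates the probabilistic dependency structure) then yields $\mu\bigl(\bigcap_i \overline{f_i}\bigr)\ge \prod_v\bigl(1-\sum_{i\in K_v} x_{i,v}\bigr)$, which is strictly positive by the first hypothesis.

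For the algorithmic bound (Part 2), the plan is to invoke the commutative witness tree lemma (Theorem~\ref{WitnessTreeLemma}) together with Proposition~\ref{uniqueness}, which yield
\begin{equation*}
\mathbb{E}[N_i] \;\le\; \sum_{\tau \in \mathcal{W}_i}\prod_{v \in V(\tau)}\gamma(f_{[v]})
\end{equation*}
for every $i$ (taking the implicit convention $\theta = \mu$, so that $\lambda_{\mathrm{init}} = 1$). The expected number of steps is $\sum_i \mathbb{E}[N_i]$, so it suffices to show that for each $i$,
\begin{equation*}
\sum_{\tau \in \mathcal{W}_i}\prod_{v \in V(\tau)}\gamma(f_{[v]}) \;\le\; \psi_i,
\end{equation*}
where $\psi_i := \min_{v:\,K_v \ni i}\frac{x_{i,v}}{1-\sum_{j\in K_v\setminus\{i\}} x_{j,v}}$. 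This tree-sum bound will be established by mimicking the counting argument of~\cite{CliqueLLL} for the Moser--Tardos process: I would induct on tree depth, fix for each label $j$ a designated clique $v^\star(j)$ achieving the minimum in $\psi_j$, and use the recursive structure of witness trees (together with Proposition~\ref{indep_levels}, which says that the children of any node form an independent set in $C$ and hence intersect each clique containing the node's label in at most one vertex) to reduce the inductive step to the Clique LLL hypothesis $\gamma(f_j)\le x_{j,v^\star(j)}\prod_{u\ne v^\star(j):\,K_u\ni j}\bigl(1-\sum_{\ell\in K_u\setminus\{j\}}x_{\ell,u}\bigr)$, which distributes the charge of $f_j$ across the cliques containing $j$ in a way that telescopes into $\psi_j$.

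The main obstacle is the clique-level telescoping in the tree-counting step. One might hope to circumvent this by setting $\psi_i$ as above and invoking Theorem~\ref{main} directly via the cluster expansion condition, but this does not work: the naive chain $\sum_{S\in\mathrm{Ind}(\Gamma(i))}\prod_{j\in S}\psi_j \le \prod_{u:K_u\ni i}(1+\sum_{j\in K_u\setminus\{i\}}\psi_j)$ followed by $(1+\sum_{j\in K_u\setminus\{i\}}\psi_j) \le 1/(1-\sum_{j\in K_u\setminus\{i\}} x_{j,u})$ fails in simple examples (e.g.\ $K_v=\{i,1,2\}$ with $x_{i,v}=1/2$ and $x_{1,v}=x_{2,v}=1/6$ yields LHS $=2$ and RHS $=3/2$). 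A tighter clique-by-clique accounting is needed, in which each factor $(1-\sum_{\ell\in K_u\setminus\{j\}} x_{\ell,u})$ in the Clique LLL hypothesis is used to absorb the branching contributions of clique $K_u$, leaving only $x_{j,v^\star(j)}$ to telescope into $\psi_i$. With this bookkeeping in place, the rest of the argument is a routine induction combined with Theorem~\ref{WitnessTreeLemma}.
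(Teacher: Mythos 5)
Your Part 2 has a genuine gap precisely at the point you flag as "the main obstacle": the bound $\sum_{\tau\in\mathcal{W}_i}\prod_{v\in V(\tau)}\gamma(f_{[v]})\le \psi_i$ with $\psi_i=\min_{v:K_v\ni i}\frac{x_{i,v}}{1-\sum_{j\in K_v\setminus\{i\}}x_{j,v}}$ is never actually established. You correctly observe that routing through the cluster expansion condition fails, but the replacement you propose --- a depth induction on witness trees with "clique-by-clique absorption" that telescopes into $\psi_i$ --- is exactly the nontrivial combinatorial content of the Clique LLL, and it is left as a plan rather than an argument; Proposition~\ref{indep_levels} only controls sets of labels level-by-level, whereas the bookkeeping you need must distribute the factor $\prod_{u\ne v^\star(j)}(1-\sum_{\ell\in K_u\setminus\{j\}}x_{\ell,u})$ across subtrees hanging off different cliques, and it is not at all routine that this closes. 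The paper sidesteps this entirely: it cites the known quantitative fact from~\cite{CliqueLLL} that the Clique LLL hypotheses imply Shearer's condition with the explicit bound $\frac{q_{\{i\}}(\gamma)}{q_{\emptyset}(\gamma)}\le\min_{v:K_v\ni i}\frac{x_{i,v}}{1-\sum_{j\in K_v\setminus\{i\}}x_{j,v}}$ (inequality~\eqref{clique_bound}), and then invokes Theorem~\ref{ShearerCondition} (whose proof already contains the needed tree-sum bound, Lemma~\ref{ShearerTreeCounting}, via stable set sequences). So the missing idea is to pass through Shearer's condition rather than re-prove a witness-tree counting lemma tailored to cliques.

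A smaller issue is in Part 1: your claim that $C$ is "a legal classical dependency graph" for the events $f_i$ under $\mu$ is not justified in general. The causality graph is defined by the algorithm's transition structure, not by conditional independence under $\mu$; it is a (lopsided) dependency graph when the algorithm is a resampling oracle for every flaw ($d_i=1$), which is why the paper states that the existential part is identical to~\cite{CliqueLLL} under that assumption and explicitly notes that the general case requires extra work. Your reduction via $\mu(f_i)\le\gamma(f_i)$ is fine as far as the numerical hypotheses go, but the dependency-graph step needs either the resampling-oracle restriction (as in the paper's application to Moser--Tardos) or a separate argument.
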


We note that in~\cite{CliqueLLL} the authors first prove  the first part of their theorem, which implies the \emph{existence} of perfect objects, and then they invoke the results of~\cite{szege_meet} which imply that the Moser-Tardos algorithm converges under the Shearer's condition. In particular, they use the following fact,  which we will also find useful in our applications:
\begin{align}\label{clique_bound}
\frac{q_{ \{i \} }}{q_{\emptyset} } \le \min_{v  : K_v \ni i} \frac{x_{i,v} }{1 - \sum_{j \in K_v  \setminus \{ i \} } x_{j,v}  }   \enspace, \text{ for every $i \in [m] $ }.
\end{align}
To prove Theorem~\ref{Clikes} in our setting we can follow the same strategy, invoking Theorem~\ref{main} (in the Shearer's regime) instead of the main result of~\cite{szege_meet}. In fact, the proof of the first part of Theorem~\ref{Clikes} is identical to the one of~\cite{CliqueLLL} assuming that the input algorithm is a resampling oracle with respect to $\mu$ for each flaw $f_i$. In the general case, some extra work is required. Since in our application we will be using the Moser-Tardos algorithm (which is a resampling oracle for every flaw), and in the interest of brevity, we omit it.

Finally, we note that the authors provide a canonical way of decomposing the causality graph and applying the Clique Lov\'{a}sz Local Lemma in the variable setting of Moser and Tardos. Specifically, recall that in the variable setting the family of bad events is  determined by a set of independent discrete random variables $\{v_1, \ldots, v_m \}$,  and that two events are adjacent in the  dependency  graph whenever they share a variable. Thus, each variable $v$ forms a clique $K_v$ in the dependency graph
consisting of the events that are dependent on this variable.

\subsubsection{Finding Acyclic Edge Colorings}

We now recall the proof of~\cite{CliqueLLL} for Acyclic Edge Coloring. The proof is  the same as the one in~\cite{alon1991acyclic,mike_stoc} and the improvement comes from the use of Clique LLL instead of condition~\eqref{eq:LLL}.

\smallskip

Given $q$ colors and a graph $G$ with maximum degree $\Delta$ let $\Omega$ be the set of all edge $q$-colorings of $G$. We identify the two following types of flaws:

\begin{enumerate}

\item For a path $P $ of length $2$ let $f_{P}$ be the set of states in $\Omega$ in which $P$ is monochromatic.

\item For a cycle $C$ of even length let $f_C$ comprise the set of states in $\Omega$ in which $C$ is bicolored.

\end{enumerate}

Clearly a flawless element of $\Omega$ is an acyclic edge coloring of $G$. Our algorithm is the MT algorithm (the variables that correspond to each event are the edges of the path/cycle), $\mu$ is the uniform measure and $\theta = \mu$. Therefore:
\begin{align*}
\gamma(f_P)  &=  \mu(f_P) =  \frac{1}{q}  \enspace, \\ 
 \gamma(f_C)  & = \mu(f_C)  \le \frac{1}{q^{ |C|-2 }  } \enspace.
\end{align*}
Furthermore, two flaws are connected in the causality graph iff they share an edge. We now follow the canonical way of decomposing the causality graph into cliques by having one clique $K_e$ for each edge $e$ of $G$. Moreover:

\begin{itemize}

\item For each path $P \ni e$ of length $2$ we set $x_{f_P, e} = x_{P,e }  =  \frac{ c}{1 + \epsilon } \frac{1}{2\Delta -2} $

\item For a cycle $C$ of even length we set  $x_{f_C,e} = x_{C,e} = \frac{ c}{(1 + \epsilon)^{|C| /2} }  \frac{1}{ (\Delta-1)^{|C|-2} } $

\end{itemize}

for some positive $c, \epsilon$ to be determined later. 

Observe that the number of cycles of length $2 \ell$, where $ \ell \ge 2$, that contain any given edge $e$ is at most $(\Delta-1)^{2 \ell -2 }$, while the number of paths  of length $2$ that contain $e$ is at most $2 \Delta - 2$. Thus, it suffices to show that for 
each edge $e$ and each path of length $P $ and cycle $C$ of length $2 \ell$ that contain $e$ we have that:
\begin{eqnarray*}
\gamma(f_P)	& \le & x_{P,e} \prod_{e' \in P \setminus \{ e \} } \left(1 -  \sum_{j \in K_e} x_{j,e'} \right)    \enspace, \\
\gamma(f_C)	& \le  &x_{C,e} \prod_{e' \in C \setminus \{ e \} }  \left( 1 - \sum_{j \in K_e} x_{j,e'} \right) \enspace.
\end{eqnarray*}
which is
\begin{eqnarray*}
\frac{1}{q}	& \le &  \frac{ c}{1 + \epsilon } \frac{1}{2\Delta -2}    \left(  1 -  c \sum_{j =1 }^{\infty}(1+ \epsilon)^{-j} \right)     \enspace, \\
\frac{1}{q^{2\ell-2}}	& \le  &\frac{ c}{(1 + \epsilon)^{\ell} }  \frac{1}{ (\Delta-1)^{2\ell-2} } \left(  1 -  c \sum_{j =1 }^{\infty}(1+ \epsilon)^{-j} \right)^{2 \ell -1}   \enspace.
\end{eqnarray*}
The latter imply that  for the Moser-Tardos algorithm to converge it suffices to have  $ c < \epsilon$ and also
\begin{eqnarray*}
\frac{q}{\Delta-1} &\ge&  \max \left\{  \frac{2}{c} (1+ \epsilon) \frac{\epsilon}{\epsilon-c} , \max_{ \ell \ge 2}  \left\{ (1+\epsilon)^{\frac{\ell}{2\ell-2}  } \frac{1}{ c^{ \frac{1}{2\ell-2} } }  \left( \frac{ \epsilon }{ \epsilon - c } \right)^{ \frac{2\ell-1}{ 2 \ell -2}}  \right\}  \right\}  \\
			 & \ge &\max \left\{  \frac{2}{c} (1+ \epsilon) \frac{\epsilon}{\epsilon-c} ,    \frac{(1+\epsilon)}{ \sqrt{c} }  \left( \frac{ \epsilon }{ \epsilon - c } \right)^{ \frac{3}{ 2}}  \right\}   \ge 8.59 \enspace.
\end{eqnarray*}

for $ \epsilon = 2.05869$ and $c = 0.8282 $.

\smallskip
We now show that the expected running time of our algorithm is polynomial. To do so, we will need to be careful in our flaw choice strategy and also deal with the fact that the number of flaws is super-polynomial.

As far as the flaw choice strategy is concerned, we choose to give priority to flaws of the form $f_{P}$, whose presence in the current state can be verified in polynomial time in the number of edges of $G$. This means we only address bichromatic cycles when the underlying edge-coloring is proper. Note that in order to find bichromatic cycles in a properly edge-colored graph we can simply consider each of the ${ q \choose 2}$ pair of distinct colors and seek cycles in the subgraph of the correspondigly colored edges.

One way to  address the fact that the number of flaws is super-polynomial is to invoke Theorem 7 of~\cite{szege_meet} that states that whenever the LLL condition is satisfied with a multplicative slack (notice that we have showed that the Clique LLL is satisfied  with $q \ge 8.59)$ then the expected number of resampling of the Moser-Tardos algorithm is polynomial in the number of the independent random variables (in our case, the edges of $G$) with constant probability.  
The probability of polynomial convergence can be boosted  by repetition. Overall, we have shown the following theorem.

\begin{theorem}
Given a graph $G$ with maximum degree $\Delta$ and $ q \ge 8.6 $ colors there exists an algorithm that outputs an acyclic edge coloring in expected polynomial time.
\end{theorem}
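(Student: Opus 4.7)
The plan is to instantiate the machinery developed for commutative algorithms with the Moser-Tardos algorithm over the space $\Omega$ of all edge $q$-colorings of $G$, under the uniform measure $\mu$ and with $\theta = \mu$, and combine Theorem~\ref{ShearerCondition} with the Clique Lov\'asz Local Lemma (Theorem~\ref{Clikes}). I would identify two families of flaws: for each path $P$ of length $2$, $f_P = \{\sigma : \sigma \text{ monochromatic on } P\}$, and for each even cycle $C$, $f_C = \{\sigma : \sigma \text{ bichromatic on } C\}$. A flawless state is exactly an acyclic edge coloring. Since the Moser-Tardos algorithm is a resampling oracle for $\mu$, one immediately obtains $\gamma(f_P) = 1/q$ and $\gamma(f_C) \le 1/q^{|C|-2}$.

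Next, I would invoke the canonical clique decomposition of the causality graph: one clique $K_e$ per edge $e$ of $G$, containing every flaw whose underlying path/cycle uses $e$. I would assign $x_{P,e} = \tfrac{c}{1+\epsilon}\cdot\tfrac{1}{2\Delta-2}$ for paths and $x_{C,e} = \tfrac{c}{(1+\epsilon)^{|C|/2}}\cdot\tfrac{1}{(\Delta-1)^{|C|-2}}$ for cycles, with constants $c < \epsilon$ to be tuned, and bound the number of length-$2$ paths through $e$ by $2\Delta-2$ and the number of cycles of length $2\ell$ through $e$ by $(\Delta-1)^{2\ell-2}$. Substituting into the two Clique-LLL conditions reduces everything to a numerical inequality in $q/(\Delta-1)$, $c$, and $\epsilon$, whose worst case over $\ell\ge 2$ is attained at $\ell = 2$. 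Optimizing (essentially the calculation already carried out in the excerpt, giving $\epsilon \approx 2.059$ and $c \approx 0.828$) yields that $q \ge 8.6(\Delta-1)$ suffices. Theorem~\ref{Clikes}, whose algorithmic part follows from Theorem~\ref{ShearerCondition} via~\eqref{clique_bound}, then gives convergence in a \emph{finite} expected number of steps.

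The main obstacle, as usual in AEC, is turning finite expected running time into polynomial expected time, since the number of cycle flaws is super-polynomial. I would handle this in two steps. First, I would fix the flaw-choice strategy to give strict priority to path flaws $f_P$ (these can be detected in time polynomial in $|E|$ by inspecting each vertex). Cycle flaws are therefore addressed only while the coloring is proper, in which case bichromatic cycles can be found in polynomial time by iterating over the $\binom{q}{2}$ pairs of colors and running a cycle-search in the induced two-colored subgraph. Second, to bypass the super-polynomial flaw count, I would invoke Theorem~3.1 / Theorem~7 of Kolipaka--Szegedy~\cite{szege_meet} (whose commutative analogue is captured by Theorem~\ref{clique_decomp} in this paper): since our Clique-LLL conditions hold with a multiplicative slack (we verified them for $q$ strictly above the threshold $\approx 8.59(\Delta-1)$), the total number of resamplings is polynomial in $|E|$ with at least constant probability. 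Standard boosting by independent restarts then upgrades this to an expected polynomial running time, completing the proof.
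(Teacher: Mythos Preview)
Your proposal is correct and follows essentially the same route as the paper: the same flaw families and charges, the same per-edge clique decomposition with the same choice of $x_{P,e}$ and $x_{C,e}$, the same numerical optimization yielding the $8.6(\Delta-1)$ threshold, the same priority-to-paths strategy for efficient flaw detection, and the same appeal to Theorem~7 of~\cite{szege_meet} (multiplicative slack implies polynomially many resamplings with constant probability, boosted by restarts) to handle the super-polynomial number of cycle flaws. The only minor imprecision is your parenthetical pointer to Theorem~\ref{clique_decomp} as the ``commutative analogue''---that theorem is stated for the General LLL form rather than Shearer's with slack---but since you correctly rely on~\cite{szege_meet} as the primary reference, this does not affect the argument.
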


\subsubsection{Proof of Theorem~\ref{AECARES}}

To prove Theorem~\ref{AECARES} we need to estimate $\sum_{  S \in \mathrm{Ind}(F)  }\prod_{f \in S}  \frac{q_{ \{ f\} } }{q_\emptyset }$ per Theorem~\ref{EntropyBound} and Remark~\ref{shearer_remark} (we slightly abuse the notation and indicate $\psi$'s using flaws instead of indices of flaws). To do so, we will use Theorem~\ref{dense_counting} where we replace $\psi_c$ with $q_{ \{ c \}  }/ q_{\emptyset}$.


Using the result of the previous subsection along with~\eqref{clique_bound} we have that for each path $P$ of length two and each cycle $C_{\ell}$ of length $\ell$:
\begin{eqnarray*}
y_{P} & \le & \left( 1 + \frac{c}{1+ \epsilon}  \frac{1}{2 ( \Delta-1) } \frac{ \epsilon}{\epsilon-c}  \right)^{ \frac{1}{ 2}} - 1  <  \left( 1 + \frac{1}{ \Delta-1}   \right)^{1/2}   -1 < \frac{ 1}{ 2(\Delta -1)}  \\
y_{C_{\ell}} & \le &  \left( 1 +  \frac{c }{ (1 + \epsilon)^{\ell} } \frac{1}{ \left( \Delta -1 \right)^{2 \ell -2 }}   \right)^{\frac{1}{2 \ell} } -1 <  \left( 1 +  \frac{3^{-\ell}}{ \left( \Delta -1 \right)^{2 \ell -2 }}   \right)^{\frac{1}{2 \ell} }  -1 < \frac{1}{ (2(\Delta-1))^{2\ell -2}}
\end{eqnarray*}
and, thus, we get:
\begin{align*}
\sum_{  S \in \mathrm{Ind}(F)  }\prod_{f \in S}  \frac{q_{ \{ f\} } }{q_\emptyset }  \le  \prod_{e \in E} \left( 1 + 2 (\Delta-1) \frac{1}{2(\Delta-1) } + \sum_{i =2}^{\infty} ( \Delta-1)^{2\ell-2}  \frac{1}{ (2(\Delta-1))^{2\ell -2}}  \right)   <  4^{|E|} \enspace. 
\end{align*} 
Now Theorem~\ref{EntropyBound} and the fact that $|\Omega | = q^{|E|}$ conclude the proof.

\subsubsection{Proof of Theorem~\ref{LowWeight}}

For a vertex $v$ and a function $W_v$ let $N(v)$ denote the set of edges adjacent to $v$ and $\mathcal{A}_v$ denote the set of possible edge $q$-colorings of the edges in $N(v)$. For $\alpha \in \mathcal{A}_v$ let $E_v(\alpha)$ be the subset of $\Omega$
whose elements assign $\alpha$ to the edges in $N(v)$.  Moreover, consider the resampling probability distributions induced by the Moser-Tardos algorithm for $E_v(\alpha)$.   Observe that using~\eqref{clique_bound} we get (again, slightly abusing the notation):
\begin{eqnarray*}
\sum_{ S \in \mathrm{Ind}( \Gamma(E_v(\alpha) ) ) } \prod_{f \in S} \frac{ q_{ \{ f\} }}{q_{\emptyset} } & \le&  \prod_{e \in N(v) }  	\left( 	1 + \sum_{ P \ni e } x_{P,e} + \sum_{ C \ni e }  x_{C,e}   \right)	\\
																		& \le & \prod_{e \in N(v) } \left( 1  +   \left( \frac{c}{1+ \epsilon}   + \sum_{ \ell =2}^{\infty}  \frac{c}{(1+\epsilon)^{\ell}}   \right) \frac{\epsilon }{ \epsilon-c }   \right) \\
																		& < & 1.3^{\Delta}  \enspace.
\end{eqnarray*}
Now applying Theorem~\ref{main} we get:
\begin{eqnarray*}
 \ex[ W_v ]   &= &  \sum_{ \alpha \in \mathcal{A}_v } \Pr \left[  E_v( \alpha) \right]  W_v(\alpha)   \\
 		 & < &  1.3^{\Delta}  \sum_{ \alpha \in \mathcal{A}_v }  \mu(E_v(\alpha) ) W_v(\alpha) \\
		 & = & 1.3^{\Delta }  \cdot \ex_{ \phi \sim \mu } [W_v(\phi) ] \enspace.
 \end{eqnarray*}
  Linearity of expectation concludes the proof.

\end{document}